\def\BibTeX{{\rm B\kern-.05em{\sc i\kern-.025em b}\kern-.08em
    T\kern-.1667em\lower.7ex\hbox{E}\kern-.125emX}}
\newtheoremstyle{slanted}
{0em plus 0em minus 0em}
  {0em plus 0em minus 0em}
  {\em}
  {}
  {\bfseries}
  {.}
  { }
  {}
\theoremstyle{slanted}
\theoremstyle{slanted}
\theoremstyle{slanted}
\newtheorem{theorem}{Theorem}
\theoremstyle{slanted}
\theoremstyle{slanted}
\newtheorem{remark}{Remark}
\theoremstyle{slanted}
\newtheorem{proposition}{Proposition}
\theoremstyle{slanted}
\newtheorem{lemma}{Lemma}
\theoremstyle{slanted}
\title{A Unified Successive Approximation Framework for General Coupled Multiplicative and Fractional Optimization: HM-GM-AM-QM Transforms and Applications}
\author{\IEEEauthorblockN{Liangxin Qian, \emph{Member, IEEE}, Wenhan Yu, \emph{Member, IEEE}, Peiyuan Si, \emph{Member, IEEE}, Jun Zhao, \emph{Member, IEEE}, Ping Yang, \emph{Senior Member, IEEE}, Xuanyu Cao, \emph{Senior Member, IEEE}, and Kwok-Yan Lam, \emph{Senior Member, IEEE}}\vspace{-10pt}\thanks{Liangxin Qian, Wenhan Yu, Peiyuan Si, Jun Zhao, and Kwok-Yan Lam are all with Nanyang Technological University, Singapore. Ping Yang is with University of Electronic Science and Technology of China, China. Xuanyu Cao is with Washington State University, USA. Email: qian0080@e.ntu.edu.sg, wenhan002@e.ntu.edu.sg, peiyuan001@e.ntu.edu.sg, junzhao@ntu.edu.sg, yang.ping@uestc.edu.cn, xuanyu.cao@wsu.edu, kwokyan.lam@ntu.edu.sg.
}
}
\begin{document}
\maketitle

\begin{abstract}

Optimization problems in communication networks and information systems often contain coupled multiplicative or fractional terms, such as sum-of-products, sum-of-ratios, and logarithmic product-ratio structures. These problems are generally non-convex and difficult to solve, which motivates the development of tractable transformation and approximation techniques. In this paper, we propose an inequality-based transform framework for handling multiplicative and fractional terms involving an arbitrary number of coupled functions. The proposed framework is built upon the harmonic-mean, geometric-mean, arithmetic-mean, and quadratic-mean inequalities, and yields lower-bound and upper-bound surrogates for product-type terms. We derive the corresponding auxiliary-variable updates in closed form and show that the constructed surrogates are tight and first-order consistent at the current iterate. Based on these properties, we develop a class of successive approximation (SA) methods for sum-of-products/ratios minimization and maximization problems. When the transformed surrogate is convex for minimization or concave for maximization, the proposed method reduces to a standard successive convex approximation (SCA) method. When such convexity or concavity is not guaranteed, we further develop gradient-based SA variants and establish their sublinear convergence to an $\epsilon$-stationary point under standard smoothness and boundedness assumptions. We also discuss extensions to logarithmic product-ratio objectives and non-convex constraints. Numerical studies and application examples, including transmit-energy minimization, age-of-information minimization, semantic utility maximization, reliability-aware routing, cooperative edge caching, and product-loss learning, demonstrate the versatility and effectiveness of the proposed transform framework.
\end{abstract}


\begin{IEEEkeywords}
Communication networks, fractional programming, multiplicative programming, sum-of-products optimization, successive approximation, successive convex approximation.
\end{IEEEkeywords}

\section{Introduction}
Optimization is crucial in communication networks, where the primary goal is to enhance various performance metrics, e.g., throughput \cite{ju2013throughput,che2015spatial,yang2015throughput,shen2017fplinq,bhaskaran2010maximizing,razaviyayn2012linear,li2024maximizing}, efficiency \cite{ng2013wireless,miao2009cross,xu2013energy,wang2021resource,zhang2023towards}, and network utility \cite{palomar2006tutorial,huang2013priority,qian2024user,dehghan2019utility,papandriopoulos2009scale,li2021data}. A distinctive feature of these optimization problems is the frequent appearance of coupled multiplicative and fractional terms. These terms arise from the modeling of communication systems and protocols, reflecting the complex interactions between network parameters \cite{jeruchim2006simulation,pioro2004routing,rappaport2015wideband,lin2022centralized}. From the system level, the system performance is typically measured by functions containing several multiplicative or fractional terms from multiple links in a communication network \cite{shen2018fractional}. A summation operation is generally used for the overall system performance as well. This optimization family is called sum-of-products (ratios) optimization, which is generally non-convex and NP-hard \cite{freund2001solving,mei2018maximum,xing2012simple}. Existing studies either focused on optimizing a single ratio or product term \cite{zappone2015energy,isheden2012framework} or the highly complex algorithm to solve sum-of-ratios or products, e.g., the branch and bound algorithm \cite{benson2010branch,beale1979branch,konno2000branch}. 

To reduce the algorithm complexity, several transformation-based methods have been proposed. Jong \cite{jong2012efficient} proposes a transformation to solve the minimization of sum-of-ratios by analyzing its Karush–Kuhn–Tucker (KKT) conditions. Although a global optimum can be found by the transformation proposed in \cite{jong2012efficient}, it applies to the optimization only consisting of the sum-of-ratios term. In \cite{shen2018fractional}, Shen \textit{et al.} propose the quadratic transformation for decoupling the fractional terms in the maximization problem, which contains the sum-of-ratios and other general terms. Specifically, for maximizing the sum of $\frac{\text{concave}}{\text{convex}}$ terms, the quadratic transformation guarantees the transformed parametric convex optimization converges to a stationary point. However, the quadratic transformation of \cite{shen2018fractional} can't be used to solve the minimization problem containing the sum-of-ratios term. To fix it, \hbox{Zhao \textit{et al.} \cite{zhao2024}} propose another transformation and successfully convert the sum-of-ratios minimization to a parametric convex optimization. Nevertheless, the transform in \cite{zhao2024} mainly applies to the sum-of-ratios minimization terms in the $\frac{\text{convex}}{\text{concave}}$ form, and its intrinsic mathematical rationale is not explicitly explained.

\textbf{Motivation:}
Transformation-based methods have become important tools for reducing the complexity of non-convex fractional optimization problems. However, most existing transforms are designed for specific fractional structures. For example, the quadratic transform is effective for sum-of-ratios maximization, while the transform in \cite{zhao2024} targets sum-of-ratios minimization with a convex-over-concave form. These methods are powerful within their respective settings, but their applicability is limited when the objective contains more general multiplicative couplings or products involving more than two functions.

Such general multiplicative structures arise frequently in communication and information systems. For instance, end-to-end reliability can be expressed as a product of link success probabilities; cache-miss probability depends on the product of miss events across accessible caches; information freshness and semantic utility may involve coupled rate, delay, reliability, and semantic-quality factors; and resource-efficiency objectives often contain ratios that can be viewed as products with reciprocal terms. These examples suggest that ratio-type optimization is only one instance of a broader class of product-structured optimization problems. Therefore, a unified treatment of products and ratios is needed.

An additional difficulty is that useful transformations should be compatible with both minimization and maximization problems. For minimization, one typically needs a tight upper-bound surrogate to ensure descent, whereas for maximization, one needs a tight lower-bound surrogate to ensure ascent. Existing fractional programming transforms do not directly provide a general mechanism for constructing both types of surrogates for arbitrary multi-function products. This motivates the use of classical mean inequalities, which naturally relate harmonic, geometric, arithmetic, and quadratic means and thus provide a principled way to construct lower and upper bounds for product-type terms.

However, constructing a tight bound is only part of the challenge. The transformed surrogate is not always convex or concave, even if it is tight at the current iterate. Therefore, a rigorous algorithmic framework must distinguish between cases where the surrogate subproblem is convex or concave and cases where it is not. In the former case, the update can be interpreted as a standard successive convex approximation (SCA) step. In the latter case, the surrogate can still guide the optimization through gradient descent or ascent, leading to a gradient-based successive approximation (SA) method.

These observations motivate the development of an inequality-based transform framework for multiplicative and fractional optimization. Such a framework should be able to decouple products with an arbitrary number of coupled functions, provide suitable lower or upper surrogates depending on the problem direction, clarify when the resulting method is a standard SCA method, and offer gradient-based alternatives when convexity or concavity is not guaranteed.

\textbf{Our Contributions:}
The contributions of this paper are summarized as follows.
\begin{itemize}
    \item We propose transforms based on the harmonic, geometric, arithmetic, and quadratic means (HM-GM-AM-QM) for decoupling general multiplicative and fractional terms with an arbitrary number of coupled functions. The proposed framework constructs an HM lower bound and AM/QM upper bounds from the mean-inequality chain. We also derive the closed-form expression of the introduced auxiliary variables in (\ref{y_n_closedform}) by mathematical induction. These auxiliary variables ensure that the constructed surrogates are tight at the current iterate.

    \item We develop SA methods for sum-of-products minimization and maximization problems. For minimization problems, the AM/QM bounds provide upper-bound surrogates, while for maximization problems, the HM bound provides a lower-bound surrogate. We further clarify that the resulting method is a standard SCA method only when the transformed surrogate is convex or concave. When this condition does not hold, we propose gradient-based SA methods based on projected gradient descent or ascent.

    \item We analyze the convergence and complexity of the proposed methods. For the exact SA methods, stationary-point convergence follows under standard surrogate approximation conditions. For the gradient-based SA variants, we establish sublinear convergence to an $\epsilon$-stationary point under smoothness, boundedness, tightness, and first-order consistency assumptions. The complexity of both exact and gradient-based implementations is also discussed.

    \item We show that the proposed transform framework can be applied not only to objective functions but also to non-convex constraints. Product-type constraints can be conservatively approximated using the proposed upper- or lower-bound surrogates. We also discuss extensions to logarithmic product-ratio objectives by combining the monotonicity of the logarithm with the proposed product surrogates.

    \item We demonstrate the applicability of the proposed framework through numerical examples and applications in communication networks, information systems, and machine learning. These include transmit-energy minimization, age-of-information minimization, semantic information utility maximization, reliability-aware multi-hop information delivery, cooperative edge caching, multi-modal product-loss learning, and clean-adversarial product-loss learning.
\end{itemize}


\begin{figure}[t]
\centering
\footnotesize
\begin{tikzpicture}[
    >=Latex,
    every node/.style={align=left},
    sec2box/.style={
        draw,
        rounded corners=1.8mm,
        fill=green!8,
        text width=0.92\columnwidth,
        inner sep=4pt
    },
    sec3box/.style={
        draw,
        rounded corners=1.8mm,
        fill=orange!8,
        text width=0.92\columnwidth,
        inner sep=4pt
    },
    sec4box/.style={
        draw,
        rounded corners=1.8mm,
        fill=purple!7,
        text width=0.92\columnwidth,
        inner sep=4pt
    },
    sec5box/.style={
        draw,
        rounded corners=1.8mm,
        fill=red!6,
        text width=0.92\columnwidth,
        inner sep=4pt
    },
    line/.style={thick, -{Latex[length=2mm]}}
]

\node[sec2box] (sec2) {
\textbf{Section \ref{sec_proposed_transforms}. Proposed HM-GM-AM-QM Transforms}\\
\textbf{Subsections:}\\
\ref{sec_existing_method}: Existing fractional programming technique, supplementary rationale for intrinsic construction logic behind the transform, and its extension to multiplicative terms.
\\
\ref{sec_infactSCA}: A new finding of existing iterative methods based on fractional programming techniques.
\\
\ref{sec_multi_twofunctions}: Proposed HM-GM-AM-QM transforms for multiplicative term coupling of two functions.\\
\ref{sec_extension_arbitrary_functions}: Proposed HM-GM-AM-QM transforms for multiplicative terms with an arbitrary number of coupled functions.
};

\node[sec3box, below=6mm of sec2] (sec3) {
\textbf{Section \ref{sec_AMbound_SCA}. Proposed SA Methods to Solve Sum-of-Products Optimization by Using the HM, AM, and QM Surrogates}\\
\textbf{Subsections:}\\
\ref{sec_sop_min}: Sum-of-products minimization including cases where the transformed functions are convex and non-convex.
\\
\ref{sec_max_sca}: Sum-of-products maximization including cases where the transformed functions are convex and non-convex.
\\
\ref{sec_sum_sa}: Summary of SA methods based on HM-GM-AM-QM transforms.
\\
\ref{sec_extension_log}: Extension to logarithmic problems.
\\
\ref{sec_comp_existingmethods}: Comparison to existing methods.\\
\ref{sec_app_constraint}: Application in optimization constraints.\\
\ref{sec_simple_results}: Simple numerical experiments.
};

\node[sec4box, below=6mm of sec3] (sec4) {
\textbf{Section \ref{sec_application}. Applications and Numerical Studies}\\
\textbf{Subsections:}\\
\ref{sec_minimization_transmission_energy}: Minimization of transmission energy in wireless communications.\\
\ref{sec_aoi}: Age-of-information minimization in status update networks.\\
\ref{sec_semantic}: Semantic information utility maximization in wireless networks.\\
\ref{sec_reliability}: Reliability-aware multi-hop information delivery.\\
\ref{sec_cache}: Cooperative edge caching for information availability.\\
\ref{sec_ml_prodloss}: Multi-modal product-loss learning.\\
\ref{sec_ml_clean}: Clean-adversarial product-loss learning.\\
\ref{sec_summary_application}: Summary of representative application problems that can be handled by the proposed SA methods.
};

\draw[line] (sec2.south) -- (sec3.north);
\draw[line] (sec3.south) -- (sec4.north);

\end{tikzpicture}
\vspace{-2mm}
\caption{Organization of the main technical contents of this paper.}
\label{fig_paper_overview}
\end{figure}

To improve readability, Fig.~\ref{fig_paper_overview} summarizes the organization of the main technical contents of this manuscript. The remainder of this paper is structured as follows. We first present our main theoretical result, i.e., HM-GM-AM-QM transforms, in Section \ref{sec_proposed_transforms}. Specifically, in Section \ref{sec_rationale_zhao_transformation}, we use the GM-AM inequality to illustrate the core idea behind the transform in \cite{zhao2024}. Under the same inequality-based perspective, we then derive HM-GM-AM-QM bounds for multiplicative terms with an arbitrary number of coupled functions in Section \ref{sec_extension_arbitrary_functions}. Section \ref{sec_AMbound_SCA} develops the proposed SA methods for sum-of-products minimization and maximization, including exact and gradient-based variants, logarithmic extensions, comparisons with representative FP methods, and constraint transformations. Section \ref{sec_application} presents application examples and numerical studies. Section \ref{sec_limitation} discusses the limitations of the proposed transforms. Finally, Section \ref{secConclusion} concludes the paper.

\section{Proposed HM-GM-AM-QM Transforms}\label{sec_proposed_transforms}
Fractional and multiplicative programming are typical optimization problems where multiple fractional and multiplicative terms are generally coupled. We will review recent research on solving the sum-of-ratios minimization problem, with each ratio containing two coupled functions in \cite{zhao2024} and its extended form in solving the sum-of-products minimization problem. Since there are no rationales for the transformation proposed in \cite{zhao2024}, we will present a detailed rationale behind the transformation from the perspective of the GM-AM inequality, where GM and AM are short for ``geometric mean'' and ``arithmetic mean'', respectively. We then propose our novel decoupling techniques and bounds for solving the sum-of-products minimization and maximization problems with an arbitrary number of coupled functions in each product. We focus on the sum-of-products problem after Section \ref{sec_extension_multiplicative_terms} because the fractional terms are included in the multiplicative terms.

In the rest of the paper, we use $\mathbb{R}$ (resp., $\mathbb{R}_{+}$, and $\mathbb{R}_{++}$) to 
denote the set of \textit{real} (resp., \textit{non-negative} and \textit{strictly positive})  numbers. 

\subsection{Existing Technique in \cite{zhao2024}}\label{sec_existing_method}
 Let the optimization variable $\bm{x}$ belong to a compact convex set $\mathcal{X}\subseteq \mathbb{R}^M$. Define $\mathcal{N}:=\{1,2,\cdots,N\}$. With functions $A_n(\bm{x}): \mathbb{R}^M \rightarrow \mathbb{R}_{++}$, $B_n(\bm{x}): \mathbb{R}^M \rightarrow \mathbb{R}_{++}$, $\forall n \in \mathcal{N}$, and $G(\bm{x}): \mathbb{R}^M \rightarrow \mathbb{R}$, we begin by considering the following minimization problem:
\begin{subequations}\label{prob1}
\begin{align}
\min\limits_{\bm{x}}\quad &G(\bm{x})+\sum_{n=1}^N \frac{A_n(\bm{x})}{B_n(\bm{x})}\\
\text{s.t.} \quad & \bm{x} \in \mathcal{X}.
\end{align}
\end{subequations}
The above minimization problem contains a general function $G(\bm{x})$ and sum-of-ratios term $\sum_{n=1}^N \frac{A_n(\bm{x})}{B_n(\bm{x})}$, which makes the optimization challenging, where the sum-of-ratios term is often non-convex \cite{freund2001solving}. Although only the $\frac{\text{convex}}{\text{concave}}$ case is considered in \cite{zhao2024}, the transformation proposed in \cite{zhao2024} can also be used in decoupling general fractional terms, with each term containing two functions. However, a detailed rationale for constructing that transformation is not given in \cite{zhao2024}. Next, we will delve into the intrinsic construction logic of the transformation in \cite{zhao2024} from the perspective of the GM-AM inequality.

\subsubsection{Supplementary Rationale for Intrinsic Construction Logic Behind the Transform in \cite{zhao2024}}\label{sec_rationale_zhao_transformation}
Recall the GM-AM inequality:
\begin{equation}
    ab\leq\frac{a^2+b^2}{2},
\end{equation}
for $a,b \in \mathbb{R}_{+}$, where the equal sign holds if and only if (iff) $a=b$. It is easy to get that
\begin{equation}
    \frac{A_n(\bm{x})}{B_n(\bm{x})} \leq \frac{\left(A_n(\bm{x})\right)^2 + \frac{1}{\left(B_n(\bm{x})\right)^2}}{2},
\end{equation}
where iff $A_n(\bm{x}) = \frac{1}{B_n(\bm{x})}$, the equal sign holds. Obviously, $\frac{1}{2} \left(\left(A_n(\bm{x})\right)^2 + \frac{1}{\left(B_n(\bm{x})\right)^2}\right)$ is a global upper bound for $\frac{A_n(\bm{x})}{B_n(\bm{x})}$. If there exists one point $\bm{x}^\prime$ that satisfies $A_n(\bm{x}^\prime) = \frac{1}{B_n(\bm{x}^\prime)}$, these two curves are tangent to $\bm{x}^\prime$. The current form of the upper bound is not good enough for iterative optimization, because it is not necessarily tight at a chosen feasible point. To obtain a useful surrogate, the upper bound should include the information of the current iterate or the chosen feasible point. Specifically, we first introduce an auxiliary variable $\bm{y}:=[y_n]|_{n \in \mathcal{N}}$. Given an initial feasible point $\bm{x}^\prime$, the resulting upper bound with fixed $\bm{y}^\prime$ is tangent to the original function at $\bm{x}^\prime$. Then, by minimizing the surrogate with fixed $\bm{y}^\prime$, we obtain the next iterate $\bm{x}^{\prime\prime}$. Repeating this procedure yields a successive approximation process toward a stationary point. We illustrate how to achieve this by modifying $\frac{1}{2} \left(\left(A_n(\bm{x})\right)^2 + \frac{1}{\left(B_n(\bm{x})\right)^2}\right)$ in the following.

We define a new function $g_n (\bm{x}; y_n(\bm{x}^\prime))$
as the new surrogate, where $y_n(\bm{x}^\prime)$ is fixed based on the chosen feasible point $\bm{x}^\prime$. Here, $y_n(\bm{x}^\prime)$ is used to record the chosen feasible point $\bm{x}^\prime$ information. Specifically, for this feasible point $\bm{x}^\prime$, it is updated to make the surrogate $g_n (\bm{x}; y_n(\bm{x}^\prime))$ is tight at $\bm{x}^\prime$. If $y_n$ is not fixed, we denote the surrogate as $g_n (\bm{x}, y_n)$. Note that initially, $g_n(\cdot)$ was a function of the variables $\bm{x}$ and $y_n$. Only when we first optimize $y_n$ and obtain its optimal value, which includes the variable $\bm{x}$, do we then write $y_n$ as $y_n(\bm{x})$.

 Similar to the properties of the quadratic transform proposed in \cite{shen2018fractional}, we also list the construction principles (CPs) of the new surrogate as follows:
\begin{itemize}[leftmargin=*]
    \item CP1 \textit{(Equivalent Solution):} $\bm{x}^{\ast}$ is the optimal or local optimal solution of the problem $\min \frac{A_n(\bm{x})}{B_n(\bm{x})}$ if and only if $y_n^{\ast}(\bm{x}^*)$ minimizes $ g_n (\bm{x},y_n)$ with the same $\bm{x}^{\ast}$.
    \item CP2 \textit{(Equivalent Objective):} For fixed $\bm{x}$, if $y_n^{\ast}(\bm{x})$ is the optimal solution to $g_n (\bm{x}, y_n)$, then $g_n (\bm{x}; y_n^*(\bm{x})) = \frac{A_n(\bm{x})}{B_n(\bm{x})}$.
    \item CP3 \textit{(Convexity):} For fixed $\bm{x}$, $g_n (\bm{x}, y_n)$ is convex over $y_n$. Note that the CP3 is not compulsory as we can directly obtain the optimal value of $y_n$ by the equality conditions of the HM-GM-AM-QM inequality, which we will illustrate in detail in Section \ref{sec_extension_arbitrary_functions}.
\end{itemize}
CP1 and CP2 guarantee the equivalent solution, and objective function and value. CP3 guarantees that we can find the optimal value of $y_n$ with fixed $\bm{x}$ based on the surrogate function. There is also a decoupling property that the transformation should have, but we don't need it here due to the fact that GM-AM inequality itself ensures decoupling.
\begin{remark} \label{remark1}
In CP1, while it is assured that the solution $\bm{x}^{\ast}$ to the transformed optimization using the surrogate $g_n (\cdot)$ matches the solution to the original optimization, the optimality of these solutions differs significantly. Specifically, in the original optimization, $\bm{x}^{\ast}$ is intended to be the global optimum. However, in the transformed scenario, we may only achieve a stationary or KKT point under certain conditions, implying that the solutions might be suboptimal. This shift in optimality and its implications will be explored in depth in Section \ref{sec_AMbound_SCA}. 
\end{remark}
In the preceding paragraphs, we supplement the motivation and rationale not found in the theoretical parts in \cite{zhao2024}. In \cite{zhao2024}, it is proved that the quadratic transform in \cite{shen2018fractional} can't be used in the minimization problem. Thus, the authors in \cite{zhao2024} propose a bound for the minimization problem as
\begin{equation}
    g (\bm{x}; y_n(\bm{x})) = y_n(\bm{x}) \left(A_n(\bm{x})\right)^2  + \frac{1}{4y_n(\bm{x})\left(B_n(\bm{x})\right)^2 },
\end{equation}
where $y_n (\bm{x})= \frac{1}{2 A_n(\bm{x}) B_n(\bm{x})}$. It is obvious that this bound satisfies three construction principles, and we easily find that 
\begin{align}
    \frac{A_n(\bm{x})}{B_n(\bm{x})} 
    &= \sqrt{2 y_n \left(A_n(\bm{x})\right)^2 \cdot \frac{1}{2 y_n \left(B_n(\bm{x})\right)^2}} \nonumber \\
    &\leq \frac{1}{2} \left(2 y_n \left(A_n(\bm{x})\right)^2 + \frac{1}{2 y_n( \left(B_n(\bm{x})\right)^2}\right)\nonumber \\
    &= y_n \left(A_n(\bm{x})\right)^2  + \frac{1}{4y_n\left(B_n(\bm{x})\right)^2 },
\end{align}
where if and only if (iff) $y_n= \frac{1}{2 A_n(\bm{x}) B_n(\bm{x})}$, the equal sign holds. If we choose one feasible point $\bm{x}^\prime$ of the original problem (\ref{prob1}) and set $y_n (\bm{x}^\prime)= \frac{1}{2 A_n(\bm{x}^\prime) B_n(\bm{x}^\prime)}$, the upper bound function and the original function are tangent to the point $(\bm{x}^\prime, \frac{A_n(\bm{x}^\prime)}{B_n(\bm{x}^\prime)})$. From the GM-AM inequality perspective, we can derive the same transformation in \cite{zhao2024}. But we give a more detailed and clearer intrinsic construction logic behind that transformation.

Based on the above discussion, Problem (\ref{prob1}) can be transformed to the following Problem (\ref{prob1_transformed}):
\begin{subequations}\label{prob1_transformed}
\begin{align}
\min\limits_{\bm{x},\bm{y}}\quad &G(\bm{x})+\sum_{n=1}^N \left(y_n \left(A_n(\bm{x})\right)^2  + \frac{1}{4y_n\left(B_n(\bm{x})\right)^2}\right) \\
\text{s.t.} \quad & \bm{x} \in \mathcal{X},
\end{align}
\end{subequations}
which is convex over $\bm{y}$ with fixed $\bm{x}$. Obviously, the current upper bound satisfies CP1, CP2 and CP3, and it is better than $\frac{1}{2}\left(\left(A_n(\bm{x})\right)^2 + \frac{1}{\left(B_n(\bm{x})\right)^2}\right)$.

\subsubsection{Extension to Multiplicative Terms}\label{sec_extension_multiplicative_terms}
For a new minimization problem with sum-of-products and each product containing two functions coupled together:
\begin{subequations}\label{prob2}
\begin{align}
\min\limits_{\bm{x}}\quad &G(\bm{x})+\sum_{n=1}^N A_n(\bm{x})B_n(\bm{x})\\
\text{s.t.} \quad & \bm{x} \in \mathcal{X}. 
\end{align}
\end{subequations}
In this case, the upper bound surrogate would be
\begin{align}
    g_n(\bm{x}; y_n(\bm{x})) = \left(A_n(\bm{x})\right)^2 y_n(\bm{x}) + \frac{\left(B_n(\bm{x})\right)^2}{4 y_n(\bm{x})},
\end{align}
where $y_n(\bm{x}) = \frac{B_n(\bm{x})}{2 A_n(\bm{x})}$. Thus, Problem (\ref{prob2}) can be transformed to the following Problem (\ref{prob2_transformed}):
\begin{subequations}\label{prob2_transformed}
\begin{align}
\min\limits_{\bm{x},\bm{y}}\quad &G(\bm{x})+\sum_{n=1}^N \left(A_n(\bm{x})\right)^2 y_n + \frac{\left(B_n(\bm{x})\right)^2}{4 y_n} \\
\text{s.t.} \quad & \bm{x} \in \mathcal{X},
\end{align}
\end{subequations}
which is also convex over $\bm{y}$ with fixed $\bm{x}$.

\subsection{A New Finding of Proposed Iterative Approaches in \cite{shen2018fractional}~\cite{zhao2024}}\label{sec_infactSCA}
The iterative approaches based on the proposed transformations in \cite{shen2018fractional}~\cite{zhao2024} are actually SCA methods. 
These iterative optimization methods are block coordinate descent (BCD) methods at first glance. But, the proposed transformations in \cite{shen2018fractional}~\cite{zhao2024} are actually upper bounds and lower bounds of the original multiplicative coupled terms. The current-iterate information is all included in $\bm{y}$. The iterative optimization of $\bm{x}$ and $\bm{y}$ is, in fact, the SCA iteration by using bound functions to find the optimum.

In the following sections, we only focus on multiplicative terms since fractional terms can be included in multiplicative terms. We further analyze the product of two terms by using the HM-GM-AM-QM inequality to de rive the lower bound and two upper bounds of it.

\subsection{Extension to the Multiplicative Term Coupling of Two Functions Based on the HM-GM-AM-QM Inequality}\label{sec_multi_twofunctions}
With HM, GM, AM, and QM standing for ``harmonic mean'', ``geometric mean'',  ``arithmetic mean'', and ``quadratic mean'', respectively,
the well-known HM-GM-AM-QM inequality is given by
\begin{align}
    \frac{2}{\frac{1}{a}+\frac{1}{b}}\leq\sqrt{ab}\leq\frac{a+b}{2}\leq\sqrt{\frac{a^2+b^2}{2}},
\end{align}
where $a$, $b \in \mathbb{R}_{++}$ and if and only if (iff) $a=b$, the equality holds. Similarly, by modifying the forms of $a$ and $b$ to $a y$ and $\frac{b}{y}$, we obtain that
\begin{align}
     \frac{2}{\frac{1}{ay}+\frac{y}{b}}\leq\sqrt{ay \cdot \frac{b}{y}}\leq\frac{ay+\frac{b}{y}}{2}\leq\sqrt{\frac{a^2 y^2+\frac{b^2}{y^2}}{2}},
\end{align}
iff $ay=\frac{b}{y}$, i.e., $y=\sqrt{\frac{b}{a}}$, the equal sign can be achieved. Therefore, for the multiplicative term $A_n(\bm{x}) B_n(\bm{x})$, we can know that
\begin{align}
    \frac{2}{\frac{1}{\left(A_n(\bm{x})\right)^2 y_n}+\frac{y_n}{\left(B_n(\bm{x})\right)^2}}&\leq\sqrt{\left(A_n(\bm{x})\right)^2 y_n \cdot \frac{\left(B_n(\bm{x})\right)^2}{y_n}}\nonumber \\  
    &\leq\frac{\left(A_n(\bm{x})\right)^2 y_n+\frac{\left(B_n(\bm{x})\right)^2}{y_n}}{2}\nonumber \\  
    &\leq\sqrt{\frac{\left(A_n(\bm{x})\right)^4 y_n^2+\frac{\left(B_n(\bm{x})\right)^4}{y_n^2}}{2}},\label{eq_inequality_two_functions}
\end{align}
where iff $\left(A_n(\bm{x})\right)^2 y_n = \frac{\left(B_n(\bm{x})\right)^2}{y_n}$, i.e., $y_n = \frac{B_n(\bm{x})}{A_n(\bm{x})}$, the equality holds. Here, we get one lower bound and two upper bounds of $A_n(\bm{x}) B_n(\bm{x})$. Note that the AM upper bound is convex over $y_n$, satisfying three construction principles, i.e., CP1, CP2, and CP3. Functions $\frac{2}{\frac{1}{\left(A_n(\bm{x})\right)^2 y_n}+\frac{y_n}{\left(B_n(\bm{x})\right)^2}}$ and $\sqrt{\frac{\left(A_n(\bm{x})\right)^4 y_n^2+\frac{\left(B_n(\bm{x})\right)^4}{y_n^2}}{2}}$ don't satisfy CP3, i.e., they may be non-convex over $y_n$. But it doesn't matter since we can also obtain the optimal value of $y_n$ based on the equality condition. We can still use the HM and QM bounds as the successive surrogates of the original problems, and solve them by gradient descent or other methods. We briefly discuss cases where the proposed HM and QM bounds have good properties.

\subsubsection{Cases Where the HM Bound has good properties}
For some cases, $\frac{2}{\frac{1}{\left(A_n(\bm{x})\right)^2 y_n}+\frac{y_n}{\left(B_n(\bm{x})\right)^2}}$ \vspace{1pt} may show some monotonicity properties. If it is not difficult to obtain the minimum of the lower bound in the given intervals, we can use this HM lower bound with the AM upper bound, i.e., $\frac{\left(A_n(\bm{x})\right)^2 y_n+\frac{\left(B_n(\bm{x})\right)^2}{y_n}}{2}$, to obtain the global minimum of the original optimization with the branch and bound method.

\subsubsection{Cases Where the QM Bound has good properties}
From 
\begin{align}
    \sqrt{\left(A_n(\bm{x})\right)^2 y_n \cdot \frac{\left(B_n(\bm{x})\right)^2}{y_n}}\leq \sqrt{\frac{\left(A_n(\bm{x})\right)^4 y_n^2+\frac{\left(B_n(\bm{x})\right)^4}{y_n^2}}{2}}, 
\end{align}
it is known that
\begin{align}
    \left(A_n(\bm{x})\right)^2 y_n \cdot \frac{\left(B_n(\bm{x})\right)^2}{y_n} \leq 
    \frac{\left(A_n(\bm{x})\right)^4 y_n^2+\frac{\left(B_n(\bm{x})\right)^4}{y_n^2}}{2}.
\end{align}
It shows that when we encounter the coupled terms $\left(A_n(\bm{x})\right)^2\left(B_n(\bm{x})\right)^2$, the function 
\begin{equation}
    \frac{\left(A_n(\bm{x})\right)^4 y_n^2+\frac{\left(B_n(\bm{x})\right)^4}{y_n^2}}{2} \nonumber
\end{equation}
is convex over $y_n$ and would be a useful bound to decouple the multiplicative term for the minimization problem.

\subsection{Extension to Multiplicative Terms with an Arbitrary Number of Coupled Functions Based on the HM-GM-AM-QM Inequality}
\label{sec_extension_arbitrary_functions}

We now present the main transform result for multiplicative terms with an arbitrary number of coupled functions. Let \(\mathcal{K}:=\{1,2,\ldots,K\}\). For any \(a_k\in\mathbb{R}_{++}\), the HM-GM-AM-QM inequality gives
\begin{equation}
    \frac{K}{\sum_{k=1}^{K}\frac{1}{a_k}}
    \leq
    \left(\prod_{k=1}^{K}a_k\right)^{\frac{1}{K}}
    \leq
    \frac{1}{K}\sum_{k=1}^{K}a_k
    \leq
    \sqrt{\frac{1}{K}\sum_{k=1}^{K}a_k^2},
    \label{eq:hmgmamqm_basic}
\end{equation}
where equality holds if and only if \(a_1=a_2=\cdots=a_K\).

The key idea is to apply \eqref{eq:hmgmamqm_basic} to properly scaled positive terms whose geometric mean is exactly the original product term. Consider the product $\prod_{k=1}^{K} f_n^{(k)}(\bm{x})$,
where \(f_n^{(k)}(\bm{x}):\mathbb{R}^{M}\rightarrow\mathbb{R}_{++}\). For any
$
\bm{y}_n=
\left[
y_n^{(1)},y_n^{(2)},\ldots,y_n^{(K-1)}
\right]^{\intercal}
\in\mathbb{R}_{++}^{K-1}$,
we construct the following \(K\) scaled terms:
\begin{align}
    \left(f_n^{(1)}(\bm{x})\right)^K
    \prod_{j=1}^{K-1}y_n^{(j)}, 
    \left(f_n^{(k)}(\bm{x})\right)^K
    \frac{\prod_{j=k}^{K-1}y_n^{(j)}}
    {\left(y_n^{(k-1)}\right)^{k-1}},
    \frac{\left(f_n^{(K)}(\bm{x})\right)^K}
    {\left(y_n^{(K-1)}\right)^{K-1}},
    \label{eq:scaled_terms_no_phi}
\end{align}
where $k=2,\ldots,K-1$.
The product of all \(K\) scaled terms in \eqref{eq:scaled_terms_no_phi} is
$
\left(\prod_{k=1}^{K} f_n^{(k)}(\bm{x})\right)^K
$
. Therefore, the geometric mean of the scaled terms is exactly
$
\prod_{k=1}^{K} f_n^{(k)}(\bm{x})$.

\textbf{HM-GM-AM-QM Transforms:} We consider optimization problems where the term \newline$\sum_{n=1}^N \left(\prod_{k=1}^K f^{(k)}_n(\bm{x})\right)$ is included in the objective function. 
For any $f_n^{(k)}(\bm{x})\in\mathbb{R}^{M}\rightarrow\mathbb{R}_{++}$, and \(\bm{y}_n\in\mathbb{R}_{++}^{K-1}\), the following inequalities hold:
\begin{align}
    &F_n^{ HM}(\bm{x},\bm{y}_n)
    \leq
    \prod_{k=1}^{K} f_n^{(k)}(\bm{x})
    \leq
    F_n^{ AM}(\bm{x},\bm{y}_n)
    \leq
    F_n^{ QM}(\bm{x},\bm{y}_n),
    \label{bounds_K}
\end{align}
where
\begin{align}
    F_n^{ HM}(\bm{x},\bm{y}_n)
    =
    \frac{K}{
    \frac{1}{
    \left(f_n^{(1)}(\bm{x})\right)^K
    \prod_{j=1}^{K-1}y_n^{(j)}
    }
    +
    \sum_{k=2}^{K-1}
    \frac{1}{
    \left(f_n^{(k)}(\bm{x})\right)^K
    \frac{\prod_{j=k}^{K-1}y_n^{(j)}}
    {\left(y_n^{(k-1)}\right)^{k-1}}
    }
    +
    \frac{
    \left(y_n^{(K-1)}\right)^{K-1}
    }{
    \left(f_n^{(K)}(\bm{x})\right)^K
    }
    },
    \label{eq:HM_bound_no_phi}
\end{align}
\begin{align}
    F_n^{ AM}(\bm{x},\bm{y}_n)
    =
    \frac{1}{K}
    \bigg[
    &
    \left(f_n^{(1)}(\bm{x})\right)^K
    \prod_{j=1}^{K-1}y_n^{(j)}
    +
    \sum_{k=2}^{K-1}
    \left(f_n^{(k)}(\bm{x})\right)^K
    \frac{\prod_{j=k}^{K-1}y_n^{(j)}}
    {\left(y_n^{(k-1)}\right)^{k-1}}
    +
    \frac{\left(f_n^{(K)}(\bm{x})\right)^K}
    {\left(y_n^{(K-1)}\right)^{K-1}}
    \bigg],
    \label{eq:AM_bound_no_phi}
\end{align}
and
\begin{align}
    F_n^{ QM}(\bm{x},\bm{y}_n)
    =
    \Bigg\{
    \frac{1}{K}
    \Bigg[
    &
    \left(
    \left(f_n^{(1)}(\bm{x})\right)^K
    \prod_{j=1}^{K-1}y_n^{(j)}
    \right)^2
    +
    \sum_{k=2}^{K-1}
    \left(
    \left(f_n^{(k)}(\bm{x})\right)^K
    \frac{\prod_{j=k}^{K-1}y_n^{(j)}}
    {\left(y_n^{(k-1)}\right)^{k-1}}
    \right)^2
    \nonumber\\
    &+
    \left(
    \frac{\left(f_n^{(K)}(\bm{x})\right)^K}
    {\left(y_n^{(K-1)}\right)^{K-1}}
    \right)^2
    \Bigg]
    \Bigg\}^{\frac{1}{2}}.
    \label{eq:QM_bound_no_phi}
\end{align}
Moreover, equality in \eqref{bounds_K} holds if and only if
\begin{align}
    &
    \left(f_n^{(1)}(\bm{x})\right)^K
    \prod_{j=1}^{K-1}y_n^{(j)}
    =
    \left(f_n^{(2)}(\bm{x})\right)^K
    \frac{\prod_{j=2}^{K-1}y_n^{(j)}}{y_n^{(1)}}
    =
    \left(f_n^{(3)}(\bm{x})\right)^K
    \frac{\prod_{j=3}^{K-1}y_n^{(j)}}{\left(y_n^{(2)}\right)^2}
    \nonumber\\
    &=
    \cdots
    =
    \left(f_n^{(K-1)}(\bm{x})\right)^K
    \frac{y_n^{(K-1)}}{\left(y_n^{(K-2)}\right)^{K-2}}
    =
    \frac{\left(f_n^{(K)}(\bm{x})\right)^K}
    {\left(y_n^{(K-1)}\right)^{K-1}}.
    \label{eq:equality_condition_no_phi}
\end{align}
For a given point \(\bm{x}^\prime\), the equality condition is satisfied by the recursive update
\begin{align}
    y_n^{(1)}(\bm{x}^\prime)
    &=
    \left(
    \frac{
    f_n^{(2)}(\bm{x}^\prime)
    }{
    f_n^{(1)}(\bm{x}^\prime)
    }
    \right)^{\frac{K}{2}},
    y_n^{(k-1)}(\bm{x}^\prime)
    &=
    \left[
    \left(y_n^{(k-2)}(\bm{x}^\prime)\right)^{k-2}
    \left(
    \frac{
    f_n^{(k)}(\bm{x}^\prime)
    }{
    f_n^{(k-1)}(\bm{x}^\prime)
    }
    \right)^K
    \right]^{\frac{1}{k}},
    \quad k=3,\ldots,K.
    \label{eq:y_update_k_no_phi}
\end{align}
Note that we also denote HM, AM, and QM surrogates as $F_n^{HM}(\bm{x},\bm{y}_n)$, $F_n^{AM}(\bm{x},\bm{y}_n)$, and $F_n^{QM}(\bm{x},\bm{y}_n)$, respectively, with unfixed $\bm{y}_n$. For fixed $\bm{y}_n(\bm{x}^\prime)$ with the chosen feasible point $\bm{x}^\prime$, we denote them as $F_n^{HM}(\bm{x};\bm{y}_n(\bm{x}^\prime))$, $F_n^{AM}(\bm{x};\bm{y}_n(\bm{x}^\prime))$, and $F_n^{QM}(\bm{x};\bm{y}_n(\bm{x}^\prime))$, respectively.

Here we have presented a unified way to construct lower and upper surrogates for multiplicative terms with an arbitrary number of coupled functions. The HM bound gives a lower-bound surrogate, while the AM and QM bounds give upper-bound surrogates. These bounds are useful for optimization because they are not arbitrary approximations: after updating the auxiliary variables according to the equality condition, the surrogate is tight at the current iterate and preserves the local first-order information of the original multiplicative term. For minimization problems, a tight upper-bound surrogate can be minimized to obtain a descent update, while for maximization problems, a tight lower-bound surrogate can be maximized to obtain an ascent update. Therefore, the HM transform is naturally suited to maximization problems, whereas the AM and QM transforms are suited to minimization problems, provided that the resulting surrogate subproblem is tractable or is solved approximately by a gradient-based successive approximation step. Choose the feasible point $\bm{x}^\prime$, and then, based on Equation \eqref{eq:y_update_k_no_phi}, the constructed bounds are tight at \(\bm{x}^\prime\), i.e.,
\begin{align}
    F_n^{HM}(\bm{x}^\prime;\bm{y}_n(\bm{x}^\prime))
    =
    \prod_{k=1}^{K} f_n^{(k)}(\bm{x}^\prime)
    =
    F_n^{\rm AM}(\bm{x}^\prime;\bm{y}_n(\bm{x}^\prime))
    =
    F_n^{\rm QM}(\bm{x}^\prime;\bm{y}_n(\bm{x}^\prime)).
\end{align}
This tightness property is the basis of the successive approximation methods developed later.

\begin{remark}
Fractional terms can also be handled by this framework. For example, a ratio \(A(\bm{x})/B(\bm{x})\) can be written as \(A(\bm{x})B(\bm{x})^{-1}\). More generally, product-ratio terms can be handled by treating numerator functions and reciprocal denominator functions as positive coupled components.
\end{remark}

\begin{remark}
The proposed bounds do not automatically guarantee convexity or concavity of the transformed surrogate. Therefore, the resulting algorithm is a standard successive convex approximation method only when the transformed upper-bound surrogate is convex for minimization, or the transformed lower-bound surrogate is concave for maximization. Otherwise, the same tight surrogate can be used in a gradient-based successive approximation method.
\end{remark}

\begin{remark}[Relationship to geometric programming]
The proposed transforms are related to product-structured optimization, but they are different from classical geometric programming (GP). GP applies to problems whose objectives and constraints are composed of monomials and posynomials, and it can be converted equivalently into a convex problem through a logarithmic change of variables. Therefore, when the considered problem satisfies the monomial/posynomial structure required by GP, GP provides an exact and efficient solution method. In contrast, this paper considers general positive component functions, such as \(A_n(\bm{x})\), \(B_n(\bm{x})\), and \(f_n^{(k)}(\bm{x})\), which may represent rate, delay, reliability, utility, queueing, or learning-loss functions and are not necessarily monomials or posynomials. In these cases, the standard GP transformation is not directly applicable. The proposed HM-GM-AM-QM framework instead constructs tight inequality-induced surrogates and solves the resulting problem through exact or gradient-based successive approximation. Thus, GP can be viewed as an exact convex reformulation for a special algebraic class, while the proposed method targets broader product-structured non-convex optimization problems.
\end{remark}

With the equality condition \eqref{eq:y_update_k_no_phi}, we can obtain the closed-form expression of $y^{(k)}_n$ as
    \begin{equation}
         y^{(1)}_n = \sqrt{\frac{f^{(2)}_n(\bm{x})}{f^{(1)}_n(\bm{x})}},\quad
        y^{(2)}_n = (y^{(1)}_n)^{\frac{1}{3}}\cdot \left(\frac{f^{(3)}_n(\bm{x})}{f^{(2)}_n(\bm{x})}\right)^{\frac{1}{3}},\nonumber
    \end{equation}
    \begin{equation}
        y^{(k-1)}_n=(y_n^{(1)})^{\prod_{i=1}^{k-2} \frac{i}{i+2}} \cdot \left(\prod_{i=2}^{k-2} \left(\frac{f_n^{(i+1)}(\bm{x})}{f_n^{(i)}(\bm{x})}\right)^{\frac{1}{i+1}\cdot \prod_{j=i+2}^k \frac{j-2}{j}}\right) \cdot \left(\frac{f_n^{(k)}(\bm{x})}{f_n^{(k-1)}(\bm{x})}\right)^{\frac{1}{k}}, \forall k\in \{4,\cdots,K\}.\label{y_n_closedform}
    \end{equation}
Based on the above Equation (\ref{y_n_closedform}), we can get the closed form of the optimal $\bm{y}_{n,\ast}$. Note that this $\bm{y}_{n,\ast}$ is the global optimum of the HM/AM/QM bound with fixed $\bm{x}$. The auxiliary variables do not introduce approximation error when they are updated according to the equality condition (\ref{eq:y_update_k_no_phi}). Instead, they encode the current-iterate information needed to construct a tight upper-bound surrogate. This property is very beneficial for us to conduct the SA procedure to find a good solution to the sum optimization $\min/\max\limits_{\bm{x}} G(\bm{x}) + \sum_{n=1}^N \left(\prod_{k=1}^K f^{(k)}_n(\bm{x})\right)$.


\section{Successive Approximation Techniques to Solve Sum-of-Products Optimization by Using the HM, AM, and QM Surrogates}\label{sec_AMbound_SCA}

In this section, we present how the proposed surrogates can be used in successive approximation (SA) techniques to find a stationary point of the sum-of-products minimization and maximization problems with arbitrary multiplicative terms.

\subsection{Sum-of-Products Minimization}\label{sec_sop_min}
\textbf{Problem Statement}:
We consider $f^{(k)}_n(\bm{x}): \mathbb{R}^M \rightarrow \mathbb{R}_{++}$, where $k \in \mathcal{K}$, $\bm{x}$ is within a compact convex set $\mathcal{X}$, and a general coupled multiplicative term $\prod_{k=1}^K f^{(k)}_n(\bm{x})$ under the sum-of-products minimization problem:
\begin{subequations}\label{prob3}
\begin{align}
\min\limits_{\bm{x}} &\quad J(\bm{x}) + \sum_{n=1}^N \left(\prod_{k=1}^K f^{(k)}_n(\bm{x})\right)
\\
\text{s.t.} &\quad  \bm{x} \in \mathcal{X},
\end{align}
\end{subequations}
where $J(\bm{x}): \mathbb{R}^M \rightarrow \mathbb{R}$, and $J(\bm{x})$ is convex. Denote the objective function as $\Phi(\bm{x})$. This sum of multiplicative optimization is generally non-convex and NP-complete. 
\begin{remark}
Note that $J(\bm{x})$ need not be convex for the transformation itself to remain valid. When possible, we may apply additional techniques to reformulate a non-convex $J(\bm{x})$ as a convex term, and then optimize the resulting transformed problem. To keep the discussion focused and avoid unnecessary complications, we assume in what follows that $J(\bm{x})$ is convex.
\end{remark}
Based on Equation (\ref{bounds_K}), the AM upper bound of the sum of multiplicative terms in Problem (\ref{prob3}) is 
\begin{align}
    \frac{\left(f^{(1)}_n(\bm{x})\right)^K \cdot \prod_{k=1}^{K-1} y^{(k)}_n + \frac{\left(f^{(K)}_n(\bm{x})\right)^K}{\left(y^{(K-1)}_n\right)^{K-1}} +\sum_{k=2}^{K-1} \left(f^{(k)}_n(\bm{x})\right)^K \frac{\prod_{i=k}^{K-1} y^{(i)}_n}{\left(y^{(k-1)}_n\right)^{k-1}}}{K}.\nonumber 
\end{align}
Based on the AM upper bound $F_n^{AM}(\bm{x},\bm{y}_n)$, we can convert the original optimization (\ref{prob3}) to the following optimization:
\begin{subequations}\label{prob4}
\begin{align}
\min\limits_{\bm{x},\bm{y}} &\quad J(\bm{x}) + \sum_{n=1}^N F_n^{AM}(\bm{x},\bm{y}_n)
\\
\text{s.t.} &\quad  \bm{x} \in \mathcal{X}.
\end{align}
\end{subequations}
\begin{remark}
When the AM upper bound is convex, the proposed SA method reduces to an SCA method. The QM upper bound can also be used to transform problem (\ref{prob3}). Similar to the AM upper bound, it provides a tight upper-bound surrogate at every iterate. Therefore, solving the QM surrogate exactly yields a descent update for the original minimization problem, and solving it approximately leads to the corresponding gradient-based SA method. If the QM surrogate is convex, the resulting update is a standard SCA step. Since the AM-GM and QM-GM transforms follow the same surrogate approximation principle, we use the AM-GM transform as the representative example here.
\end{remark}

Before introducing the proposed SA method, we first present some useful properties of the transformed objective function.

\begin{proposition}\label{prop_property}
Let
\begin{equation}
    \Phi(\bm{x})
    =
    J(\bm{x})
    +
    \sum_{n=1}^{N}
    \prod_{k=1}^{K} f_n^{(k)}(\bm{x}),
\end{equation}
and define the AM-based surrogate anchored at a feasible point $\bm{x}^\prime$ as
\begin{equation}
    Q(\bm{x};\bm{x}^\prime)
    =
    J(\bm{x})
    +
    \sum_{n=1}^{N}
    F_n^{\rm AM}(\bm{x};\bm{y}_n(\bm{x}^\prime)),
\end{equation}
where $\bm{y}_n(\bm{x}^\prime)$ is obtained by Equation (\ref{y_n_closedform}) of the AM bound at $\bm{x}^\prime$. Suppose that all involved functions are positive and differentiable over the feasible region. Then, for any feasible points $\bm{x}$ and $\bm{x}^\prime$, we have
\begin{align}
    &\Phi(\bm{x}) \leq Q(\bm{x};\bm{x}^\prime),\\
    &\Phi(\bm{x}^\prime) = Q(\bm{x}^\prime;\bm{x}^\prime),\\
    &\nabla_{\bm{x}}\Phi(\bm{x}^\prime)
    =
    \nabla_{\bm{x}}Q(\bm{x}^\prime;\bm{x}^\prime).
\end{align}
\end{proposition}

\begin{proof}
    Please refer to Appendix \ref{append_proof_prop_property}.
\end{proof}
\begin{remark}
    It's obvious that Proposition \ref{prop_property} also holds for HM and QM bounds.
\end{remark}
Next, we will analyze the cases where the transformed problems are convex and non-convex, respectively. We then present that if the gradient-based SA method is utilized, it is applicable to both of these situations.

\subsubsection{Cases where the Transformed Function is Convex}\label{sec_AM_convex}

We next discuss when the AM-based surrogate becomes convex with respect to the optimization variable. This condition is not required for constructing the AM upper bound itself. But it identifies the cases where the proposed SA update reduces to a standard SCA update.

For fixed auxiliary variables $\bm{y}_n$, the AM upper bound can be written as a nonnegative weighted sum of terms in the form
$\left(f_n^{(k)}(\bm{x})\right)^K$.
Therefore, a sufficient condition for $F_n^{\rm AM}(\bm{x},\bm{y}_n)$ to be convex is that
$\left(f_n^{(k)}(\bm{x})\right)^K$
is convex for all $n$ and $k$. This condition is satisfied, for example, when $f_n^{(k)}(\bm{x})$ is nonnegative and convex over the feasible set, since $t^K$ is convex and nondecreasing over $t\ge 0$, and the composition of a convex nondecreasing function with a convex function is convex. Thus, this condition is not introduced merely for analytical convenience. It characterizes a practically relevant class of problems for which the transformed AM surrogate is convex and can be solved by an SCA step.

Several application examples satisfying this condition are summarized in Table~\ref{tab:convex_am_comm_examples}. In these cases, the AM upper bound gives a convex surrogate for minimization. If this condition is not satisfied, the AM bound remains a valid tight upper-bound surrogate, but the resulting subproblem may not be convex. In that case, we use the gradient-based SA method described in Section~\ref{sec_am_nonconvex}.

\begin{table*}[t]
\centering
\footnotesize
\setlength{\tabcolsep}{3.5pt}
\renewcommand{\arraystretch}{1.18}
\caption{Representative optimization terms where the AM surrogate is convex.}
\label{tab:convex_am_comm_examples}
\begin{tabular}{p{0.20\textwidth}|p{0.25\textwidth}|p{0.25\textwidth}|p{0.25\textwidth}}
\toprule
Application & Product term & Component function & Why $\left(f_n^{(k)}(x)\right)^K$ is convex \\
\midrule

Cooperative edge caching &
Cache-miss probability:\newline
$\prod_{m\in\mathcal{A}_u}(1-q_{m,f})$ &
$f_{u,f}^{(m)}(q)=1-q_{m,f}$, with\newline
$0\le q_{m,f}\le 1-\epsilon_q$ &
Positive affine function; its $K$-th power is convex on the nonnegative domain. \\
\hline

Multi-connectivity outage minimization &
All-link failure probability:\newline
$\prod_{m\in\mathcal{A}_u} e^{-a_{u,m}p_{u,m}}$ &
$f_{u,m}(p)=e^{-a_{u,m}p_{u,m}}$,
$a_{u,m}>0$ &
$\left(f_{u,m}(p)\right)^K=e^{-K a_{u,m}p_{u,m}}$ is exponential and convex. \\
\hline

Multi-hop packet-loss minimization &
End-to-end loss risk:\newline
$\prod_{\ell\in\mathcal{L}_r} e^{-a_{\ell}p_{\ell}}$ &
$f_{\ell}(p)=e^{-a_{\ell}p_{\ell}}$,
$p_{\ell}\ge 0$ &
The power of an exponential-affine function remains convex. \\
\hline

Distributed sensing miss-detection minimization &
Joint miss-detection probability:\newline
$\prod_{s\in\mathcal{S}} e^{-b_s \tau_s}$ &
$f_s(\tau)=e^{-b_s\tau_s}$,
$\tau_s\ge 0$ &
$\left(e^{-b_s\tau_s}\right)^K=e^{-K b_s\tau_s}$ is convex. \\
\hline

Freshness-risk minimization &
Age-risk product:\newline
$\Delta_u(x)e^{-a_up_u}$ &
$f_u^{(1)}(x)=\Delta_u(x)$,
$f_u^{(2)}(p)=e^{-a_up_u}$ &
If $\Delta_u(x)$ is positive affine or convex, then $\Delta_u(x)^K$ is convex; the exponential term is also convex. \\
\hline

Service-margin risk minimization &
Queueing-risk product:\newline
$\prod_{j\in\mathcal{J}_u}\frac{1}{\mu_j-r_j}$ &
$f_j(r)=\frac{1}{\mu_j-r_j}$,
$0\le r_j<\mu_j$ &
$\left(\mu_j-r_j\right)^{-K}$ is convex over $r_j<\mu_j$. \\
\hline

Energy-delay product minimization &
Coupled cost:\newline
$E_u(x)D_u(x)$ &
$f_u^{(1)}(x)=E_u(x)$,
$f_u^{(2)}(x)=D_u(x)$ &
If $E_u(x)$ and $D_u(x)$ are positive affine or convex cost models, their powers are convex. \\
\hline

Computation-offloading failure minimization &
Joint failure risk:\newline
$e^{-a_up_u}e^{-b_uc_u}$ &
$f_u^{(1)}(p)=e^{-a_up_u}$,
$f_u^{(2)}(c)=e^{-b_uc_u}$ &
Both powered terms remain exponential-affine and convex. \\

\bottomrule
\end{tabular}
\end{table*}

\begin{remark}
    The QM upper bound can be analyzed similarly. If the resulting QM surrogate is convex, the corresponding SA update is also an SCA update. Otherwise, the QM surrogate can still be used within the proposed gradient-based SA framework.
\end{remark}


\textbf{Proposed SCA Method to Solve Problem (\ref{prob4}):}
Based on the AM upper bound $F_n^{AM}(\bm{x},\bm{y}_n)$, we can convert the original problem (\ref{prob3}) to the problem (\ref{prob4}). If we fix $\bm{y}$, this transformed optimization is a convex optimization over $\bm{x}$. Thus, we can use a novel SCA method to solve it and find a stationary point solution to the original optimization. The SCA algorithm is detailed in Algorithm \ref{algo:SCA}. Since the QM upper bound $F_n^{QM}(\bm{x},\bm{y}_n)$ can also be used to solve the optimization (\ref{prob4}), we add it in Algorithm \ref{algo:SCA}. For the sake of simplicity, we denote $\bm{y}(\bm{x}^i)$ as $\bm{y}^i$.

Note that this novel SCA algorithm is like the BCD algorithm, approximating the stationary point under the SCA procedure. The current-iterate information is contained in $\bm{y}$. Therefore, Algorithm \ref{algo:SCA} is actually an SCA method, as we illustrated in Section \ref{sec_infactSCA}.

\begin{algorithm}
\caption{A Novel SCA Method to Solve \mbox{Problem (\ref{prob4})}.}
\label{algo:SCA}

Initialize $i \leftarrow -1$ and a feasible point $\bm{x}^{0}$;

Obtain the AM upper bound $F_n^{AM}(\bm{x};\bm{y}_n^0)$ or the QM upper bound $F_n^{QM}(\bm{x};\bm{y}_n^0)$ by (\ref{bounds_K});

Replace $\prod_{k=1}^K f^{(k)}_n(\bm{x})$ by $F_n^{AM}(\bm{x};\bm{y}_n^0)$ or $F_n^{QM}(\bm{x};\bm{y}_n^0)$;

\Repeat{the value of function in optimization (\ref{prob4}) converges}{
Let $i \leftarrow i+1$;

Update $\bm{y}^{i}$ by (\ref{y_n_closedform}) with the feasible point $\bm{x}^{i}$;

Update $\bm{x}^{i+1}\in\arg \min\limits_{\bm{x}\in\mathcal{X}} \left(J\left(\bm{x}\right) + \sum_{n=1}^N F_n^{AM}\left(\bm{x};\bm{y}_n^{i}\right)\right)$ or 
$\bm{x}^{i+1}\in\arg \min\limits_{\bm{x}\in\mathcal{X}} \left(J\left(\bm{x}\right) + \sum_{n=1}^N F_n^{QM}\left(\bm{x};\bm{y}_n^{i}\right)\right)$
;
}
\end{algorithm}

\textbf{Convergence Analysis:}
We will show that Algorithm \ref{algo:SCA} converges to a stationary point of Problem (\ref{prob4}). Before proving this, we first recall the convergence lemma of the maximum block improvement (MBI) in~\cite{chen2012maximum}.

\begin{lemma}
\label{lemma_mbi}
Consider the following block optimization problem
\begin{equation}
    \min_{\bm{z}_1,\ldots,\bm{z}_d}
    \Psi(\bm{z}_1,\ldots,\bm{z}_d)
    \quad
    \emph{s.t.}\quad
    \bm{z}_i\in \mathcal{S}_i,\quad i=1,\ldots,d,
\end{equation}
where $\Psi$ is continuous and each $\mathcal{S}_i$ is compact. At iteration $k$, for each block
$i$, define the best block response
\begin{equation}
    \widehat{\bm{z}}_i^{k+1}
    \in
    \arg\min_{\bm{z}_i\in\mathcal{S}_i}
    \Psi
    \left(
    \bm{z}_1^k,\ldots,\bm{z}_{i-1}^k,
    \bm{z}_i,
    \bm{z}_{i+1}^k,\ldots,\bm{z}_d^k
    \right),
\end{equation}
and the corresponding objective value
\begin{equation}
    v_i^{k+1}
    =
    \Psi
    \left(
    \bm{z}_1^k,\ldots,\bm{z}_{i-1}^k,
    \widehat{\bm{z}}_i^{k+1},
    \bm{z}_{i+1}^k,\ldots,\bm{z}_d^k
    \right).
\end{equation}
Let
    $i_k
    \in
    \arg\min_{1\le i\le d}
    v_i^{k+1}$, and the MBI update is given by
    $\bm{z}_{i_k}^{k+1}=\widehat{\bm{z}}_{i_k}^{k+1},
    \bm{z}_{i}^{k+1}=\bm{z}_{i}^{k}$, for $i\ne i_k$.
Then any cluster point
    $\bm{z}^*
    =
    \left(
    \bm{z}_1^*,\ldots,\bm{z}_d^*
    \right)$
of the generated sequence is a blockwise stationary point, i.e.,
\begin{equation}
    \bm{z}_i^*
    \in
    \arg\min_{\bm{z}_i\in\mathcal{S}_i}
    \Psi
    \left(
    \bm{z}_1^*,\ldots,\bm{z}_{i-1}^*,
    \bm{z}_i,
    \bm{z}_{i+1}^*,\ldots,\bm{z}_d^*
    \right),
    \quad
    i=1,\ldots,d.
\end{equation}
\end{lemma}
The original MBI result in~\cite{chen2012maximum} is stated for maximization, and the above lemma follows by applying the result to minimization. The core idea of Algorithm \ref{algo:SCA} can be interpreted as an exact two-block MBI procedure.

\begin{theorem}[Convergence of the proposed AM/QM-SCA method]
\label{theorem_sca_convergence}
Suppose that $\mathcal{X}$ is compact and convex, $J(\bm{x})$ is convex, and that $J(\bm{x})$ and
$f_n^{(k)}(\bm{x})$ are continuously differentiable on an open set containing
$\mathcal{X}$, with $f_n^{(k)}(\bm{x})>0$ for all $\bm{x}\in\mathcal{X}$.
Assume that, at each iteration, the auxiliary variables $\bm{y}$ are updated according
to Equation (\ref{y_n_closedform}), and the resulting transformed
surrogate is convex over $\mathcal{X}$ and solved exactly. Then the objective
sequence $\{\Phi(\bm{x}^{i})\}$ generated by Algorithm~\ref{algo:SCA} is
non-increasing and convergent. Moreover, every limit point of
$\{\bm{x}^{i}\}$ is a stationary point of Problem~\eqref{prob3}.
\end{theorem}

\begin{proof}
    Please refer to Appendix \ref{appdix_proof_theorem_sca_convergence}.
\end{proof}

\textbf{Complexity Analysis:}
Updating $\bm{y}$ takes $NK$ operations. As for updating $\bm{x}$, the actual operations depend on the way to derive $\bm{x}$, e.g., closed-form solution of $\bm{x}$, CVX solver. We denote the complexity of computing $\bm{x}$ at one particular iteration as $\mathcal{C}_{\bm{x}}$. Assuming that it takes $I$ iterations in total, the complexity of Algorithm \ref{algo:SCA} is $ \mathcal{O}(INK + I \mathcal{C}_{\bm{x}})$. 


\subsubsection{Cases where the Transformed Function is Non-Convex}\label{sec_am_nonconvex}
We then present how to use the SA method with gradient descent to solve the optimization (\ref{prob4}) when the transformed objective function is non-convex. Moreover, in Algorithm \ref{algo:SCA}, the point $\bm{x}^{i+1}$ is obtained by solving $\bm{x}^{i+1}\in\arg \min\limits_{\bm{x}\in\mathcal{X}} \left(J(\bm{x}) + \sum_{n=1}^N F_n^{AM}(\bm{x};\bm{y}_n^i)\right)$, which can be computationally demanding. To reduce the computational burden, we can replace this minimization with several gradient steps on the objective function.

\textbf{Proposed Efficient Gradient Descent SA Method to Solve Problem (\ref{prob4}):}
We first denote $Q(\bm{x};\bm{x}^i):=J(\bm{x}) + \sum_{n=1}^N F_n^{AM}(\bm{x};\bm{y}_n^i)$ and we will use this notation for the transformed objective function of the HM, AM, or QM bound in the remaining part of this paper. For each outer iteration $i$, perform $M_i \geq 1$ gradient steps:
\begin{equation}
    \bm{x}_{j+1}^i = \Pi_{\mathcal{X}}
    \left(\bm{x}_j^i - \alpha_{i,j} \nabla Q(\bm{x};\bm{x}_j^i)\right), j=0,1,\cdots,M_i-1,
\end{equation}
where $\alpha_{i,j}$ is the step size of the outer loop $i$ and the inner loop $j$, and $\Pi_{\mathcal{X}}(\cdot)$ denotes the Euclidean projection onto $\mathcal{X}$. Each inner step only needs a gradient evaluation of the transformed surrogate $Q(\bm{x};\bm{x}^i_j)$, rather than solving the whole convex subproblem exactly. Thus, the proposed efficient gradient-descent SA method is shown in Algorithm \ref{algo:SCA_gd}.

\begin{algorithm}[t]
\caption{An Efficient Gradient-based SA Method to Solve \mbox{Problem (\ref{prob4})}.}
\label{algo:SCA_gd}

Initialize $i \leftarrow -1$ and a feasible point $\bm{x}^{0}$;

Obtain the AM upper bound $F_n^{AM}(\bm{x};\bm{y}_n^{0})$ or the QM upper bound $F_n^{QM}(\bm{x};\bm{y}_n^0)$ by (\ref{bounds_K});

Replace $\prod_{k=1}^K f^{(k)}_n(\bm{x})$ by $F_n^{AM}(\bm{x};\bm{y}_n^0)$ or $F_n^{QM}(\bm{x};\bm{y}_n^0)$ and construct the AM/QM surrogate $Q(\bm{x};\bm{x}^0)$;

\Repeat{the value of function in optimization (\ref{prob4}) converges}{
Let $i \leftarrow i+1$;

Update $\bm{y}^{i}$ by (\ref{y_n_closedform}) with the feasible point $\bm{x}^{i}$;

Initialize $j = -1$ and set $\bm{x}_0^{i} = \bm{x}^{i}$;

\Repeat{$j = M_i-1$}{
Let $j \leftarrow j+1$;

Update $\bm{x}_{j+1}^{i} = \Pi_{\mathcal{X}}
    \left(\bm{x}_j^{i} - \alpha_{i,j} \nabla Q(\bm{x};\bm{x}_j^{i})\right)$;
}

Update $\bm{x}^{i+1} = \bm{x}_{M_i}^{i}$;
}
\end{algorithm}

\textbf{Convergence Analysis:}
Based on the AM/QM-GM transform and Proposition \ref{prop_property},
we obtain the following theorem:
\begin{theorem}\label{theorem_sca_gd}
    Suppose that the original objective $\Phi(\bm{x})$ is bounded below, the functions $J(\bm{x})$ and $f_n^{(k)}(\bm{x})$ are continuously differentiable, the surrogate $Q(\bm{x})$ is $L$-smooth, i.e., 
    \begin{equation}
        \|\nabla Q(\bm{x})-\nabla Q(\bm{x}^\prime)\|\leq L\|\bm{x} - \bm{x}^\prime\|, \forall \bm{x}, \bm{x}^\prime,
    \end{equation}
    and $0<\alpha_{i,j}\leq 1/L$. The iterates generated by Algorithm \ref{algo:SCA_gd} satisfy
    \begin{equation}
        \Phi(\bm{x}^{i+1})\leq \Phi(\bm{x}^i) - \frac{\alpha_{i,0}}{2}\left\|\nabla \Phi(\bm{x}^i)\right\|^2.
    \end{equation}
    Consequently, 
    \begin{equation}
        \min_{0\leq i \leq I-1} \|\nabla \Phi(\bm{x}^i)\|^2 \leq \frac{2\left(\Phi(\bm{x}^0)-\Phi_{\text{inf}}\right)}{\sum_{i=0}^{I-1}\alpha_{i,0}},
    \end{equation}
    where $\Phi_{\text{inf}}$ is the lower bound of $\Phi(\bm{x})$. In particular, if $\alpha_{i,0} = \alpha$ for all $i$, then 
    \begin{equation}
        \min_{0\leq i \leq I-1} \|\nabla \Phi(\bm{x}^i)\|^2 \leq \frac{2\left(\Phi(\bm{x}^0)-\Phi_{\text{inf}}\right)}{\alpha I}.
    \end{equation}
    Therefore, an $\epsilon$-stationary point can be obtained after at most 
    \begin{equation}
        I \geq \frac{2\left(\Phi(\bm{x}^0)-\Phi_{\text{inf}}\right)}{\alpha \epsilon^2}
    \end{equation}
    outer iterations.
\end{theorem}
\begin{proof}
    Please refer to Appendix \ref{appdix_proof_theorem_sca_gd}.
\end{proof}

\textbf{Complexity Analysis:}
Updating $\bm{y}$ takes $NK$ operations. As for updating $\bm{x}$, suppose that $M_i$ gradient steps are performed at the $i$-th outer iterations, and let the complexity of computing one gradient with respect to $\bm{x}$ be denoted by $\mathcal{C}_{\nabla x}$. Then the complexity of updating $\bm{x}$ at the $i$-th outer iteration is $M_i \mathcal{C}_{\nabla x}$. Let $I$ denote the number of outer iterations required to obtain an $\epsilon$-stationary point. Since $I = \mathcal{O}(\epsilon^{-2})$ based on the \textbf{Theorem \ref{theorem_sca_gd}}, the total complexity of the proposed gradient-based SA method is $\mathcal{O}\left(\epsilon^{-2}NK+\sum_{i=1}^I M_i \mathcal{C}_{\nabla x}\right)$. If a fixed number $M$ of gradient steps is used at each outer iteration, this complexity becomes $\mathcal{O}\left(\frac{NK+M\mathcal{C}_{\nabla x}}{\epsilon^2}\right)$. In particular, when only one gradient step is used for updating $\bm{x}$ at each outer iteration, the complexity reduces to $\mathcal{O}\left(\frac{NK+\mathcal{C}_{\nabla x}}{\epsilon^2}\right)$.

\subsection{Sum-of-Products Maximization}\label{sec_max_sca}
\textbf{Problem Statement}:
We consider $f^{(k)}_n(\bm{x}): \mathbb{R}^M \rightarrow \mathbb{R}_{++}$, where $k \in \mathcal{K}$, $\bm{x}$ is within a compact convex set $\mathcal{X}$, and a general coupled multiplicative term $\prod_{k=1}^K f^{(k)}_n(\bm{x})$ under the sum-of-products maximization problem:
\begin{subequations}\label{prob12}
\begin{align}
\max\limits_{\bm{x}} &\quad J(\bm{x}) + \sum_{n=1}^N \left(\prod_{k=1}^K f^{(k)}_n(\bm{x})\right)
\\
\text{s.t.} &\quad  \bm{x} \in \mathcal{X},
\end{align}
\end{subequations}
where $J(\bm{x}): \mathbb{R}^M \rightarrow \mathbb{R}$, and $J(\bm{x})$ is concave. Denote the objective function as $\Phi(\bm{x})$. This sum of multiplicative optimization is also generally non-convex and NP-complete. From Equation (\ref{bounds_K}), we get the HM lower bound of multiplicative terms in Problem (\ref{prob12}) as
\begin{equation}
    \frac{K}{\frac{1}{f^{(1)}_n(\bm{x}) \cdot \prod_{k=1}^{K-1} y^{(k)}_n} + \frac{1}{\frac{f^{(K)}_n(\bm{x})}{\left(y^{(K-1)}_n\right)^{K-1}}} + \sum_{k=2}^{K-1}\frac{1}{f^{(k)}_n(\bm{x}) \cdot \frac{\prod_{i=k}^{K-1} y^{(i)}_n}{\left(y^{(k-1)}_n\right)^{k-1}}}},
\end{equation}
which is denoted as $F_n^{HM}(\bm{x},\bm{y}_n)$. Based on the HM lower bound $F_n^{HM}(\bm{x},\bm{y}_n)$, we can convert the original optimization (\ref{prob12}) to the following optimization:
\begin{subequations}\label{prob13}
\begin{align}
\max\limits_{\bm{x},\bm{y}} &\quad J(\bm{x}) + \sum_{n=1}^N F_n^{HM}(\bm{x},\bm{y}_n)
\\
\text{s.t.} &\quad  \bm{x} \in \mathcal{X}.
\end{align}
\end{subequations}
The relevant theoretical analysis in this part is similar to that under the minimization problem mentioned earlier, so we will not elaborate on it here. We directly present the standard SA algorithm and the gradient-based SA algorithm by using the HM-GM transform as follows. Based on Algorithms \ref{algo:SCA_HM} and \ref{algo:SCA_HM_gd}, we can obtain the stationary point or $\epsilon$-stationary point of the original maximization problem (\ref{prob12}).

\begin{algorithm}
\caption{A Novel SA Method to Solve \mbox{Problem (\ref{prob13})}.}
\label{algo:SCA_HM}

Initialize $i \leftarrow -1$ and a feasible point $\bm{x}^{0}$;

Obtain the HM lower bound $F_n^{HM}(\bm{x};\bm{y}_n^0)$ by (\ref{bounds_K});

Replace $\prod_{k=1}^K f^{(k)}_n(\bm{x})$ by $F_n^{HM}(\bm{x};\bm{y}_n^0)$;

\Repeat{the value of function in optimization (\ref{prob13}) converges}{
Let $i \leftarrow i+1$;

Update $\bm{y}^{i}$ by (\ref{y_n_closedform}) with the feasible point $\bm{x}^{i}$;

Update $\bm{x}^{i+1}\in\arg \max\limits_{\bm{x}\in\mathcal{X}} \left(J\left(\bm{x}\right) + \sum_{n=1}^N F_n^{HM}\left(\bm{x};\bm{y}_n^{i}\right)\right)$
;
}
\end{algorithm}

\begin{algorithm}
\caption{An Efficient Gradient-based SA Method to Solve \mbox{Problem (\ref{prob13})}.}
\label{algo:SCA_HM_gd}

Initialize $i \leftarrow -1$ and a feasible point $\bm{x}^{0}$;

Obtain the HM upper bound $F_n^{HM}(\bm{x};\bm{y}_n^{0})$ by (\ref{bounds_K});

Replace $\prod_{k=1}^K f^{(k)}_n(\bm{x})$ by $F_n^{HM}(\bm{x};\bm{y}_n^{0})$ and construct the HM surrogate $Q(\bm{x};\bm{x}^{0})$;

\Repeat{the value of function in optimization (\ref{prob13}) converges}{
Let $i \leftarrow i+1$;

Update $\bm{y}^{i}$ by (\ref{y_n_closedform}) with the feasible point $\bm{x}^{i}$;

Initialize $j = -1$ and set $\bm{x}_0^{i} = \bm{x}^{i}$;

\Repeat{$j = M_i-1$}{
Let $j \leftarrow j+1$;

Update $\bm{x}_{j+1}^{i} = \Pi_{\mathcal{X}}
    \left(\bm{x}_j^{i} - \alpha_{i,j} \nabla Q(\bm{x};\bm{x}_j^{i})\right)$;
}

Update $\bm{x}^{i+1} = \bm{x}_{M_i}^{i}$;
}
\end{algorithm}

\begin{table*}[t]
\centering
\footnotesize
\setlength{\tabcolsep}{5pt}
\renewcommand{\arraystretch}{1.18}
\caption{Summary of the proposed SA methods.}
\label{tab:sca_summary}

\begin{tabularx}{\textwidth}{lXX}
\toprule
Method & Studied problem & Surrogate at feasible point $\bm{x}^\prime$ \\
\midrule
(Gradient) AM/QM-SA &
$\min\limits_{\bm{x}\in\mathcal{X}} J(\bm{x}) + \sum_{n=1}^N \left(\prod_{k=1}^K f^{(k)}_n(\bm{x})\right)$ &
$\min\limits_{\bm{x}\in\mathcal{X}} J(\bm{x})+\sum_{n=1}^{N}F_n^{\rm AM/QM}(\bm{x};\bm{y}_n(\bm{x}^\prime))$ \\
(Gradient) HM-SA &
$\max\limits_{\bm{x}\in\mathcal{X}} J(\bm{x}) + \sum_{n=1}^N \left(\prod_{k=1}^K f^{(k)}_n(\bm{x})\right)$ &
$\max\limits_{\bm{x}\in\mathcal{X}} J(\bm{x})+\sum_{n=1}^{N}F_n^{\rm HM}(\bm{x};\bm{y}_n(\bm{x}^\prime))$ \\
\bottomrule
\end{tabularx}

\vspace{1.5mm}

\begin{tabularx}{\textwidth}{lXXX}
\toprule
Method & Solution & Convergence rate & Complexity \\
\midrule
AM/QM-SA &
Stationary point &
-- &
$\mathcal{O}(INK+I\mathcal{C}_{\bm{x}})$ \\

Gradient AM/QM-SA &
$\epsilon$-stationary point &
Sublinear &
$\mathcal{O}\!\left((NK+M\mathcal{C}_{\nabla\bm{x}})/\epsilon^2\right)$ \\

HM-SA &
Stationary point &
-- &
$\mathcal{O}(INK+I\mathcal{C}_{\bm{x}})$ \\

Gradient HM-SA &
$\epsilon$-stationary point &
Sublinear &
$\mathcal{O}\!\left((NK+M\mathcal{C}_{\nabla\bm{x}})/\epsilon^2\right)$ \\
\bottomrule
\end{tabularx}
\end{table*}
\subsection{Summary of SA Methods based on HM-GM-AM-QM Transforms}\label{sec_sum_sa}
Table~\ref{tab:sca_summary} summarizes the proposed SA methods based on different inequality-induced surrogates. For minimization problems, the GM-AM and GM-QM inequalities provide upper-bound surrogates, which lead to the AM/QM-SA methods. For maximization problems, the HM-GM inequality provides a lower-bound surrogate, which leads to the HM-SA method. In the exact SA methods, the transformed subproblem with respect to $\bm{x}$ is solved at each iteration, and the resulting solution is guaranteed to converge to a stationary point under standard SA conditions. However, the per-iteration complexity depends on the cost $\mathcal{C}_{\bm{x}}$ of solving the transformed subproblem.

To improve computational efficiency and solve optimization problems when the transformed objective function is non-convex, we further propose gradient-based SA variants, where the exact update of $\bm{x}$ is replaced by several gradient steps on the corresponding surrogate. These methods avoid repeatedly solving the transformed subproblem exactly and only require gradient evaluations. Under smoothness, boundedness, tightness, and first-order consistency conditions, the gradient-based AM/QM-SA and HM-SA methods converge to an $\epsilon$-stationary point with sublinear convergence rate and iteration complexity $\mathcal{O}(\epsilon^{-2})$. Therefore, the gradient-based variants provide a more efficient implementation of the proposed inequality-based SA framework, especially when the transformed subproblem is difficult to solve exactly.

\textbf{Role of the proposed SA/SCA framework.}
The proposed framework is not intended to replace standard SCA or to universally outperform direct gradient descent/ascent on every smooth instance. Direct gradient methods can also approach stationary points under suitable smoothness and stepsize conditions. However, they treat the original multiplicative or fractional objective as a generic non-convex function and do not exploit its product structure. In contrast, the proposed HM-GM-AM-QM transforms provide a systematic way to construct product-aware surrogates with the correct bound direction, tightness, and first-order consistency. For minimization problems, the AM/QM upper-bound surrogates yield descent-oriented updates; for maximization problems, the HM lower-bound surrogate yields ascent-oriented updates. More importantly, when the transformed surrogate is convex or concave, the update becomes a standard SCA step and can exploit convex solvers, KKT conditions, or even closed-form updates, which is beyond a single direct gradient step on the original non-convex objective. The proposed transforms can also be applied to product-type constraints, where direct projected gradient would require projection onto a generally non-convex feasible set, while the proposed method can construct conservative convex surrogate constraints. Therefore, the main contribution is not that the proposed method is always faster than direct gradient descent, but that it provides a unified, structure-exploiting, and convergence-analyzable surrogate construction for coupled product, ratio, logarithmic product-ratio, and product-constraint optimization problems.

\subsection{Extension to Logarithmic Problems}\label{sec_extension_log}
The proposed inequality-based transform can also be extended to logarithmic objectives with general coupled terms. Let
\begin{equation}
    S_n(\bm{x})=\prod_{k=1}^{K} f_n^{(k)}(\bm{x})>0,
\end{equation}
where the functions \(f_n^{(k)}(\bm{x})\) may include reciprocal terms, so fractional structures are also covered. For the maximization problem
\begin{equation}
    \max_{\bm{x}\in\mathcal{X}}\; J(\bm{x})+\sum_{n=1}^{N}w_n\log\left(1+S_n(\bm{x})\right),
\end{equation}
where $w_n$ is a positive constant, 
we construct a lower-bound surrogate \(F_n^{\rm HM}(\bm{x},\bm{y}_n)\le S_n(\bm{x})\) by the HM-GM inequality. Since \(\log(1+z)\) is increasing, we have
\begin{equation}
    \log\left(1+F_n^{\rm HM}(\bm{x},\bm{y}_n)\right)
\le
\log\left(1+S_n(\bm{x})\right).
\end{equation}
Thus, the logarithmic objective can be minorized by
\begin{equation}
    R(\bm{x})=J(\bm{x})+\sum_{n=1}^{N}w_n
\log\left(1+F_n^{\rm HM}(\bm{x},\bm{y}_n)\right).
\end{equation}
Similarly, for minimization, an AM/QM upper bound
\begin{equation}
    S_n(\bm{x})\le F_n^{\rm AM/QM}(\bm{x},\bm{y}_n)
\end{equation}
yields the majorization
\begin{equation}
    Q(\bm{x})=J(\bm{x})+\sum_{n=1}^{N}w_n
\log\left(1+F_n^{\rm AM/QM}(\bm{x},\bm{y}_n)\right).
\end{equation}
Because the auxiliary variables $\bm{y}_n$ are updated according to Equation (\ref{y_n_closedform}) at one feasible point, these logarithmic surrogates are tight and first-order consistent at that feasible point. Hence, the resulting gradient-based SA method can be used to find an \(\epsilon\)-stationary point under standard smoothness and boundedness assumptions.

Different from the logarithmic quadratic transform, which moves the ratio outside the logarithm by an auxiliary variable, the above extension keeps the logarithm but applies the proposed inequality-based bound to the positive coupled term inside it. Therefore, it is most naturally suited to gradient-based SA methods. Convexity of the transformed logarithmic surrogate requires additional problem-specific conditions. Extending the proposed HM-GM-AM-QM transform to full matrix-ratio/products optimization remains an interesting direction for future work.

\subsection{Comparison to Existing Methods}\label{sec_comp_existingmethods}
\begin{table*}[t]
\centering
\small
\setlength{\tabcolsep}{5pt}
\renewcommand{\arraystretch}{1.22}
\caption{Representative FP transforms and their target problems.}
\label{tab:fp_target_problem}
\begin{tabularx}{\textwidth}{|l|X|}
\hline
Method & Target problem \\
\hline
Charnes--Cooper\cite{charnes1962programming} &
$\displaystyle \max_{\bm{x}\in\mathcal{X}} \frac{A(\bm{x})}{B(\bm{x})}$ \\
\hline

Dinkelbach\cite{dinkelbach1967nonlinear} &
$\displaystyle \max_{\bm{x}\in\mathcal{X}} \frac{A(\bm{x})}{B(\bm{x})}$, $\displaystyle \max_{\bm{x}\in\mathcal{X}}\min_n \frac{A_n(\bm{x})}{B_n(\bm{x})}$ \\
\hline

Quadratic transform\cite{shen2018fractional} &
$\displaystyle \max_{\bm{x}\in\mathcal{X}} \sum_{n=1}^{N}\frac{A_n(\bm{x})}{B_n(\bm{x})}$ \\
\hline

Inverse quadratic transform\cite{chen2023inverse} &
$\displaystyle \min_{\bm{x}\in\mathcal{X}} \sum_{n=1}^{N}\frac{A_n(\bm{x})}{B_n(\bm{x})}$ \\
\hline

\textbf{Proposed AM/QM-GM transform} &
$\displaystyle \min_{\bm{x}\in\mathcal{X}} J(\bm{x})+\sum_{n=1}^{N}\prod_{k=1}^{K} f_n^{(k)}(\bm{x})$ \\
\hline

\textbf{Proposed HM-GM transform} &
$\displaystyle \max_{\bm{x}\in\mathcal{X}} J(\bm{x})+\sum_{n=1}^{N}\prod_{k=1}^{K} f_n^{(k)}(\bm{x})$ \\
\hline
\end{tabularx}
\end{table*}

\begin{table*}[t]
\centering
\footnotesize
\setlength{\tabcolsep}{3.5pt}
\renewcommand{\arraystretch}{1.25}
\caption{Comparison of representative FP methods under different problem classes.}
\label{tab:fp_class_comparison}
\begin{tabularx}{\textwidth}{
>{\raggedright\arraybackslash}m{0.3\textwidth}|
>{\centering\arraybackslash}m{0.13\textwidth}|
>{\centering\arraybackslash}m{0.13\textwidth}|
>{\centering\arraybackslash}m{0.16\textwidth}|
>{\centering\arraybackslash}m{0.17\textwidth}}
\toprule
Problem class & Charnes--Cooper & Dinkelbach & (Inverse) Quadratic transform & \textbf{Proposed HM-GM-AM-QM transform} \\
\midrule
\makecell[l]{Single-ratio:
$\displaystyle \max_{\bm{x}\in\mathcal{X}} \frac{A(\bm{x})}{B(\bm{x})}$} &
Global optimum & Global optimum & Global optimum & Global optimum \\
\hline

\makecell[l]{Max-min-ratios:
$\displaystyle \max_{\bm{x}\in\mathcal{X}}\min_n \frac{A_n(\bm{x})}{B_n(\bm{x})}$} &
-- & Global optimum & Global optimum & Global optimum \\
\hline

\makecell[l]{Max sum-of-ratios:
$\displaystyle \max_{\bm{x}\in\mathcal{X}} \sum_{n=1}^{N}\frac{A_n(\bm{x})}{B_n(\bm{x})}$} &
-- & -- & Stationary point & Stationary point \\
\hline

\makecell[l]{Min sum-of-ratios:
$\displaystyle \min_{\bm{x}\in\mathcal{X}} \sum_{n=1}^{N}\frac{A_n(\bm{x})}{B_n(\bm{x})}$} &
-- & -- & Stationary point & Stationary point \\
\hline

\makecell[l]{Max sum-of-log-ratios:\\
$\displaystyle \max_{\bm{x}\in\mathcal{X}} \sum_{n=1}^{N}\log\left(1+\frac{A_n(\bm{x})}{B_n(\bm{x})}\right)$} &
-- & -- & Stationary point & Stationary point \\
\hline

\makecell[l]{Min sum-of-log-ratios:\\
$\displaystyle \min_{\bm{x}\in\mathcal{X}} \sum_{n=1}^{N}\log\left(1+\frac{A_n(\bm{x})}{B_n(\bm{x})}\right)$} &
-- & -- & -- & Stationary point \\
\hline

\makecell[l]{Max/min general \\ sum-of-products/ratios:\\
$\displaystyle \max/\min_{\bm{x}\in\mathcal{X}}\; J(\bm{x})$\\$+
\sum_{n=1}^{N}\prod_{k=1}^{K} f_n^{(k)}(\bm{x})$} &
-- & -- & -- & Stationary point \\
\hline

\makecell[l]{Max/min general \\ sum-of-log-products/ratios:\\
$\displaystyle \max/\min_{\bm{x}\in\mathcal{X}}\; J(\bm{x})$\\$+
\sum_{n=1}^{N}\log\left(1+\prod_{k=1}^{K} f_n^{(k)}(\bm{x})\right)$} &
-- & -- & -- & Stationary point \\
\hline

Convergence rate &
Not iterative & Superlinear & No general explicit rate & Sublinear for gradient-based variants \\
\bottomrule
\end{tabularx}
\end{table*}

Tables~\ref{tab:fp_target_problem} and~\ref{tab:fp_class_comparison} compare the proposed transforms with representative FP methods. Classical Charnes--Cooper and Dinkelbach methods provide strong global-optimality guarantees for single-ratio problems under the concave-convex condition. Dinkelbach-type methods can also be extended to max-min-ratio problems. However, these methods are mainly designed for single-ratio or max-min-ratio structures, and are not directly applicable to general sum-of-ratios, sum-of-products, or logarithmic coupled-term problems.

Quadratic-transform-based methods significantly broaden the scope of FP by decoupling multiple fractional terms. The quadratic transform is suitable for sum-of-ratios maximization, while the inverse quadratic transform addresses sum-of-ratios minimization. With additional reformulations, quadratic-transform-based methods can also handle sum-of-log-ratios maximization. These methods generally guarantee convergence to a stationary point for multi-ratio problems, which is consistent with the nonconvex nature of sum-of-ratios optimization.

In contrast, the proposed HM-GM-AM-QM transform is developed from inequality-induced upper and lower bounds. For minimization problems, the GM-AM and GM-QM inequalities provide upper-bound surrogates, while for maximization problems, the HM-GM inequality provides lower-bound surrogates. This enables the proposed framework to handle not only fractional terms but also general multiplicative terms with an arbitrary number of coupled functions. Since ratios can be regarded as special cases of products by including reciprocal terms, the proposed transform extends FP from ratio-specific optimization to broader product/ratio optimization.

Furthermore, by combining the monotonicity of the logarithm with the proposed upper- or lower-bound surrogates, the framework can also be extended to general sum-of-log-products/ratios problems. The gradient-based variants provide an explicit sublinear convergence guarantee to an $\epsilon$-stationary point, which is useful when the transformed subproblem is difficult to solve exactly. Therefore, compared with existing FP methods, the proposed transform offers a more general inequality-based SA framework for coupled multiplicative and logarithmic structures, while retaining stationary-point guarantees for both minimization and maximization problems.

\subsection{Application in Optimization Constraints}\label{sec_app_constraint}
\begin{table*}[t]
\centering
\scriptsize
\setlength{\tabcolsep}{2.2pt}
\renewcommand{\arraystretch}{1.22}
\caption{Examples of non-convex constraints transformed by the proposed HM-GM-AM-QM transform.}
\label{tab:constraint_examples}
\begin{tabularx}{\textwidth}{
p{0.23\textwidth}|
p{0.21\textwidth}|
p{0.34\textwidth}|
p{0.08\textwidth}|
p{0.12\textwidth}}
\hline
Original non-convex constraint & Auxiliary update at $x^{i}$ & Transformed surrogate constraint & Transform & Convexity \\
\hline

$\displaystyle x_1x_2-\Gamma\le 0,\quad x_1,x_2>0$ &
$\displaystyle y^{i}=\frac{x_2^{i}}{x_1^{i}}$ &
$\displaystyle 
\frac{1}{2}\left(y^{i}x_1^2+\frac{x_2^2}{y^{i}}\right)-\Gamma\le 0$ &
AM-GM &
Convex \\
\hline

$\displaystyle \big((x-a)^2+b\big)\big(x^2+c\big)-\Gamma\le 0,\quad b,c>0$ &
$\displaystyle 
y^{i}=\frac{(x^{i})^2+c}{(x^{i}-a)^2+b}$ &
$\displaystyle 
\frac{1}{2}\left[
y^{i}\big((x-a)^2+b\big)^2+
\frac{(x^2+c)^2}{y^{i}}\right]-\Gamma\le 0$ &
AM-GM &
Convex \\
\hline

$\displaystyle e^{x_1}(x_2^2+a)-\Gamma\le 0,\quad a>0$ &
$\displaystyle 
y^{i}=\frac{(x_2^{i})^2+a}{e^{x_1^{i}}}$ &
$\displaystyle 
\frac{1}{2}\left[
y^{i}e^{2x_1}
+\frac{(x_2^2+a)^2}{y^{i}}\right]-\Gamma\le 0$ &
AM-GM &
Convex \\
\hline

$\displaystyle \frac{x_1^2+a}{b-x_2^2}-\Gamma\le 0,\quad a,b>0,\ |x_2|<\sqrt{b}$ &
$\displaystyle 
y^{i}=\newline\frac{1}{2\big((x_1^{i})^2+a\big)\big(b-(x_2^{i})^2\big)}$ &
$\displaystyle 
y^{i}(x_1^2+a)^2+
\frac{1}{4y^{i}(b-x_2^2)^2}
-\Gamma\le 0$ &
AM-GM &
Convex \\
\hline

$\displaystyle \frac{e^{x_1}}{a+\sqrt{x_2}}-\Gamma\le 0,\quad a>0,\ x_2>0$ &
$\displaystyle 
y^{i}=\frac{1}{2e^{x_1^{i}}(a+\sqrt{x_2^{i}})}$ &
$\displaystyle 
y^{i}e^{2x_1}
+\frac{1}{4y^{i}(a+\sqrt{x_2})^2}
-\Gamma\le 0$ &
AM-GM &
Convex \\
\hline

$\displaystyle x_1x_2x_3-\Gamma\le 0,\quad x_1,x_2,x_3>0$ &
$\displaystyle 
y_1^{i}=\left(\frac{x_2^{i}}{x_1^{i}}\right)^{3/2},\newline
y_2^{i}=(y_1^{i})^{1/3}\frac{x_3^{i}}{x_2^{i}}$ &
$\displaystyle 
\frac{1}{3}\left[
y_1^{i}y_2^{i}x_1^3+
\frac{y_2^{i}}{y_1^{i}}x_2^3+
\frac{x_3^3}{(y_2^{i})^2}
\right]-\Gamma\le 0$ &
AM-GM &
Convex \\
\hline

$\displaystyle (x_1x_2)^2-\Gamma\le 0,\quad x_1,x_2>0$ &
$\displaystyle y^{i}=\frac{x_2^{i}}{x_1^{i}}$ &
$\displaystyle 
\frac{1}{2}\left[
(y^{i})^2x_1^4+
\frac{x_2^4}{(y^{i})^2}
\right]-\Gamma\le 0$ &
QM-GM &
Convex \\
\hline

$\displaystyle 
\left|\|x-u_1\|-\|x-u_2\|\right|-r\le 0$
\newline
or
$\displaystyle 
\|x-u_1\|^2+\|x-u_2\|^2
-2\|x-u_1\|\|x-u_2\|-r^2\le 0$ &
$\displaystyle 
y^{i}=
-\frac{\|x^{i}-u_2\|}
{2\|x^{i}-u_1\|}$ &
$\displaystyle 
\|x-u_1\|^2+\|x-u_2\|^2
+2y^{i}\|x-u_1\|^2
+\frac{\|x-u_2\|^2}{2y^{i}}
-\Big((r^{i})^2+2r^{i}(r-r^{i})\Big)
\le 0$ &
Negative AM-GM &
Maybe convex \\
\hline
\end{tabularx}
\end{table*}
In this section, we present how our proposed HM-GM-AM-QM transform is applied in optimization constraints. Table~\ref{tab:constraint_examples} shows several concrete non-convex constraints that can be transformed by the proposed transforms. For constraints of the form \(h(\bm{x})-\Gamma\le 0\), where $\Gamma$ is a constant number, the proposed transform constructs a surrogate upper bound \(Q(\bm{x};\bm{x}^\prime)\) such that
\[
h(\bm{x})\le Q(\bm{x};\bm{x}^\prime).
\]
Thus, the transformed constraint
\[
Q(\bm{x};\bm{x}^\prime)-\Gamma\le 0
\]
is a conservative approximation of the original constraint. The auxiliary variables are updated according to Equation (\ref{y_n_closedform}) at the current point \(\bm{x}^\prime\), so the surrogate is tight at \(\bm{x}^\prime\).

The first seven rows give convex surrogate constraints under the stated domains. These examples include bilinear terms, products of convex functions, exponential-polynomial products, fractional terms with concave denominators, and higher-order multiplicative terms. Hence, the proposed transform can be used not only for objective-function approximation, but also for constructing tractable convex constraints inside an SCA procedure.

The last row corresponds to the distance-difference constraint. After squaring the absolute-value constraint, the non-convexity appears through the negative product term \(-2\|\bm{x}-\bm{u}_1\|\|\bm{x}-\bm{u}_2\|\). The negative GM-AM transform can be used to obtain an SCA surrogate, together with the first-order approximation of \(r^2\). Note that this transformed constraint is not generally convex because the coefficients induced by the negative auxiliary variable may be negative. We present an example here where the proposed transform can be safely used.

\textbf{Case Study:}
Consider the quantum source position optimization problem in a quantum network. A quantum source is located at $\bm{q} \in \mathbb{R}^2$ in the quantum network. We use $\bm{u}_n^{(q)} \in \mathbb{R}^2$ to denote the location of quantum node $n$, where $n \in \mathcal{N}$. We denote each couple of nodes as $(n,n^\prime)$, where $n^\prime \neq n$. The total number of node couples is $M^{(q)} = \frac{N(N-1)}{2}$, and $m^{(q)} \in \{1,2,\cdots,M^{(q)}\}$ is used to denote $m^{(q)}$-th node couple.

The goal is to derive the optimal entanglement distribution by optimizing the position of the source $\bm{q}$. The detailed optimization problem can be found in Section \uppercase\expandafter{\romannumeral5} in \cite{iacovelli2024probability}. Here, we only consider the sub-problem 2, quantum source position, which appears in Section \uppercase\expandafter{\romannumeral6}.B. The quantum source position is given as follows:
\begin{subequations}\label{prob7}
\begin{align}
\min\limits_{\bm{q}} & \sum_{m^{(q)}=1}^{M^{(q)}} \Big(\alpha_{m^{(q)}}^{-1} 10^{\frac{\eta}{10}(\lVert \bm{q} - \bm{u}_n^{(q)} \rVert + \lVert \bm{q} - {\bm{u}_{n^\prime}^{(q)}} \rVert)} \cdot 10^{\beta \left| \lVert \bm{q} - \bm{u}_n^{(q)} \rVert - \lVert \bm{q} - {\bm{u}_{n^\prime}^{(q)}} \rVert \right|}\Big)
\\
\text{s.t.} &\quad \bm{q} \in \mathbb{R}^2,\label{constr_q}
\end{align}
\end{subequations}
where $\alpha_{m^{(q)}}$, $\eta$, and $\beta$ are constant parameters detailed in \cite{iacovelli2024probability}. The non-convex part in Problem (\ref{prob7}) is 
\begin{equation}
    \left| \lVert \bm{q} - \bm{u}_n^{(q)} \rVert - \lVert \bm{q} - {\bm{u}_{n^\prime}^{(q)}} \rVert \right|.\nonumber
\end{equation}
In the following section, we present how to transform this non-convex term into a convex term.

\textit{Problem Transformation:}
We first introduce an additional variable $\bm{r}=[r_1,\cdots,r_{M^{(q)}}]^\intercal$. We transform the optimization (\ref{prob7}) to the following equivalent optimization:
\begin{subequations}\label{prob8}
\begin{align}
\min\limits_{\bm{q},\bm{r}} & \sum_{m^{(q)}=1}^{M^{(q)}} \alpha_{m^{(q)}}^{-1} 10^{\frac{\eta}{10}(\lVert \bm{q} - \bm{u}_n^{(q)} \rVert + \lVert \bm{q} - {\bm{u}_{n^\prime}^{(q)}} \rVert)+\beta r_m}
\\
\text{s.t.} &\quad \text{(\ref{constr_q})},\nonumber \\
&\quad \left| \lVert \bm{q} - \bm{u}_n^{(q)} \rVert - \lVert \bm{q} - {\bm{u}_{n^\prime}^{(q)}} \rVert \right| \leq r_m,\label{constr_r}
\end{align}
\end{subequations}
where the constraint (\ref{constr_r}) is still non-convex. Let's square both sides of the constraint (\ref{constr_r}) as
\begin{align}
    \lVert \bm{q} - \bm{u}_n^{(q)} \rVert^2 + \lVert \bm{q} - \bm{u}_{n^\prime}^{(q)} \rVert^2 - 2  \lVert \bm{q} - \bm{u}_n^{(q)} \rVert \lVert \bm{q} - \bm{u}_{n^\prime}^{(q)} \rVert \leq r_m^2.
\end{align}
The first-order Taylor expansion is used to conduct the SCA technique in \cite{iacovelli2024probability}. Now, we will use our proposed AM bound to conduct the SCA method. We first analyze the term 
\begin{equation}
    - 2  \lVert \bm{q} - \bm{u}_n^{(q)} \rVert \lVert \bm{q} - \bm{u}_{n^\prime}^{(q)} \rVert \nonumber
\end{equation}
at the left side. Note that since this term is negative, the AM upper bound would be the lower bound for it. By using the proposed AM bound and introducing the additional variable $\bm{y}$, we can obtain
\begin{align}
    - 2  \lVert \bm{q} - \bm{u}_n^{(q)} \rVert \lVert \bm{q} - \bm{u}_{n^\prime}^{(q)} \rVert \geq \frac{4 \lVert \bm{q} - \bm{u}_n^{(q)} \rVert^2 y_n + \frac{\lVert \bm{q} - \bm{u}_{n^\prime}^{(q)} \rVert^2}{y_n}}{2},
\end{align}
where iff 
\begin{equation}
    y_n = -\frac{\lVert \bm{q} - \bm{u}_{n^\prime}^{(q)} \rVert}{2  \lVert \bm{q} - \bm{u}_n^{(q)} \rVert},
\end{equation}
the equal sign can be achieved. When we choose a feasible point $\bm{q}_0$, the left side of the constraint (\ref{constr_r}) would be
\begin{align}\label{eq_rm_left}
    \lVert \bm{q} - \bm{u}_n^{(q)} \rVert^2 + \lVert \bm{q} - \bm{u}_{n^\prime}^{(q)} \rVert^2 -  \lVert \bm{q} - \bm{u}_n^{(q)} \rVert^2 \cdot \frac{\lVert \bm{q}_0 - \bm{u}_{n^\prime}^{(q)} \rVert}{  \lVert \bm{q}_0 - \bm{u}_n^{(q)} \rVert} -\frac{\lVert \bm{q} - \bm{u}_{n^\prime}^{(q)} \rVert^2 \lVert \bm{q}_0 - \bm{u}_n^{(q)} \rVert}{\lVert \bm{q}_0 - \bm{u}_{n^\prime}^{(q)} \rVert}
    \leq r_m^2,
\end{align}
where the left side of Equation (\ref{eq_rm_left}) by using the AM bounds can achieve the same function by using the first-order Taylor expansion in \cite{iacovelli2024probability}. This finding demonstrates that our AM bounds can be used as the lower or upper bound of functions in constraints. 

However, we can't obtain a helpful AM bound of $r_m^2$. The reason is that when we apply our bounds to $r_m^2$, the additional variable would be $\frac{r_m}{r_m}$, e.g., 1, which means that the additional variable based on our construction idea would be a constant number. Thus, all HM, AM, and QM bounds are all $r_m^2$. This is the limitation of our proposed bounds, and we will discuss this later. For $r_m^2$, we can use the first-order Taylor expansion to get the lower bound of $r_m^2$. If we fix 
\begin{equation}
    y_n = -\frac{\lVert \bm{q} - \bm{u}_{n^\prime}^{(q)} \rVert}{2  \lVert \bm{q} - \bm{u}_n^{(q)} \rVert},
\end{equation}
the final transformed optimization would be 
\begin{subequations}\label{prob9}
\begin{align}
\min\limits_{\bm{q},\bm{r}} & \sum_{m^{(q)}=1}^{M^{(q)}} \alpha_{m^{(q)}}^{-1} 10^{\frac{\eta}{10}(\lVert \bm{q} - \bm{u}_n^{(q)} \rVert + \lVert \bm{q} - {\bm{u}_{n^\prime}^{(q)}} \rVert)+\beta r_m}
\\
\text{s.t.} &\quad \text{(\ref{constr_q})},\nonumber \\
&\quad \lVert \bm{q} - \bm{u}_n^{(q)} \rVert^2 + \lVert \bm{q} - \bm{u}_{n^\prime}^{(q)} \rVert^2 +  \lVert \bm{q} - \bm{u}_n^{(q)} \rVert^2 \cdot 2 y_n \nonumber \\
&\quad +\frac{\lVert \bm{q} - \bm{u}_{n^\prime}^{(q)} \rVert^2}{2 y_n}
    \leq r_{m,0}^2 + 2 r_{m,0}(r_m - r_{m,0}),
\end{align}
\end{subequations}
where $\bm{r}_0 = [r_{1,0},\cdots,r_{M,0}]^\intercal$ is the local point of the expansion. The transformed Problem (\ref{prob9}) is the same convex problem in \cite{iacovelli2024probability}. The subsequent optimization process is consistent with that in \cite{iacovelli2024probability}, so we omit it here.

From transforming $- 2  \lVert \bm{q} - \bm{u}_n^{(q)} \rVert \lVert \bm{q} - \bm{u}_{n^\prime}^{(q)} \rVert$, we conceive the following finding for our proposed bounds:
\begin{lemma}\label{lemma2}
For the multiplicative term $-A_n(\bm{x})B_n(\bm{x})$, functions $A_n(\bm{x}): \mathbb{R}^M \rightarrow \mathbb{R}_{++}$, $B_n(\bm{x}): \mathbb{R}^M \rightarrow \mathbb{R}_{++}$, $\forall n \in \mathcal{N}$, we obtain that
    \begin{align}
    \frac{2}{\frac{1}{\left(A_n(\bm{x})\right)^2 y_n}+\frac{y_n}{\left(B_n(\bm{x})\right)^2}}
    &\geq -\sqrt{\left(A_n(\bm{x})\right)^2 y_n \cdot \frac{\left(B_n(\bm{x})\right)^2}{y_n}}\nonumber \\  
    &\geq\frac{\left(A_n(\bm{x})\right)^2 y_n+\frac{\left(B_n(\bm{x})\right)^2}{y_n}}{2}\nonumber \\  
    &\geq -\sqrt{\frac{\left(A_n(\bm{x})\right)^4 y_n^2+\frac{\left(B_n(\bm{x})\right)^4}{y_n^2}}{2}},
\end{align}
where 
    $y_n = -\frac{B_n(\bm{x})}{A_n(\bm{x})}$,
the equality holds. In this case, the upper and lower bounds are reversed.
\end{lemma}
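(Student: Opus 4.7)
The plan is to reduce Lemma \ref{lemma2} to the already-established HM-GM-AM-QM chain (\ref{eq_inequality_two_functions}) by a sign-reversal substitution. Write $a=A_n(\bm{x})>0$ and $b=B_n(\bm{x})>0$, and observe that the equality-attaining choice $y_n=-B_n(\bm{x})/A_n(\bm{x})$ forces $y_n<0$. Accordingly, substitute $y_n=-t$ with $t>0$, so that $(A_n)^2 y_n=-a^2 t$ and $(B_n)^2/y_n=-b^2/t$; the two quantities entering the bounds are simply the negatives of the strictly positive numbers $a^2 t$ and $b^2/t$.

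First I would rewrite each of the four expressions in the lemma in terms of $a,b,t$ and pull out an overall minus sign. The HM term becomes $-\,\frac{2}{1/(a^2 t)+t/b^2}$; the GM term, since $(A_n)^2 y_n\cdot(B_n)^2/y_n=a^2 b^2>0$, is $-\sqrt{a^2 b^2}$; the AM term becomes $-\tfrac12(a^2 t+b^2/t)$; and the QM term is $-\sqrt{\tfrac12(a^4 t^2+b^4/t^2)}$. Thus all four quantities in the lemma are exactly the negatives of their counterparts in the standard chain applied to the positive pair $(a^2 t,\,b^2/t)$.

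Next I would invoke the classical HM-GM-AM-QM inequality (the very form already used in (\ref{eq_inequality_two_functions})) for $a^2 t$ and $b^2/t$, which yields
\begin{align}
\frac{2}{\frac{1}{a^2 t}+\frac{t}{b^2}}\;\leq\;\sqrt{a^2 b^2}\;\leq\;\frac{a^2 t+\frac{b^2}{t}}{2}\;\leq\;\sqrt{\frac{a^4 t^2+\frac{b^4}{t^2}}{2}},
\end{align}
with equality throughout iff $a^2 t=b^2/t$, i.e.\ $t=b/a$. Multiplying the entire chain by $-1$ reverses every inequality, and reinserting $t=-y_n$ recovers precisely the four expressions of Lemma \ref{lemma2} in the claimed reversed order. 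The equality condition $t=b/a$ translates back to $y_n=-B_n(\bm{x})/A_n(\bm{x})$, matching the statement.

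There is no serious obstacle: the content of the lemma is entirely a sign-flip of (\ref{eq_inequality_two_functions}). The only point that needs care is the interpretation of the square roots when $y_n<0$: the products inside $\sqrt{\cdot}$ remain positive because the two negative factors cancel, so the roots are well-defined real numbers, and the explicit minus signs in front of the GM and QM terms in the lemma are exactly what is needed for the substitution to produce an equality at $y_n=-B_n(\bm{x})/A_n(\bm{x})$ rather than at $y_n=+B_n(\bm{x})/A_n(\bm{x})$.
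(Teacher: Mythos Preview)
Your proposal is correct and follows essentially the same approach as the paper: the paper's proof simply says to multiply inequality (\ref{eq_inequality_two_functions}) by $-1$ and then take $y_n=-B_n(\bm{x})/A_n(\bm{x})$, which is exactly your sign-reversal substitution $y_n=-t$ carried out in more explicit detail. Your additional remark checking that the arguments of the square roots stay positive when $y_n<0$ is a welcome clarification that the paper omits.
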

\begin{proof}
    Multiply each side of Inequality (\ref{eq_inequality_two_functions}) by a minus sign. Then let the new $y_n = -\frac{B_n(\bm{x})}{A_n(\bm{x})}$. We can easily obtain the \textbf{Lemma \ref{lemma2}}.
\end{proof}
\begin{remark}
    Obviously, \textbf{Lemma \ref{lemma2}} also applies to the HM-GM, QM-GM transforms, and the general multiplicative terms
        $-\sum_{n=1}^N \left(\prod_{k=1}^K f^{(k)}_n(\bm{x})\right)$.
\end{remark}

\subsection{Numerical Results}\label{sec_simple_results}
In this section, we present some simple numerical simulations to show the effectiveness of the proposed SA methods. 
\subsubsection{Minimization Problem}
First, consider the following minimization problem (\ref{prob11}):
\begin{subequations}\label{prob11}
\begin{align}
\min\limits_{x}\quad &
x+\frac{x}{\ln x}+ \frac{x}{\ln x}e^x
\\
\text{s.t.} \quad &  1 < x \leq 10.
\end{align}
\end{subequations}
Problem (\ref{prob11}) is non-convex because of the existence of $\frac{x}{\ln x}$ and $\frac{x}{\ln x}e^x$. We analyze this optimization by the proposed bounds.
Let
\begin{align}
    &\tilde{f}_1^{(1)} = y_1^{(1)} x^2, \tilde{f}_1^{(2)} = \frac{1}{y_1^{(1)} \ln^2 x}, \tilde{f}_2^{(1)} = y_2^{(1)} y_2^{(2)} x^3, \tilde{f}_2^{(2)} = \frac{y_2^{(2)}}{y_2^{(1)}\ln^3 x}, \tilde{f}_2^{(3)} = \frac{e^{3x}}{(y_2^{(2)})^2},
\end{align}
where 
\begin{align}
    &y_1^{(1)} = \frac{1}{x\ln x},  
    y_2^{(1)} = x^{-\frac{3}{2}} (\ln x)^{-\frac{3}{2}},   
    y_2^{(2)} = x^{-\frac{1}{2}}(\ln x)^{\frac{1}{2}}e^x.
\end{align}
The AM and QM surrogates are given as
\begin{align}
    &F^{AM} = x + \frac{1}{2} (\tilde{f}_1^{(1)} + \tilde{f}_1^{(2)}) + \frac{1}{3}(\tilde{f}_2^{(1)} + \tilde{f}_2^{(2)} + \tilde{f}_2^{(3)} ) ,\\
    &F^{QM} = x + \sqrt{\frac{(\tilde{f}_1^{(1)})^2 + (\tilde{f}_1^{(2)})^2}{2}}  + \sqrt{\frac{(\tilde{f}_2^{(1)})^2 + (\tilde{f}_2^{(2)})^2 + (\tilde{f}_2^{(3)})^2 }{3}}.
\end{align}
\begin{figure}[tbp]
\centering
\includegraphics[width=1\textwidth]{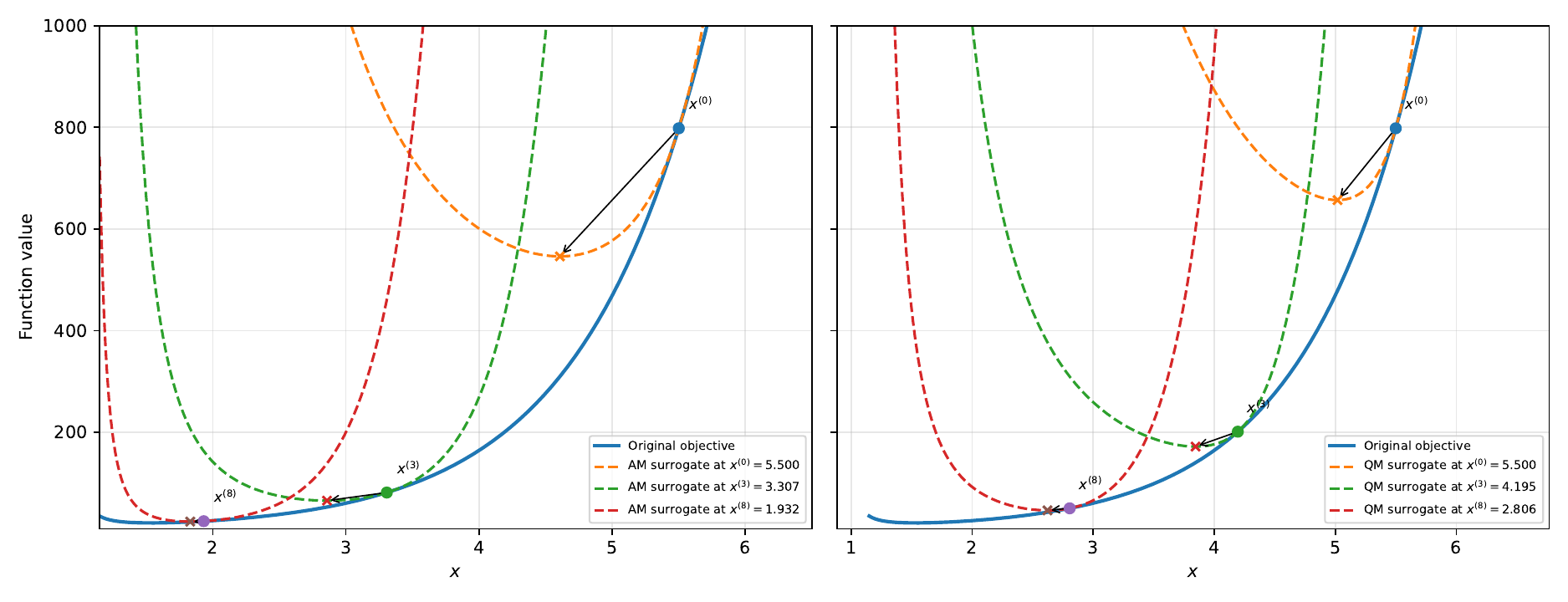}
\vspace{-6pt}\caption{SA procedures for minimization problem (\ref{prob11}) at feasible points based on AM and QM surrogates.}
\label{fig_sur_amqm}
\end{figure}

\textbf{Simulation settings:}
We evaluate the proposed AM-GM- and QM-GM-based SA methods on Problem~(\ref{prob11}). The feasible interval is set as \(1<x\le 10\), and all algorithms are initialized at \(x^{0}=5.5\). For the surrogate illustration, we plot the original objective together with the AM and QM surrogate functions constructed at three representative SA iterates, so as to visualize the tangency between the surrogate and the original objective at the current point.

For convergence comparison, we consider four methods: AM-SA, QM-SA, gradient-based AM-SA, and gradient-based QM-SA. In AM-SA and QM-SA, the transformed surrogate subproblem is solved accurately by running inner gradient descent until the inner gradient norm is below \(10^{-9}\), with a maximum of 3000 inner steps. In the gradient-based variants, only \(M=3\) inner gradient steps are performed at each outer iteration, which represents an efficient inexact update of the surrogate subproblem. All inner gradient updates use backtracking line search with initial stepsize \(0.5\), Armijo parameter \(10^{-4}\), and minimum stepsize \(10^{-14}\). The outer iteration is terminated when the stationarity gap satisfies $|\nabla \Phi(x^{i})|\le 10^{-6}$,
or when the maximum number of outer iterations (set as 100) is reached. The convergence curves report the stationarity gap \(|\nabla \Phi(x^{i})|\) versus the outer iteration index. To compare computational efficiency, each method is independently repeated 30 times, and we report the average running time, standard deviation, average number of outer iterations, average number of inner gradient steps, final objective value, and final stationarity gap.

\begin{figure}[tbp]
\centering
\includegraphics[width=0.6\textwidth]{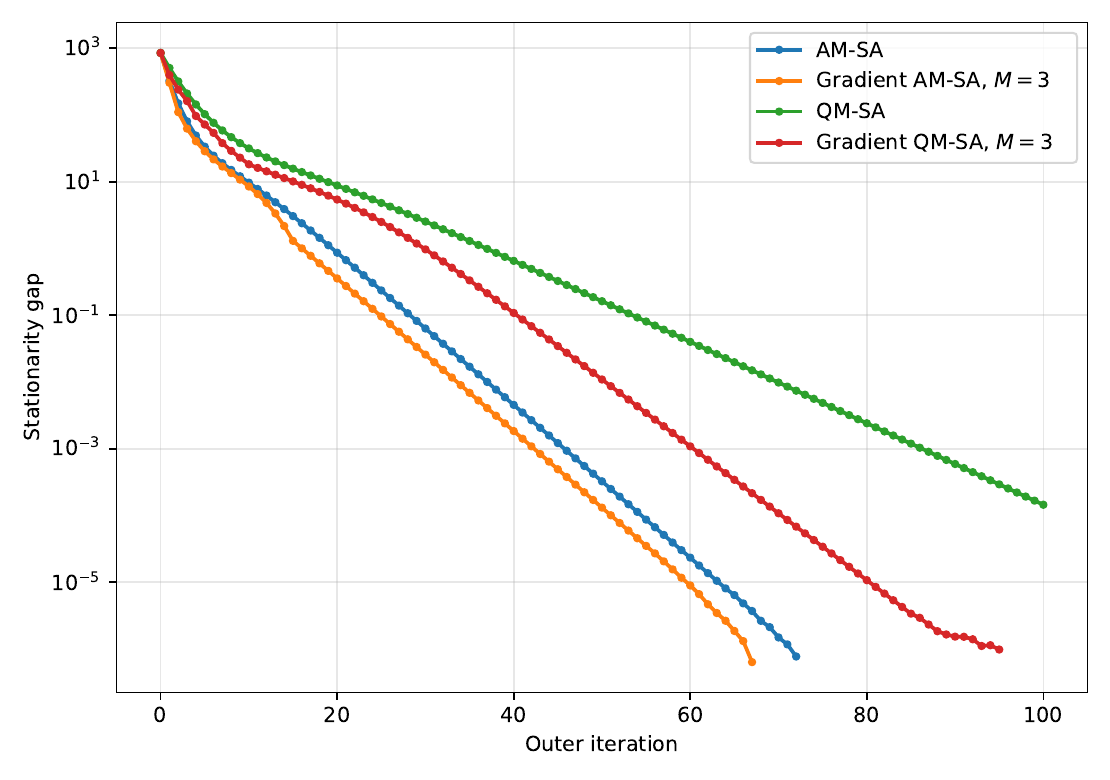}
\vspace{-6pt}\caption{Convergence of the SA method with the proposed AM/QM bound to solve Problem (\ref{prob11}).}
\label{fig_amqm_convergence}
\end{figure}

\begin{table}[t]
\centering
\caption{Time performance comparison of AM-SA and QM-SA methods.}
\label{tab:sca_comparison}
\resizebox{\textwidth}{!}{%
\begin{tabular}{lrrrrrr}
\toprule
Method & Avg. time (s) & Std. time (s) & Avg. outer iters & Avg. inner steps & Avg. final objective & Avg. final stationarity gap \\
\midrule
AM-SA & 5.2524e-01 & 3.3489e-03 & 7.2000e+01 & 2.3800e+03 & 2.1742e+01 & 7.6978e-07 \\
Gradient AM-SA, $M=3$ & 4.5559e-02 & 2.6408e-04 & 6.7000e+01 & 2.0100e+02 & 2.1742e+01 & 6.3410e-07 \\
QM-SA & 1.8305e+00 & 1.3523e-02 & 1.0000e+02 & 6.5740e+03 & 2.1742e+01 & 1.4467e-04 \\
Gradient QM-SA, $M=3$ & 8.1262e-02 & 6.0345e-04 & 9.5000e+01 & 2.8500e+02 & 2.1742e+01 & 9.7886e-07 \\
\bottomrule
\end{tabular}%
}
\end{table}

In Fig. \ref{fig_sur_amqm}, we show the proposed surrogates tightly wrapped around the original function in Problem (\ref{prob11}), and they are all tangent to the chosen feasible points. Note that the AM upper bound function $F^{AM}$ and QM upper bound function $F^{QM}$ are convex over $x$ under the given constraint. Next, we conduct the SA algorithm with the $F^{AM}$ and $F^{QM}$ to find the optimum of Problem (\ref{prob11}).
We transformed Problem (\ref{prob11}) to a new convex \mbox{Problem (\ref{prob11_trans})} as
\begin{subequations}\label{prob11_trans}
\begin{align}
\min\limits_{x, \bm{y}}\quad & F^{AM/QM}
\\
\text{s.t.} \quad &  1 < x \leq 10.
\end{align}
\end{subequations}



Fig.~\ref{fig_amqm_convergence} shows the convergence behavior in terms of the stationarity gap. All methods reduce the stationarity gap steadily, confirming that the proposed SA framework can effectively approach a stationary point. The AM-based methods converge faster than the QM-based methods, which is consistent with the tighter AM surrogate observed in Fig.~\ref{fig_sur_amqm}. Moreover, the gradient-based AM-SA with only \(M=3\) inner gradient steps closely follows the convergence behavior of AM-SA, which means that accurately solving each surrogate subproblem is not always necessary.

In Table~\ref{tab:sca_comparison}, we show that all methods reach almost the same final objective value, around \(21.742\), confirming the effectiveness of both AM- and QM-based SA methods. However, the gradient-based variants are significantly faster. Gradient AM-SA reduces the average running time from \(0.525\) s to \(0.0456\) s, while gradient QM-SA reduces it from \(1.83\) s to \(0.0813\) s. This corresponds to about \(11.5\times\) and \(22.5\times\) speedup, respectively.

\subsubsection{Maximization Problem}

Next, consider the following maximization problem:
\begin{subequations}\label{prob_max}
\begin{align}
\max_{x}\quad &
-\rho(x-c)^2+\frac{\ln x}{x}+\frac{\ln x}{x}e^{-x}
\\
\text{s.t.}\quad & 1<x\le 10,
\end{align}
\end{subequations}
where \(\rho>0\) and \(c>1\). In the simulation, we set \(\rho=10^{-3}\) and \(c=4\). Problem~\eqref{prob_max} is non-concave because it contains the coupled fractional/product terms \(\ln x/x\) and \((\ln x/x)e^{-x}\).
Let
\begin{align}
    &\tilde{f}_1^{(1)}=y_1^{(1)}(\ln x)^2,
    \tilde{f}_1^{(2)}=\frac{1}{y_1^{(1)}x^2},
    \tilde{f}_2^{(1)}=y_2^{(1)}y_2^{(2)}(\ln x)^3,
    \tilde{f}_2^{(2)}=\frac{y_2^{(2)}}{y_2^{(1)}x^3},
    \tilde{f}_2^{(3)}=\frac{e^{-3x}}{(y_2^{(2)})^2},
\end{align}
where
\begin{align}
    &y_1^{(1)}=\frac{1}{x\ln x},y_2^{(1)}=\left(\frac{1}{x\ln x}\right)^{3/2},y_2^{(2)}=\sqrt{\frac{x}{\ln x}}e^{-x}.
\end{align}
Then the HM lower-bound surrogate is
\begin{align}
F^{HM}
=-\rho(x-c)^2
+\frac{2}{\frac{1}{\tilde{f}_1^{(1)}}+\frac{1}{\tilde{f}_1^{(2)}}}
+
\frac{3}{\frac{1}{\tilde{f}_2^{(1)}}+\frac{1}{\tilde{f}_2^{(2)}}+\frac{1}{\tilde{f}_2^{(3)}}}.
\end{align}
Based on the HM lower-bound surrogate, we can convert the Problem (\ref{prob_max}) into the following problem:
\begin{subequations}\label{prob_max_convert}
\begin{align}
\max_{x,\bm{y}}\quad &
F^{HM} \\
\text{s.t.}\quad & 1<x\le 10.
\end{align}
\end{subequations}
At each SA iteration, the auxiliary variables are updated at the current point, and the resulting lower-bound surrogate is maximized. Simulation settings are the same as those in the minimization problem.

\begin{figure}[tbp]
\centering
\includegraphics[width=0.6\textwidth]{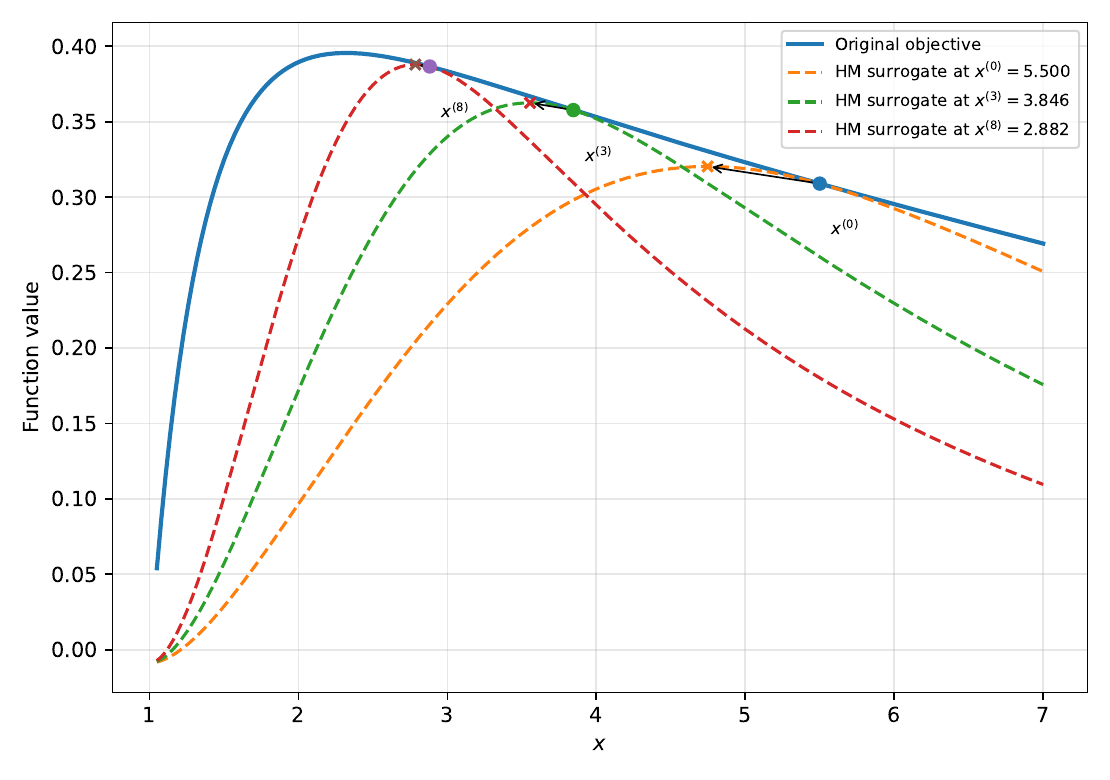}
\vspace{-6pt}\caption{SA procedures for maximization problem (\ref{prob_max}) at feasible points based on the HM surrogate.}
\label{fig_sur_hm}
\end{figure}
Fig.~\ref{fig_sur_hm} illustrates the SA procedure based on the proposed HM surrogate for Problem~(\ref{prob_max}). The HM surrogate is constructed as a lower-bound function for the maximization problem. At each selected feasible point, the surrogate is tight and tangent to the original objective, which verifies the value-matching and first-order consistency properties required by SA. As the iteration proceeds, the tangent point moves from the initial point toward the neighborhood of the maximizer, showing how the HM-based minorization guides the update of \(x\).

\begin{figure}[tbp]
\centering
\includegraphics[width=0.6\textwidth]{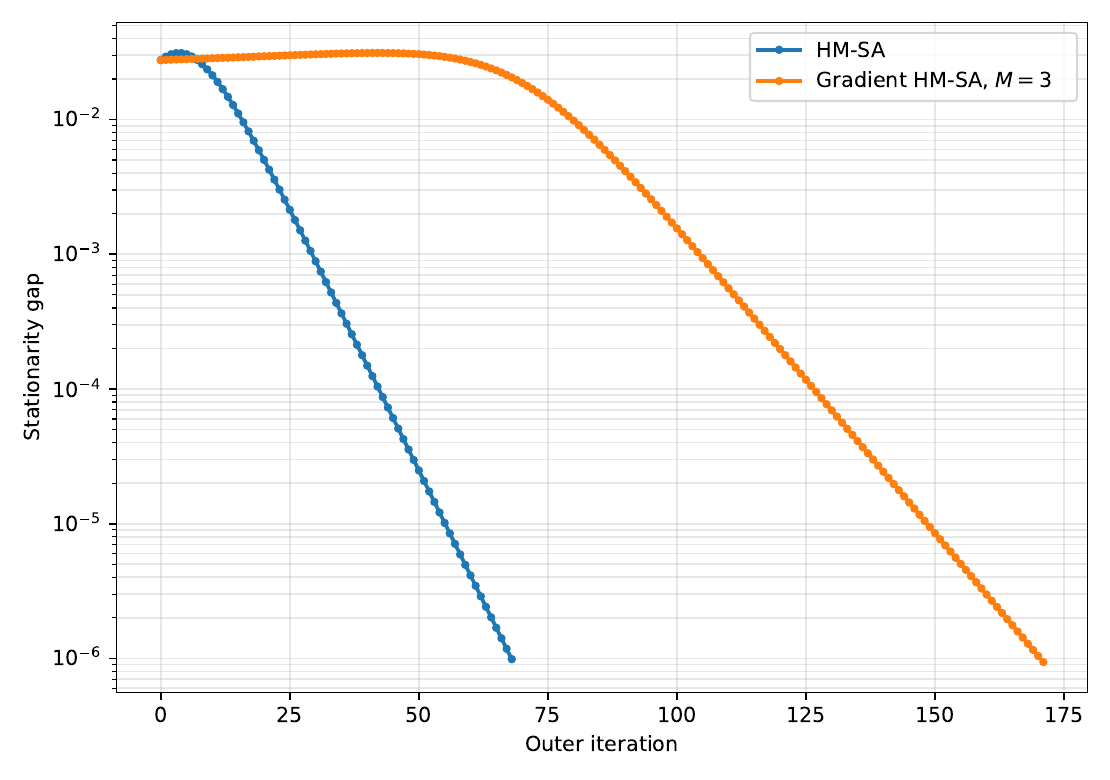}
\vspace{-6pt}\caption{Convergence of the SA method with the proposed HM surrogate to solve Problem (\ref{prob_max_convert}).}
\label{fig_hm_convergence}
\end{figure}

\begin{table}[t]
\centering
\caption{Time performance comparison of HM-SA methods.}
\label{tab:sca_time_hm}
\resizebox{\textwidth}{!}{%
\begin{tabular}{lrrrrrr}
\toprule
              Method &  Avg. time (s) &  Std. time (s) &  Avg. outer iters &  Avg. inner steps &  Avg. final objective &  Avg. final stationarity gap \\
\midrule
              HM-SA &     3.0619e-01 &     1.6609e-03 &        6.8000e+01 &        5.1080e+03 &            3.9557e-01 &                   9.8693e-07 \\
Gradient HM-SA, M=3 &     3.2560e-02 &     1.6656e-04 &        1.7100e+02 &        5.1300e+02 &            3.9557e-01 &                   9.3861e-07 \\
\bottomrule
\end{tabular}%
}
\end{table}

In Fig.~\ref{fig_hm_convergence}, the convergence behavior in terms of the stationarity gap is presented. Both HM-SA and gradient HM-SA steadily reduce the stationarity gap and finally reach the prescribed tolerance around \(10^{-6}\). HM-SA converges in fewer outer iterations because it solves each surrogate subproblem more accurately. In contrast, gradient HM-SA uses only \(M=3\) local gradient steps at each outer iteration, so it requires more outer iterations but has much lower per-iteration cost.

Table~\ref{tab:sca_time_hm} further confirms this trade-off. Both methods reach the same final objective value, \(3.9557\times 10^{-1}\), and comparable final stationarity gaps below \(10^{-6}\). However, gradient HM-SA reduces the average running time from \(3.0619\times 10^{-1}\) s to \(3.2560\times 10^{-2}\) s, achieving about \(9.4\times\) speedup. This is mainly because the total number of inner gradient steps is reduced from \(5.1080\times 10^{3}\) to \(5.1300\times 10^{2}\). These results show that the gradient-based HM-SA method provides a more efficient implementation while maintaining almost the same solution quality.

\section{Applications and Numerical Studies}\label{sec_application}
In this section, we present several application scenarios where the proposed transforms can be used. Note that if the transformed problem is convex, then we will write its SA method as SCA method.
\subsection{Minimization of Transmission Energy in Wireless Communications}\label{sec_minimization_transmission_energy}
We will show how our proposed AM upper bound can be applied to minimize transmission energy in wireless communication between mobile users and the edge server. 

\subsubsection{Problem Statement}
Consider a system consisting of $N$ mobile users and one server. $n$ is used as the indices for a specific user, where $n \in \mathcal{N}$. Frequency-division multiple access (FDMA) is considered in this system so that communication between users and the server would not interfere. The transmission rate from the user $n$ to the server is $r_n = b_n \log_2 (1+\frac{g_n p_n}{b_n \sigma^2})$ based on the Shannon formula, where $b_n$ is the allocated bandwidth between the server and user $n$, $p_n$ is the transmit power of user $n$, $g_n$ is the channel attenuation between the server and the user $n$, and $\sigma^2$ is the noise power spectral density. Denote $d_n$ as the data that the user $n$ offloaded to the server, $b_{max}$ as the total allocated bandwidth between the users and the server, and $p_{max}$ as the maximum available transmit power of each user. Given this information, the minimization problem would be 
\begin{subequations}\label{prob5}
\begin{align}
\min\limits_{\bm{b},\bm{p}} &\quad \sum_{n=1}^N \frac{p_n d_n}{b_n \log_2 (1+\frac{g_n p_n}{b_n \sigma^2})}
\\
\text{s.t.} &\quad  \sum_{n=1}^N b_n \leq b_{max}, \label{constr_b_max}\\
&\quad\, p_n \leq p_{max}.\label{constr_p_max}
\end{align}
\end{subequations}
$p_n$ and $b_n$ are assumed to be positive. Generally, this sum-of-ratios minimization problem is non-convex, and NP-complete \cite{freund2001solving}. Thus, it is difficult to solve directly. 

\subsubsection{SCA Method Based on the Proposed AM-GM Transform}
Based on our proposed AM upper bound (\ref{bounds_K}), we can transform it into the following optimization:
\begin{subequations}\label{prob6}
\begin{align}
\min\limits_{\bm{b},\bm{p},\bm{y}} &\quad \sum_{n=1}^N d_n^2 p_n^2 y_n + \frac{1}{4 y_n \left(b_n \log_2 (1+\frac{g_n p_n}{b_n \sigma^2})\right)^2}
\tag{\ref{prob6}}\\
\text{s.t.} &\quad  \text{(\ref{constr_b_max})},\text{(\ref{constr_p_max})}.\nonumber
\end{align}
\end{subequations}
where if given one feasible point $(\bm{b},\bm{p})$, $y_n$ can be fixed as 
\begin{align}
    y_n = \frac{1}{2 d_n p_n b_n \log_2 (1+\frac{g_n p_n}{b_n \sigma^2})}.\label{eq_y_energymin}
\end{align}
Since $b_n \log_2 (1+\frac{g_n p_n}{b_n \sigma^2})$ is jointly concave of $(b_n,p_n)$ \cite{qian2024user}, $\frac{1}{\left(b_n \log_2 (1+\frac{g_n p_n}{b_n \sigma^2})\right)^2}$ is jointly convex of $(b_n,p_n)$ according to the scalar composition rule in \cite{boyd2004convex}. Therefore, the optimization problem (\ref{prob6}) is a convex optimization, which can be solved by common convex tools, e.g., CVX \cite{grant2014cvx}. At least a stationary point of Problem (\ref{prob5}) is guaranteed by solving Problem (\ref{prob6}). 






\subsubsection{Parameter Setting}
The number of mobile users $N$ is set as 40. We consider denoting $g_n$ as $h_n l_n$, where $h_n$ is the large-scale slow-fading component capturing effects of path loss and shadowing, and $l_n$ is the small-scale Rayleigh fading. $h_n$ is given as $128.1+37.6\log_2 d^{(o)}_n$ in \cite{qian2024user}, where $d^{(o)}_n$ is the Euclidean distance between the user $n$ and the server. Gaussian noise power $\sigma^2$ is $-134$ dBm. The maximum bandwidth $b_{max}$ is assumed to be $10$ MHz. The maximum transmit power $p_{max}$ of each mobile user is $10$ W. The data size of the mobile user $d_n$ is randomly selected from $[500\text{KB},2000\text{KB}]$. The Mosek optimization tool in Matlab is used to conduct the simulations. The tolerant error gap $\epsilon$ is set as $10^{-4}$. We set the starting point as the value under the average allocation policy.

\subsubsection{Numerical Results}
\begin{figure}[tbp]
\centering
\includegraphics[width=0.6\textwidth]{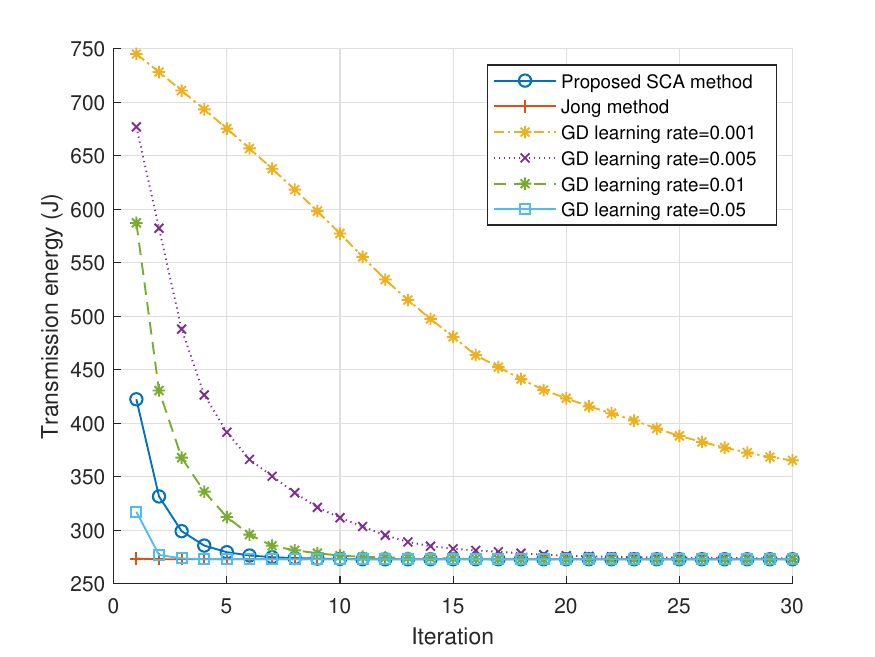}
\vspace{-6pt}\caption{Convergence of different algorithms.}
\label{fig_convergence_algorithms}
\end{figure}
Apart from the proposed SCA method, we also consider the following benchmarks: 1. The classical gradient descent (GD) method \cite{bottou2012stochastic}; 2. The method proposed in \cite{jong2012efficient}, where a global optimum is guaranteed to converge for minimization problem with only sum-of-ratios in the $\frac{\text{convex}}{\text{concave}}$ form. For clarity, we denote this method as the Jong method. The solution derived by the Jong method is considered the global optimum. In Fig. \ref{fig_convergence_algorithms}, the convergence of these algorithms is presented.

In Fig. \ref{fig_convergence_algorithms}, all algorithms can converge to the global optimum because the ratio term in Problem (\ref{prob5}) is actually a pseudoconvex function. Thus, the stationary point is equivalent to the optimal point. As for convergence speed, the Jong method converges fastest, followed by the GD method with a learning rate of 0.05 and the proposed SCA method. Although the Jong method converges fastest, it introduces one more additional variable than our proposed SCA method. Details about the Jong method can be found in \cite{jong2012efficient}. The GD with a large enough learning rate can converge faster than the proposed SCA method in Problem (\ref{prob5}). However, the convergence of the GD method is generally not guaranteed. Yet, a stationary point is guaranteed to be obtained by using the proposed SCA method with the AM bound. Besides, the Jong method can only solve specific sum-of-ratios (i.e., $\frac{\text{convex}}{\text{concave}}$) optimization and our proposed SCA method can be applied to the optimization of sum-of-ratios (products) with other general functions, e.g., Problem (\ref{prob11}).

\subsection{Age-of-Information Minimization in Status Update Networks}\label{sec_aoi}

We next consider an information-freshness optimization problem in a status update system. Age of information (AoI) is a widely used metric to quantify the freshness of received information. A smaller AoI implies that the destination has more up-to-date information about the monitored source.

\subsubsection{Problem Statement}

Consider a status update system with \(K\) source nodes and one edge server. Source \(k\) generates update packets with rate \(\lambda_k\), and the server processes received packets with service rate \(\mu\). Following the priority-based M/M/1 queueing model, define
\begin{equation}
\rho_k = \frac{\lambda_k}{\mu},
\qquad
\hat{\rho}_k = \sum_{j=1}^{k-1}\rho_j .
\end{equation}
The average AoI of source \(k\) can be written as
\begin{equation}
\bar{\Delta}_k
=
\frac{\hat{\rho}_k^2+3\hat{\rho}_k+1}{\mu(1+\hat{\rho}_k)}
+
\frac{(\hat{\rho}_k+1)^2}{\mu\rho_k}.
\end{equation}
Then the sum-AoI minimization problem is formulated as
\begin{subequations}\label{prob:aoi}
\begin{align}
\min_{\bm{\lambda}} \quad &
\sum_{k=1}^{K}
\left(
\frac{\hat{\rho}_k^2+3\hat{\rho}_k+1}{\mu(1+\hat{\rho}_k)}
+
\frac{(\hat{\rho}_k+1)^2}{\mu\rho_k}
\right) \\
\text{s.t.}\quad &
\lambda_{\min}\le \lambda_k \le \lambda_{\max},\quad \forall k,\\
&
\sum_{k=1}^{K}\lambda_k \le \eta\mu,
\end{align}
\end{subequations}
where \(0<\eta<1\) is used to guarantee queue stability. Problem~\eqref{prob:aoi} is a sum-of-ratios minimization problem and is generally non-convex.

\subsubsection{Problem Transformation}

For notational simplicity, define
\begin{equation}
A_{k,1}(\bm{\lambda})=\hat{\rho}_k^2+3\hat{\rho}_k+1,
\qquad
B_{k,1}(\bm{\lambda})=\mu(1+\hat{\rho}_k),
\end{equation}
and
\begin{equation}
A_{k,2}(\bm{\lambda})=(\hat{\rho}_k+1)^2,
\qquad
B_{k,2}(\bm{\lambda})=\mu\rho_k .
\end{equation}
Then the AoI objective can be rewritten as
\begin{equation}
\sum_{k=1}^{K}
\left(
\frac{A_{k,1}(\bm{\lambda})}{B_{k,1}(\bm{\lambda})}
+
\frac{A_{k,2}(\bm{\lambda})}{B_{k,2}(\bm{\lambda})}
\right).
\end{equation}
Using the proposed AM-GM transform, each ratio term can be upper-bounded as
\begin{equation}
\frac{A_{k,l}(\bm{\lambda})}{B_{k,l}(\bm{\lambda})}
\le
y_{k,l}\left(A_{k,l}(\bm{\lambda})\right)^2
+
\frac{1}{4y_{k,l}\left(B_{k,l}(\bm{\lambda})\right)^2},
\qquad l\in\{1,2\},
\end{equation}
where the equality holds when
\begin{equation}
y_{k,l}^{i}
=
\frac{1}{2A_{k,l}(\bm{\lambda}^{i})B_{k,l}(\bm{\lambda}^{i})}.
\end{equation}
At iteration \(i\), with fixed \(y_{k,l}^{i}\), Problem~\eqref{prob:aoi} is approximated by
\begin{subequations}\label{prob:aoi_trans}
\begin{align}
\min_{\bm{\lambda}} \quad &
\sum_{k=1}^{K}\sum_{l=1}^{2}
\left[
y_{k,l}^{i}
\left(A_{k,l}(\bm{\lambda})\right)^2
+
\frac{1}{4y_{k,l}^{i}
\left(B_{k,l}(\bm{\lambda})\right)^2}
\right]\\
\text{s.t.}\quad &
\lambda_{\min}\le \lambda_k \le \lambda_{\max},\quad \forall k,\\
&
\sum_{k=1}^{K}\lambda_k \le \eta\mu .
\end{align}
\end{subequations}
This transformed problem is a convex surrogate under the convex-concave ratio structure. Therefore, the proposed AM-SCA method can be used to solve Problem~\eqref{prob:aoi}.

\subsubsection{SCA Method Based on the Proposed AM-GM Transform}

The SCA procedure is summarized as follows. First, initialize a feasible update-rate vector \(\bm{\lambda}^{0}\). At the \(i\)-th iteration, update the auxiliary variables by
\begin{equation}
y_{k,l}^{i}
=
\frac{1}{2A_{k,l}(\bm{\lambda}^{i})B_{k,l}(\bm{\lambda}^{i})},
\qquad k=1,\ldots,K,\quad l=1,2 .
\end{equation}
Then solve the transformed problem~\eqref{prob:aoi_trans} to obtain \(\bm{\lambda}^{i+1}\). The procedure is repeated until the objective value converges. Alternatively, the gradient-based AM-SCA method can be used by replacing the exact solution of Problem~\eqref{prob:aoi_trans} with several gradient steps.

\subsubsection{Parameter Setting}

Unless otherwise specified, the service rate is set as \(\mu=1\). The number of sources is varied from \(K=3\) to \(K=10\). The stability parameter is set as \(\eta=0.95\), and the update-rate constraints are given by \(\lambda_{\min}=10^{-3}\) and \(\lambda_{\max}=\eta\mu\). The initial point is set as the equal-rate allocation
\begin{equation}
\lambda_k^{0}=\frac{\eta\mu}{K},\quad \forall k .
\end{equation}
For the gradient-based AM-SCA method, the number of inner gradient steps is set as \(M=3\). The stopping tolerance is set as \(10^{-4}\), and the maximum number of SCA iterations is set as 100.

We compare the proposed method with the following baselines. The equal-rate scheme sets all update rates to the same value and performs a one-dimensional search over the common rate. The max-rate scheme sets all update rates to the largest feasible equal rate. We also include the inverse quadratic transform as a representative FP-based baseline for sum-of-ratios minimization.

\subsubsection{Numerical Results}
\begin{figure}[tbp]
\centering
\includegraphics[width=0.6\textwidth]{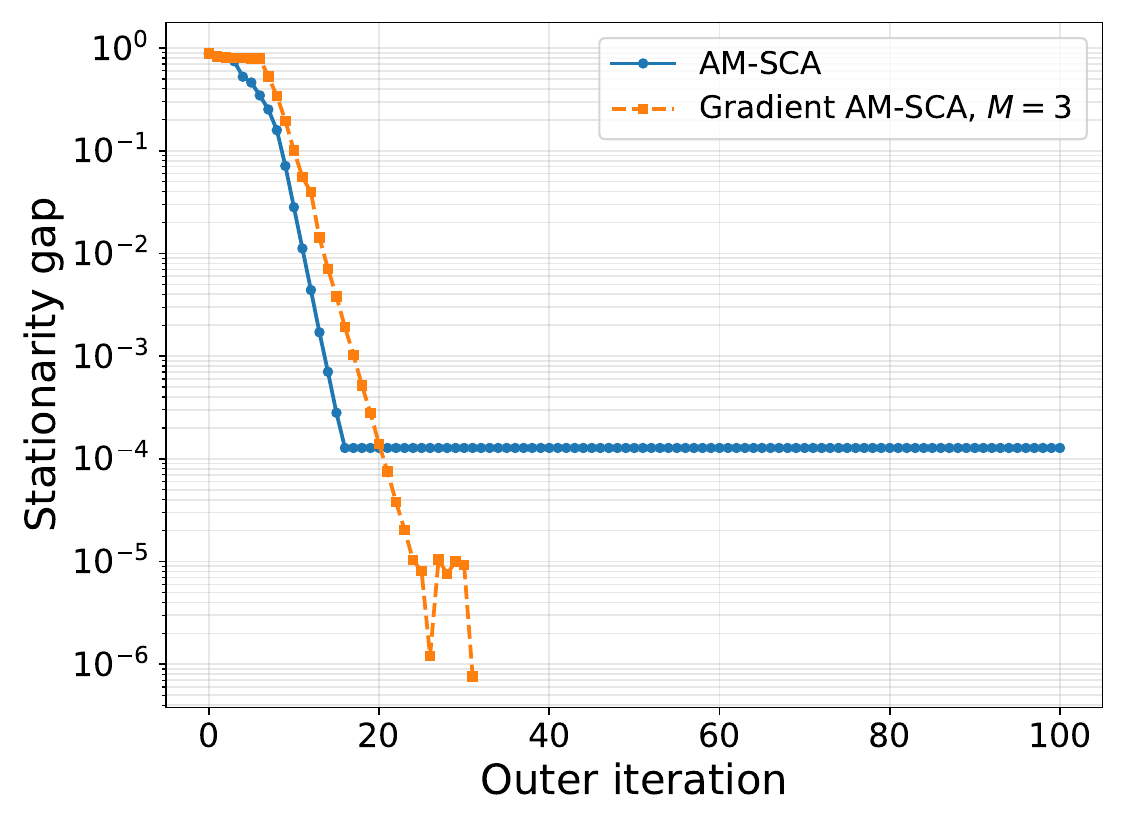}
\vspace{-6pt}\caption{Convergence of the proposed AM-SCA methods.}
\label{fig_aoi_convergence}
\end{figure}
\begin{figure}[tbp]
\centering
\includegraphics[width=0.6\textwidth]{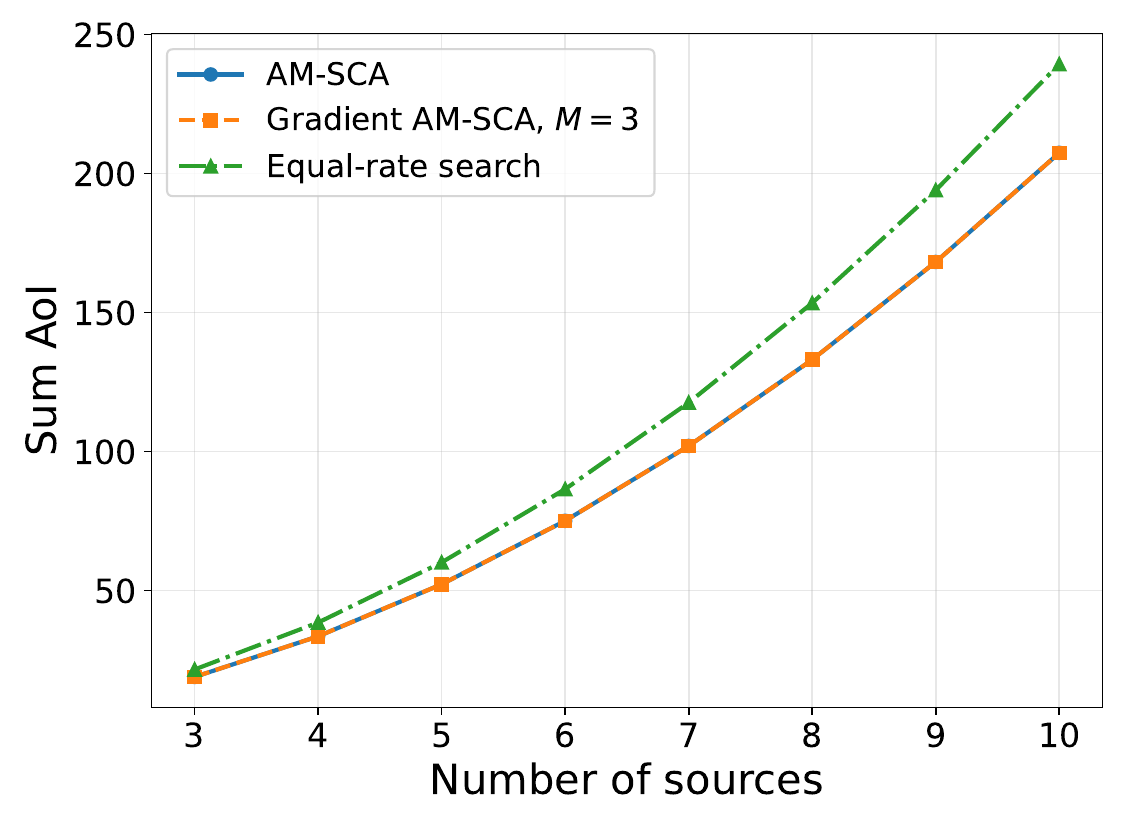}
\vspace{-6pt}\caption{Sum AoI under different numbers of source nodes.}
\label{fig_aoi_k}
\end{figure}
Fig.~\ref{fig_aoi_convergence} shows the convergence behavior of the proposed AM-SCA methods. Both AM-SCA and gradient AM-SCA reduce the stationarity gap effectively, confirming that the proposed AM-GM transform can guide the iterates toward a stationary point. The gradient-based AM-SCA with only \(M=3\) inner gradient steps reaches a comparable, and even smaller, final stationarity gap, which indicates that accurately solving each surrogate subproblem is not always necessary in this example. Fig.~\ref{fig_aoi_k} further compares the sum AoI under different numbers of source nodes. The proposed AM-SCA and gradient AM-SCA achieve nearly identical AoI values, and both consistently outperform the equal-rate search baseline. The performance gap becomes more visible as the number of sources increases, showing that the proposed SCA methods can better exploit the heterogeneous update-rate structure than the simple equal-rate allocation.

\subsection{Semantic Information Utility Maximization in Wireless Networks}\label{sec_semantic}

We next consider a semantic communication system, where users transmit semantic information to an edge server over wireless links. Different from conventional bit-level transmission, the utility of semantic communication depends not only on the physical transmission rate, but also on the semantic compression and semantic recovery quality. This naturally leads to coupled multiplicative and fractional terms.

\subsubsection{Problem Statement}

Consider a system with \(N\) users and one edge server. Let \(b_n\), \(p_n\), and \(s_n\) denote the bandwidth, transmit power, and semantic encoding level of user \(n\), respectively. The transmission rate of user \(n\) is given by
\begin{equation}
R_n(b_n,p_n)
=
b_n\log_2\left(1+\frac{g_np_n}{b_n\sigma^2}\right),
\end{equation}
where \(g_n\) is the channel gain and \(\sigma^2\) is the noise power spectral density. The semantic recovery quality is modeled as
\begin{equation}
Q_n(s_n)=1-e^{-\kappa_n s_n},
\end{equation}
where \(\kappa_n>0\) measures the semantic extraction efficiency. The required semantic data size after encoding is denoted by
\begin{equation}
D_n(s_n)=D_n^{0}(1+\theta_n s_n),
\end{equation}
where \(D_n^{0}>0\) is the basic data size and \(\theta_n>0\) captures the additional semantic description overhead.

The semantic information utility of user \(n\) is defined as
\begin{equation}
\log\left(
1+
\frac{Q_n(s_n)R_n(b_n,p_n)}{D_n(s_n)}
\right).
\end{equation}
Then, the semantic information utility maximization problem is formulated as
\begin{subequations}\label{prob_semantic}
\begin{align}
\max_{\bm{b},\bm{p},\bm{s}}\quad
&
\sum_{n=1}^{N}
w_n
\log\left(
1+
\frac{Q_n(s_n)R_n(b_n,p_n)}{D_n(s_n)}
\right) \\
\text{s.t.}\quad
&
\sum_{n=1}^{N} b_n \le B_{\max}, \\
&
\sum_{n=1}^{N} p_n \le P_{\max}, \\
&
0<b_n\le b_{\max},\quad 0<p_n\le p_{\max},\quad \forall n,\\
&
s_{\min}\le s_n\le s_{\max},\quad \forall n .
\end{align}
\end{subequations}
Problem~\eqref{prob_semantic} is non-convex because the objective contains the logarithm of a coupled product-ratio term involving semantic quality, transmission rate, and semantic data size.

\subsubsection{Problem Transformation}

Define
\begin{equation}
S_n(\bm{b},\bm{p},\bm{s})
=
\frac{Q_n(s_n)R_n(b_n,p_n)}{D_n(s_n)}.
\end{equation}
This term can be regarded as a product of three positive functions:
\begin{equation}
S_n(\bm{b},\bm{p},\bm{s})
=
f_n^{(1)}(s_n)
f_n^{(2)}(b_n,p_n)
f_n^{(3)}(s_n),
\end{equation}
where
\begin{equation}
f_n^{(1)}(s_n)=Q_n(s_n),\quad
f_n^{(2)}(b_n,p_n)=R_n(b_n,p_n),\quad
f_n^{(3)}(s_n)=\frac{1}{D_n(s_n)}.
\end{equation}

For the maximization problem, we use the HM-GM lower bound. At iteration \(i\), define
\begin{align}
\tilde f_n^{(1)}
&=
y_n^{(1)}y_n^{(2)}
\left(f_n^{(1)}(s_n)\right)^3,\\
\tilde f_n^{(2)}
&=
\frac{y_n^{(2)}}{y_n^{(1)}}
\left(f_n^{(2)}(b_n,p_n)\right)^3,\\
\tilde f_n^{(3)}
&=
\frac{
\left(f_n^{(3)}(s_n)\right)^3
}
{
\left(y_n^{(2)}\right)^2
}.
\end{align}
Then the HM lower bound of \(S_n(\bm{b},\bm{p},\bm{s})\) is
\begin{equation}
F_n^{\mathrm{HM}}
=
\frac{3}{
\frac{1}{\tilde f_n^{(1)}}+
\frac{1}{\tilde f_n^{(2)}}+
\frac{1}{\tilde f_n^{(3)}}
}.
\end{equation}
Equation (\ref{y_n_closedform}) gives
\begin{align}
y_n^{(1)}
&=
\sqrt{
\frac{
f_n^{(2)}(b_n,p_n)
}{
f_n^{(1)}(s_n)
}
},\\
y_n^{(2)}
&=
\left(y_n^{(1)}\right)^{1/3}
\left(
\frac{
f_n^{(3)}(s_n)
}{
f_n^{(2)}(b_n,p_n)
}
\right)^{1/3}.
\end{align}
At the current feasible point
\((\bm{b}^{i},\bm{p}^{i},\bm{s}^{i})\), the auxiliary variables are updated by the above equations.
Since \(\log(1+z)\) is increasing, we have
\begin{equation}
\log\left(1+F_n^{\mathrm{HM}}\right)
\le
\log\left(1+S_n(\bm{b},\bm{p},\bm{s})\right).
\end{equation}
Therefore, the original problem can be approximated by the following SA subproblem:
\begin{subequations}\label{prob_semantic_trans}
\begin{align}
\max_{\bm{b},\bm{p},\bm{s},\bm{y}}\quad
&
\sum_{n=1}^{N}
w_n
\log\left(
1+
F_n^{\mathrm{HM}}(\bm{b},\bm{p},\bm{s},\bm{y}_n^{i})
\right) \\
\text{s.t.}\quad
&
\sum_{n=1}^{N} b_n \le B_{\max}, \\
&
\sum_{n=1}^{N} p_n \le P_{\max}, \\
&
0<b_n\le b_{\max},\quad 0<p_n\le p_{\max},\quad \forall n,\\
&
s_{\min}\le s_n\le s_{\max},\quad \forall n .
\end{align}
\end{subequations}
The surrogate objective is a tight lower bound of the original objective at the current iterate, and thus can be used in an SA procedure.

\subsubsection{SA Method Based on the Proposed HM-GM Transform}

The proposed HM-GM transform provides a tight lower-bound surrogate for the coupled positive term inside the logarithm. However, the resulting transformed objective is not guaranteed to be concave in general. Therefore, instead of solving the transformed subproblem exactly, we apply the proposed gradient-based HM-SA method.

At iteration \(i\), the auxiliary variables are first updated according to Equation (\ref{y_n_closedform}) at the current feasible point. Then, with the auxiliary variables fixed, several projected gradient ascent steps are performed on the transformed surrogate objective. The updated point is used to construct the next surrogate. Since the surrogate is tight and first-order consistent at the current iterate, the resulting gradient-based SA method can be used to approach an \(\epsilon\)-stationary point under the standard smoothness and boundedness assumptions discussed in Section~\ref{sec_max_sca}.

\subsubsection{Parameter Setting}

Unless otherwise specified, the number of users is set as \(N=20\). The total bandwidth is \(B_{\max}=10\) MHz, and the total power budget is \(P_{\max}=10\) W. The maximum bandwidth and power of each user are set as \(b_{\max}=2\) MHz and \(p_{\max}=1\) W. The noise power spectral density is set as \(-134\) dBm/Hz. The channel gain \(g_n\) is generated according to a distance-dependent path-loss model with small-scale Rayleigh fading. The semantic parameters \(\kappa_n\), \(D_n^0\), and \(\theta_n\) are randomly generated from prescribed positive intervals. The semantic encoding level satisfies \(s_{\min}\le s_n\le s_{\max}\), where \(s_{\min}=0.1\) and \(s_{\max}=1\). The initial point is set by equal bandwidth allocation, equal power allocation, and \(s_n=(s_{\min}+s_{\max})/2\).

We compare the proposed gradient HM-SA methods with the following baselines:
\begin{itemize}
\item \textbf{Equal allocation}: bandwidth and power are equally allocated among users, and the semantic encoding level is fixed at the midpoint.
\item \textbf{Communication-only allocation}: the semantic encoding level is fixed, and only bandwidth and power are optimized.
\item \textbf{Direct gradient ascent}: projected gradient ascent is directly applied to the original objective without using the proposed HM-GM surrogate.
\end{itemize}

\subsubsection{Numerical Results}
\begin{figure}[tbp]
\centering
\includegraphics[width=0.6\textwidth]{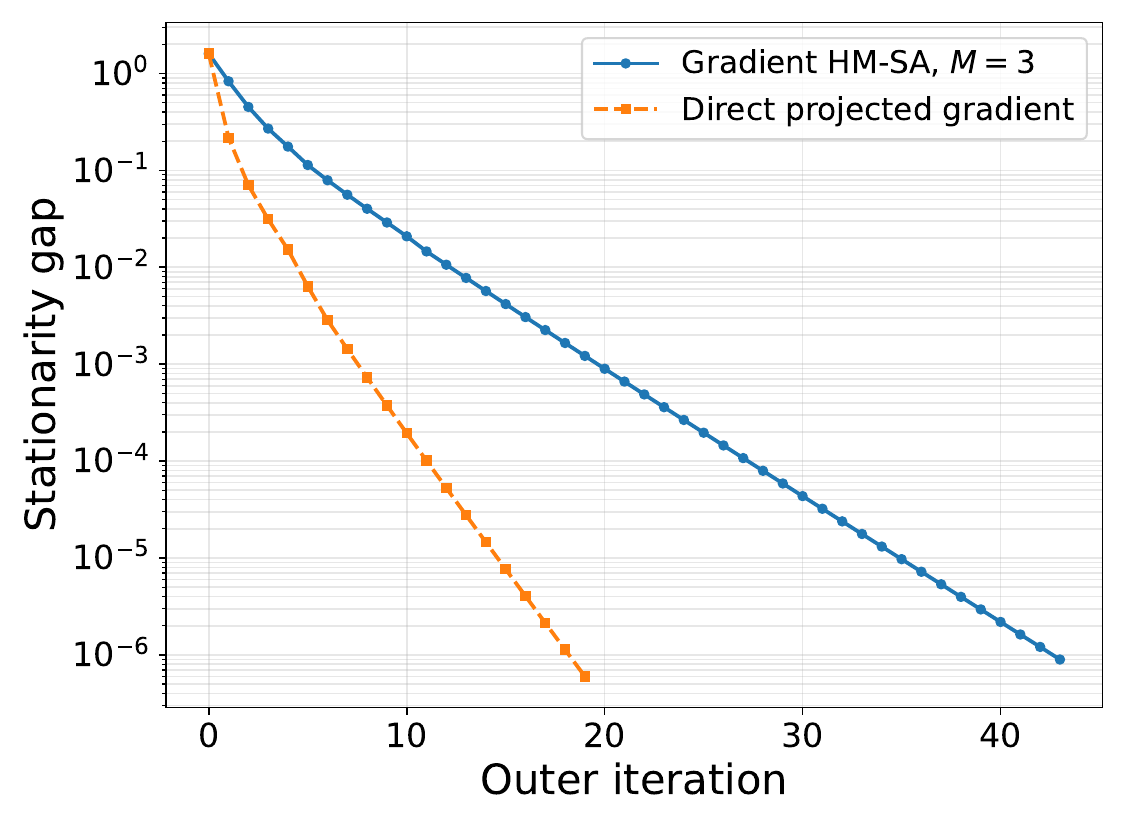}
\vspace{-6pt}\caption{Convergence of the proposed HM-SA methods and gradient baseline.}
\label{fig_semantic_convergence}
\end{figure}
\begin{figure}[tbp]
\centering
\includegraphics[width=0.6\textwidth]{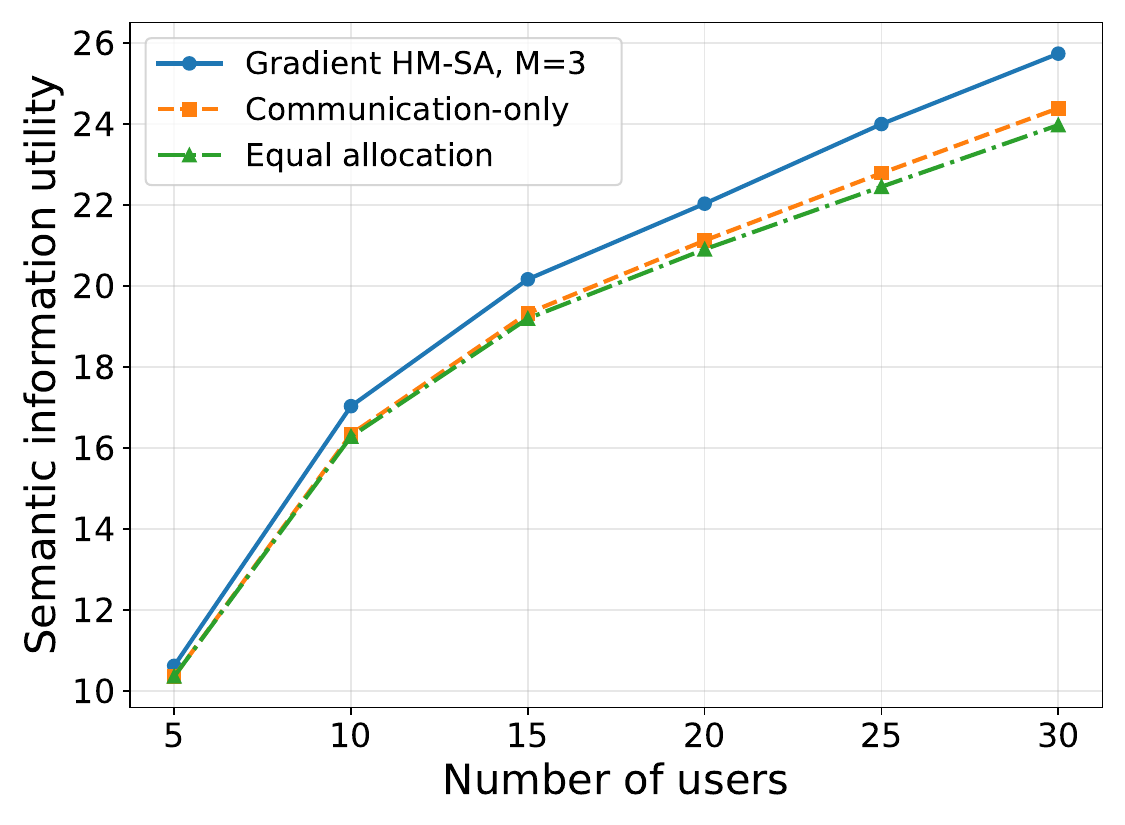}
\vspace{-6pt}\caption{Semantic information utility under different numbers of users.}
\label{fig_semantic_user}
\end{figure}
Fig.~\ref{fig_semantic_convergence} shows the stationarity-gap behavior of the gradient-based HM-SA method and the direct projected-gradient baseline. Since the transformed logarithmic HM surrogate is not guaranteed to be concave, the proposed method is implemented through an inexact gradient-based SA update. Both methods reduce the stationarity gap effectively and reach a small stationarity level. Direct projected gradient converges faster in this smooth example because it directly updates the original objective. In contrast, gradient HM-SA optimizes a conservative lower-bound surrogate of the coupled product-ratio term, and thus its updates are more cautious. Nevertheless, the stationarity gap of gradient HM-SA decreases steadily to around \(10^{-6}\), confirming that the proposed HM-GM surrogate can provide a valid and effective optimization procedure for the considered non-concave semantic-utility problem.

Fig.~\ref{fig_semantic_user} compares the achieved semantic information utility under different numbers of users. The proposed gradient HM-SA method consistently achieves higher utility than both the communication-only and equal-allocation baselines. This demonstrates the benefit of jointly optimizing semantic encoding, bandwidth, and transmit power, rather than optimizing only communication resources or using fixed equal allocation. The performance gain becomes more visible as the number of users increases, because resource competition and semantic-resource coupling become stronger. These results show that the proposed HM-GM transform can be used for information-centric utility maximization involving logarithmic product-ratio structures.

\subsection{Reliability-Aware Multi-Hop Information Delivery}\label{sec_reliability}

We next consider a reliability-aware information delivery problem in a multi-hop communication network. In many information services, e.g., sensor monitoring, emergency message dissemination, and industrial control, information is delivered through multiple wireless links. The end-to-end reliability of a route is naturally modeled as the product of the successful transmission probabilities of all links on that route. Therefore, the resulting utility maximization problem contains sum-of-products terms and fits the proposed HM-GM-based SA framework.

\subsubsection{Problem Statement}

Consider a communication network with \(L\) wireless links and \(R\) information flows. Each flow \(r\) is transmitted through a fixed route \(\mathcal{P}_r\), where \(\mathcal{P}_r\subseteq\{1,\ldots,L\}\) denotes the set of links used by flow \(r\). Let \(p_\ell\) denote the transmit power allocated to link \(\ell\). The successful transmission probability of link \(\ell\) is modeled as
\begin{equation}
P_\ell(p_\ell)=1-\exp(-a_\ell p_\ell),
\end{equation}
where \(a_\ell>0\) captures the channel condition of link \(\ell\). The end-to-end delivery reliability of flow \(r\) is then given by
\begin{equation}
\mathcal{R}_r(\bm{p})=
\prod_{\ell\in\mathcal{P}_r} P_\ell(p_\ell).
\end{equation}
The reliability-aware information utility maximization problem is formulated as
\begin{subequations}\label{prob_reliability}
\begin{align}
\max_{\bm{p}}\quad &
\sum_{r=1}^{R} w_r
\prod_{\ell\in\mathcal{P}_r} P_\ell(p_\ell) \\
\text{s.t.}\quad &
\sum_{\ell=1}^{L} p_\ell \le P_{\max},\\
&
0\le p_\ell\le p_{\max},\quad \forall \ell,
\end{align}
\end{subequations}
where \(w_r>0\) is the priority weight of flow \(r\). Problem~\eqref{prob_reliability} is non-convex because the objective contains products of link reliability functions.

\subsubsection{Problem Transformation}

For each route \(r\), denote the number of links on the route as \(K_r=|\mathcal{P}_r|\). To simplify notation, we rewrite the route reliability as
\begin{equation}
\mathcal{R}_r(\bm{p})
=
\prod_{k=1}^{K_r} f_r^{(k)}(\bm{p}),
\end{equation}
where \(f_r^{(k)}(\bm{p})\) denotes the success probability of the \(k\)-th link on route \(\mathcal{P}_r\).
Since Problem~\eqref{prob_reliability} is a maximization problem, we apply the proposed HM-GM transform to construct a lower-bound surrogate for each product term. Following the general construction, define the transformed terms as
\begin{equation}
\tilde f_r^{(1)}
=
f_r^{(1)}(\bm{p})\prod_{k=1}^{K_r-1} y_r^{(k)},
\end{equation}
\begin{equation}
\tilde f_r^{(k)}
=
f_r^{(k)}(\bm{p})
\frac{\prod_{i=k}^{K_r-1}y_r^{(i)}}{\left(y_r^{(k-1)}\right)^{k-1}},
\quad k=2,\ldots,K_r-1,
\end{equation}
and
\begin{equation}
\tilde f_r^{(K_r)}
=
\frac{f_r^{(K_r)}(\bm{p})}{\left(y_r^{(K_r-1)}\right)^{K_r-1}}.
\end{equation}
Then the HM lower bound of the route reliability is
\begin{equation}
F_r^{\mathrm{HM}}(\bm{p},\bm{y}_r)
=
\frac{K_r}{\sum_{k=1}^{K_r}\frac{1}{\tilde f_r^{(k)}}}.
\end{equation}
At the current feasible point \(\bm{p}^{i}\), the auxiliary variables are updated by Equation (\ref{y_n_closedform}) of the proposed transform:
\begin{equation}
y_r^{(1,i)}
=
\sqrt{
\frac{
f_r^{(2)}(\bm{p}^{i})
}{
f_r^{(1)}(\bm{p}^{i})
}
},
\end{equation}
and
\begin{equation}
y_r^{(k-1,i)}
=
\sqrt[k]{
\left(y_r^{(k-2,i)}\right)^{k-2}
\frac{
f_r^{(k)}(\bm{p}^{i})
}{
f_r^{(k-1)}(\bm{p}^{i})
}
},
\quad k=3,\ldots,K_r .
\end{equation}
Therefore, at iteration \(i\), Problem~\eqref{prob_reliability} is approximated by
\begin{subequations}\label{prob_reliability_trans}
\begin{align}
\max_{\bm{p}}\quad &
\sum_{r=1}^{R} w_r
F_r^{\mathrm{HM}}(\bm{p},\bm{y}_r^{i}) \\
\text{s.t.}\quad &
\sum_{\ell=1}^{L} p_\ell \le P_{\max},\\
&
0\le p_\ell\le p_{\max},\quad \forall \ell .
\end{align}
\end{subequations}
The transformed objective is a tight lower-bound surrogate of the original objective at the current point. However, it is not guaranteed to be concave in general. Therefore, we use the gradient-based HM-SA method to solve this problem.

\subsubsection{SA Method Based on the Proposed HM-GM Transform}

The proposed gradient-based HM-SA method proceeds as follows. First, initialize a feasible power allocation vector \(\bm{p}^{0}\). At iteration \(i\), update the auxiliary variables \(\bm{y}_r^{i}\) for all routes according to Equation (\ref{y_n_closedform}). Then, with \(\bm{y}_r^{i}\) fixed, perform several projected gradient ascent steps on the surrogate objective in Problem~\eqref{prob_reliability_trans}. The obtained point is denoted as \(\bm{p}^{i+1}\). Since the HM surrogate is tight and first-order consistent at \(\bm{p}^{i}\), the proposed gradient-based SA method can be used to approach an \(\epsilon\)-stationary point under the smoothness and boundedness assumptions discussed in Section~III.

\subsubsection{Parameter Setting}

In the simulation, the number of wireless links is set as \(L=20\), and the number of information flows is varied from \(R=5\) to \(R=30\). Each flow is assigned a route with \(K_r\) links randomly selected from the link set. The channel parameter \(a_\ell\) is generated from a uniform distribution over a positive interval, and the flow weight \(w_r\) is randomly generated around one. The total power budget is \(P_{\max}=10\) W, and the maximum power of each link is \(p_{\max}=1\) W. The initial power allocation is set as the equal-power allocation. For the gradient-based HM-SA method, the number of inner projected gradient steps is set as \(M=3\). The maximum number of outer iterations is set as 100, and the stopping tolerance is set according to the stationarity gap.

We compare the proposed method with two baselines. The first is the equal-power allocation scheme, where all links are assigned the same feasible power. The second is direct projected gradient ascent, which directly optimizes the original non-convex objective without using the proposed HM-GM surrogate.

\subsubsection{Numerical Results}
\begin{figure}[tbp]
\centering
\includegraphics[width=0.6\textwidth]{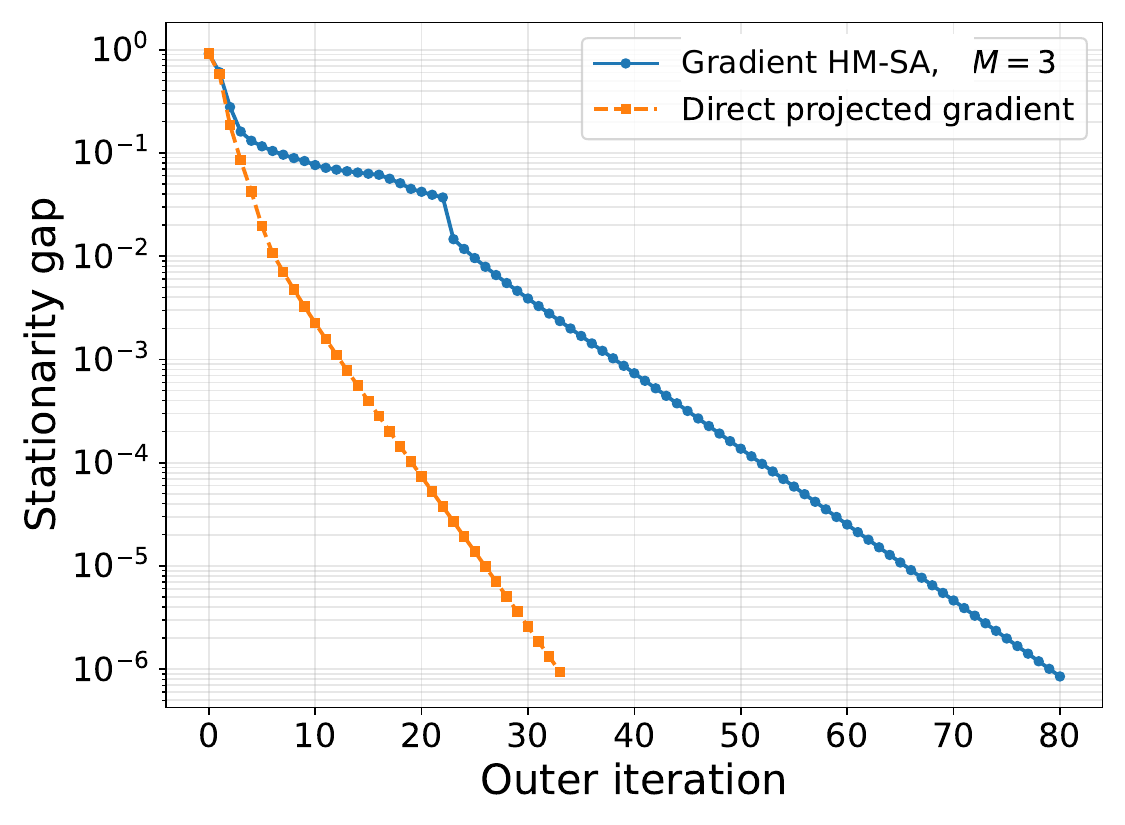}
\vspace{-6pt}\caption{Convergence of the proposed HM-SA methods and gradient baseline.}
\label{fig_reliability_convergence}
\end{figure}
\begin{figure}[tbp]
\centering
\includegraphics[width=0.6\textwidth]{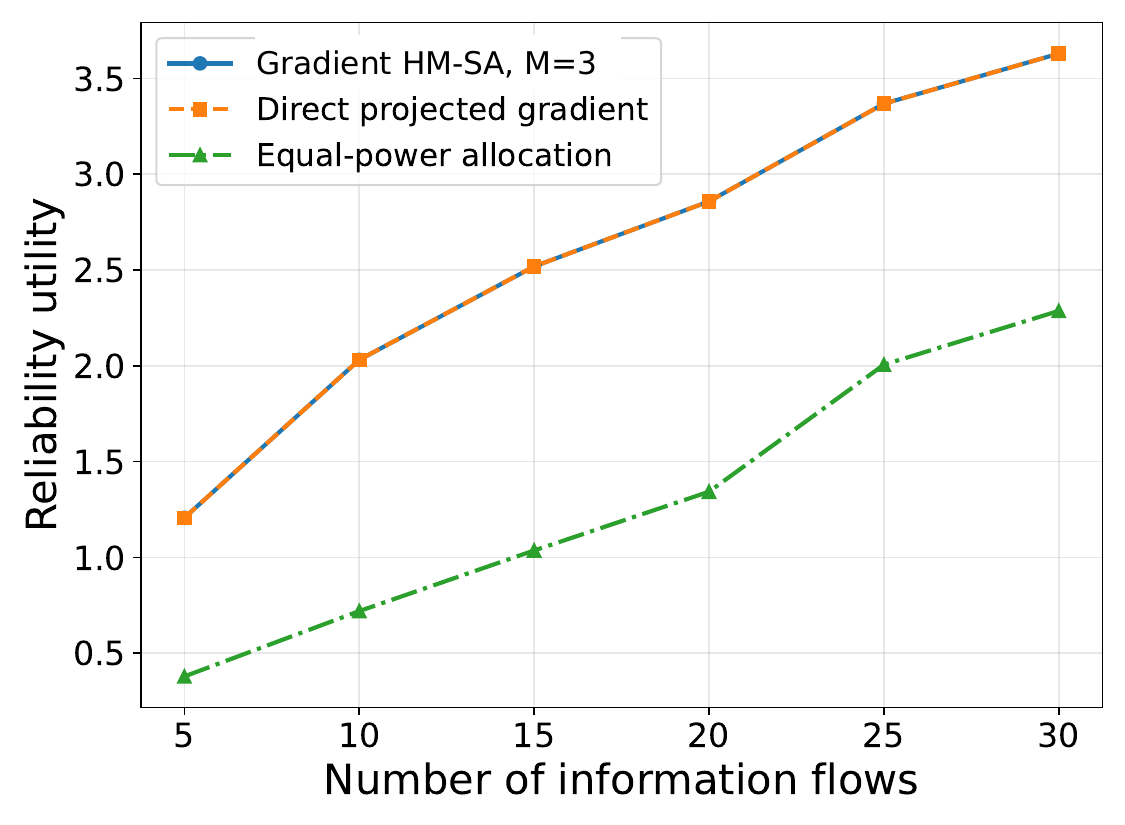}
\vspace{-6pt}\caption{Information utility under different numbers of information flows.}
\label{fig_reliability_flow}
\end{figure}
Fig.~\ref{fig_reliability_convergence} shows the convergence behavior of the proposed gradient HM-SA method and the direct projected-gradient baseline. Both methods reduce the stationarity gap and eventually reach a small stationarity level. The direct projected-gradient method converges faster in this example, which is reasonable because it directly optimizes the original smooth objective. In contrast, the proposed gradient HM-SA method optimizes a conservative HM lower-bound surrogate of the product reliability term, and therefore its updates are more cautious. Nevertheless, the stationarity gap of gradient HM-SA still decreases steadily and reaches around \(10^{-6}\), confirming the effectiveness of the proposed surrogate-based optimization procedure.

In Fig.~\ref{fig_reliability_flow}, we compare the reliability utility under different numbers of information flows. The proposed gradient HM-SA method consistently improves upon the equal-power allocation baseline, showing that adaptive power allocation is important for multi-hop information delivery. The direct projected-gradient method achieves higher utility in this smooth low-dimensional example, but it does not use the product-aware surrogate structure. By contrast, the proposed HM-SA method provides a principled SA interpretation for sum-of-products reliability maximization, where the end-to-end reliability is explicitly handled through a tight HM-GM lower bound. These results demonstrate that the proposed transform is applicable to information reliability optimization and can effectively improve the reliability utility over simple non-adaptive allocation schemes.

\subsection{Cooperative Edge Caching for Information Availability}\label{sec_cache}

We next consider a cooperative edge caching problem in content-centric communication networks. In such systems, user requests can be served by nearby edge caches. If the requested content is not stored in any accessible cache, the request is forwarded to the remote server, which increases backhaul traffic and delivery delay. Therefore, improving information availability at the edge is an important objective in information-centric networking.

\subsubsection{Problem Statement}

Consider a network with \(M\) edge caches, \(U\) users, and \(F\) contents. Let \(q_{m,f}\in[0,1]\) denote the caching probability of content \(f\) at cache \(m\). For user \(u\), denote by \(\mathcal{A}_u\) the set of accessible edge caches. The probability that user \(u\)'s request for content \(f\) cannot be served by any accessible cache is
$\prod_{m\in\mathcal{A}_u}(1-q_{m,f})$.
Let \(\pi_{u,f}\) denote the request probability of user \(u\) for content \(f\). The expected cache-miss probability minimization problem is formulated as
\begin{subequations}\label{prob_cache}
\begin{align}
\min_{\bm{q}}\quad &
\frac{1}{U}\sum_{u=1}^{U}\sum_{f=1}^{F}
\pi_{u,f}
\prod_{m\in\mathcal{A}_u}(1-q_{m,f}) \\
\text{s.t.}\quad
&
\sum_{f=1}^{F} q_{m,f}\le C_m,\quad \forall m,\\
&
0\le q_{m,f}\le 1-\epsilon_q,\quad \forall m,f,
\end{align}
\end{subequations}
where \(C_m\) is the cache capacity of edge cache \(m\), and \(\epsilon_q>0\) is a small constant used to keep \(1-q_{m,f}\) away from zero. Problem~\eqref{prob_cache} is non-convex because the objective contains products of cache-miss probabilities across multiple accessible caches.

\subsubsection{Problem Transformation}
For each user-content pair \((u,f)\), define \(K_u=|\mathcal{A}_u|\), and index the caches in \(\mathcal{A}_u\) as \(m_{u,1},m_{u,2},\ldots,m_{u,K_u}\). Then the cache-miss probability can be rewritten as
\begin{equation}
\prod_{m\in\mathcal{A}_u}(1-q_{m,f})
=
\prod_{k=1}^{K_u} f_{u,f}^{(k)}(\bm{q}),
\end{equation}
where
\begin{equation}
f_{u,f}^{(k)}(\bm{q})=1-q_{m_{u,k},f}.
\end{equation}
Since Problem~\eqref{prob_cache} is a minimization problem, we use the proposed AM-GM transform to construct an upper-bound surrogate for each product term. At iteration \(i\), define
\begin{equation}
\tilde f_{u,f}^{(1)}
=
\left(f_{u,f}^{(1)}(\bm{q})\right)^{K_u}
\prod_{k=1}^{K_u-1}y_{u,f}^{(k)},
\end{equation}
\begin{equation}
\tilde f_{u,f}^{(k)}
=
\left(f_{u,f}^{(k)}(\bm{q})\right)^{K_u}
\frac{\prod_{j=k}^{K_u-1}y_{u,f}^{(j)}}
{\left(y_{u,f}^{(k-1)}\right)^{k-1}},
\quad k=2,\ldots,K_u-1,
\end{equation}
and
\begin{equation}
\tilde f_{u,f}^{(K_u)}
=
\frac{\left(f_{u,f}^{(K_u)}(\bm{q})\right)^{K_u}}
{\left(y_{u,f}^{(K_u-1)}\right)^{K_u-1}}.
\end{equation}
Then, the AM upper bound of the product term is
\begin{equation}
F_{u,f}^{\rm AM}(\bm{q},\bm{y}_{u,f})
=
\frac{1}{K_u}
\sum_{k=1}^{K_u}
\tilde f_{u,f}^{(k)}.
\end{equation}
Based on Equation (\ref{y_n_closedform}), it gives
\begin{equation}
y_{u,f}^{(1,i)}
=
\left(
\frac{
f_{u,f}^{(2)}(\bm{q}^{i})
}{
f_{u,f}^{(1)}(\bm{q}^{i})
}
\right)^{\frac{K_u}{2}},
\end{equation}
and
\begin{equation}
y_{u,f}^{(k-1,i)}
=
\left[
\left(y_{u,f}^{(k-2,i)}\right)^{k-2}
\left(
\frac{
f_{u,f}^{(k)}(\bm{q}^{i})
}{
f_{u,f}^{(k-1)}(\bm{q}^{i})
}
\right)^{K_u}
\right]^{\frac{1}{k}},
\quad k=3,\ldots,K_u .
\end{equation}
With fixed auxiliary variables, Problem~\eqref{prob_cache} is approximated by
\begin{subequations}\label{prob_cache_trans}
\begin{align}
\min_{\bm{q}}\quad &
\frac{1}{U}\sum_{u=1}^{U}\sum_{f=1}^{F}
\pi_{u,f}
F_{u,f}^{\rm AM}(\bm{q},\bm{y}_{u,f}^{i}) \\
\text{s.t.}\quad
&
\sum_{f=1}^{F} q_{m,f}\le C_m,\quad \forall m,\\
&
0\le q_{m,f}\le 1-\epsilon_q,\quad \forall m,f.
\end{align}
\end{subequations}
For fixed \(\bm{y}^{i}\), each transformed term contains \((1-q_{m,f})^{K_u}\), which is convex over the feasible interval. Therefore, Problem~\eqref{prob_cache_trans} is a convex optimization problem with linear constraints.

\subsubsection{SCA Method Based on the Proposed AM-GM Transform}

The proposed AM-SCA method for Problem~\eqref{prob_cache} proceeds as follows. First, initialize a feasible caching probability matrix \(\bm{q}^{0}\). At iteration \(i\), the auxiliary variables \(\bm{y}_{u,f}^{i}\) are updated according to Equation (\ref{y_n_closedform}) at \(\bm{q}^{i}\). Then, the convex surrogate problem~\eqref{prob_cache_trans} is solved to obtain \(\bm{q}^{i+1}\). The procedure is repeated until convergence.

In this application, the AM surrogate admits a simple closed-form update after collecting the coefficients of each \((1-q_{m,f})^{K_u}\) term. Specifically, for each cache \(m\), the transformed subproblem can be written as
\begin{equation}
\min_{\bm{q}}
\sum_{f=1}^{F} A_{m,f}^{i}(1-q_{m,f})^{K}
\end{equation}
subject to
\begin{equation}
\sum_{f=1}^{F}q_{m,f}\le C_m,\quad
0\le q_{m,f}\le 1-\epsilon_q,
\end{equation}
where \(A_{m,f}^{i}\ge 0\) is determined by the request probabilities, the user-cache association, and the auxiliary variables at iteration \(i\). In the simulation, each user accesses \(K=3\) caches. The KKT condition gives
\begin{equation}
q_{m,f}^{i+1}
=
\left[
1-
\sqrt{
\frac{\lambda_m}{3A_{m,f}^{i}}
}
\right]_{0}^{1-\epsilon_q},
\end{equation}
where \([\cdot]_{0}^{1-\epsilon_q}\) denotes projection onto \([0,1-\epsilon_q]\), and \(\lambda_m\ge0\) is chosen such that the cache-capacity constraint is satisfied when it is active. This closed-form update makes the AM-SCA method efficient for the considered caching problem.

\subsubsection{Parameter Setting}

In the simulation, the number of edge caches is set as \(M=6\), the number of users is set as \(U=12\), and the number of contents is set as \(F=20\). Each user can access \(K=3\) nearby caches. The content popularity follows a Zipf distribution with skewness parameter \(\zeta=0.8\). The cache capacity is set to the same value \(C\) for all caches, and \(C\) is varied from \(2\) to \(8\). The upper bound of the caching probability is set as \(q_{m,f}\le 0.95\), i.e., \(\epsilon_q=0.05\), to avoid numerical instability when \(1-q_{m,f}\) is close to zero. The initial caching probability is set by equal allocation. The maximum number of outer iterations is set as 150, and the stopping tolerance is set according to the projected stationarity gap.

We compare the proposed AM-SCA method with three baselines. The first is the direct projected gradient, which directly applies projected gradient descent to the original non-convex cache-miss objective. The second is popularity-based caching, where each cache gives priority to the most popular contents requested by its accessible users. The third is equal caching, where each cache allocates the same caching probability to all contents.

\subsubsection{Numerical Results}

\begin{figure}[tbp]
\centering
\includegraphics[width=0.6\textwidth]{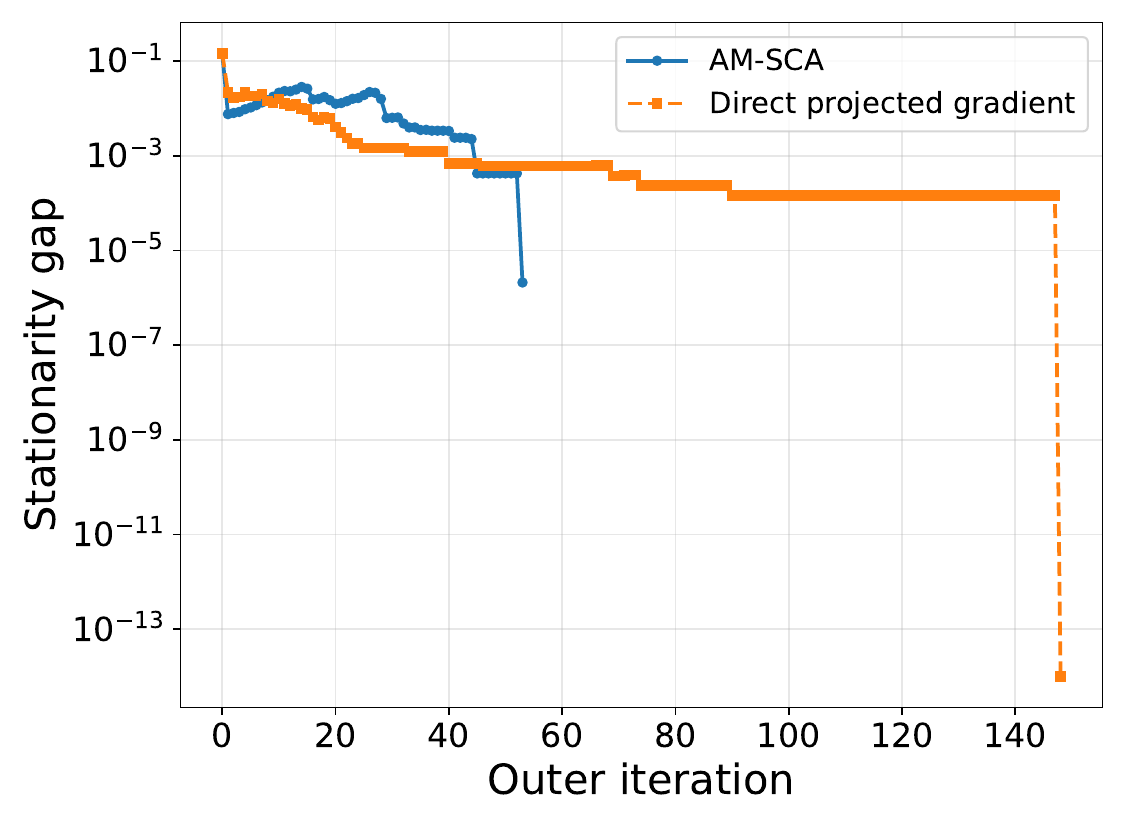}
\vspace{-6pt}
\caption{Convergence of the proposed AM-SCA method and gradient baseline.}
\label{fig_cache_convergence}
\end{figure}

\begin{figure}[tbp]
\centering
\includegraphics[width=0.6\textwidth]{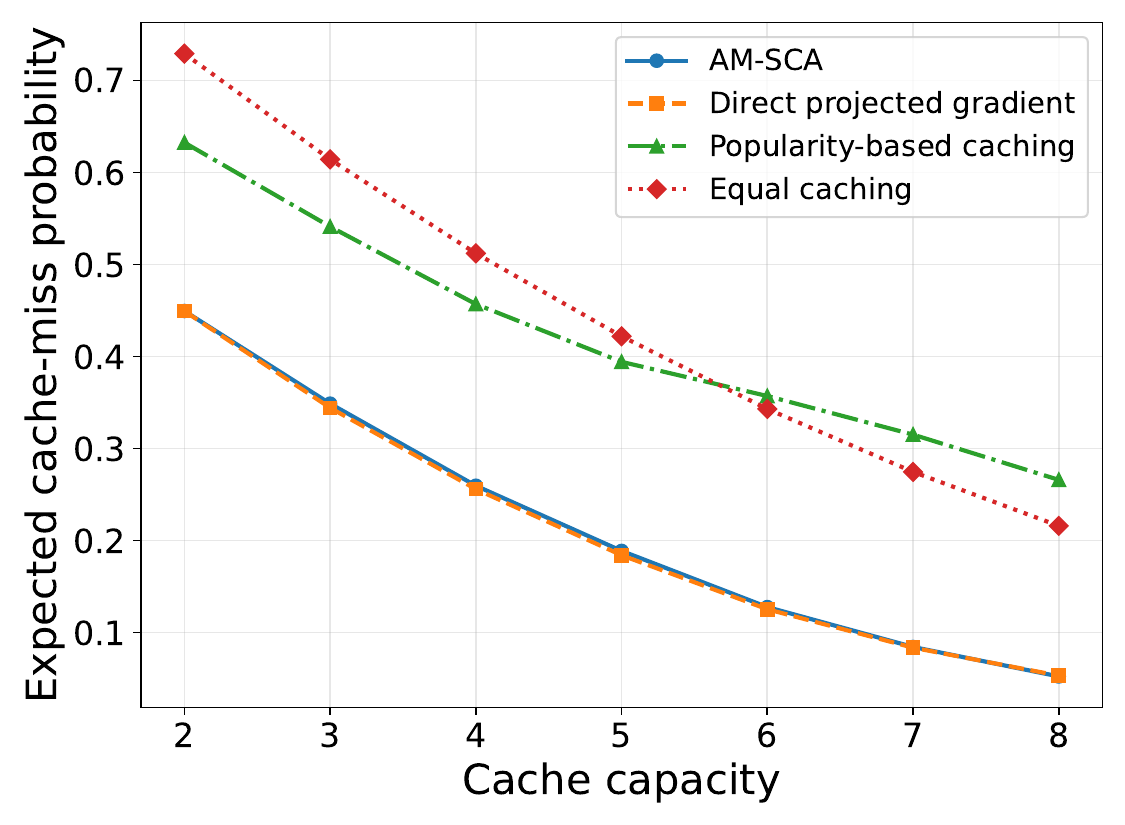}
\vspace{-6pt}
\caption{Expected cache-miss probability under different cache capacities.}
\label{fig_cache_capacity}
\end{figure}

Fig.~\ref{fig_cache_convergence} presents the stationarity-gap convergence of AM-SCA and direct projected gradient. AM-SCA quickly reduces the stationarity gap and reaches a small stationarity level within fewer outer iterations. Direct projected gradient also converges, but it requires more iterations to reach a comparable stationarity level. This shows that the AM-GM surrogate provides an effective structured update for the cooperative caching problem.

In Fig.~\ref{fig_cache_capacity}, we show the expected cache-miss probability under different cache capacities. As the cache capacity increases, the cache-miss probability decreases for all methods. The proposed AM-SCA method achieves performance comparable to direct projected gradient and clearly outperforms popularity-based caching and equal caching. This improvement comes from jointly optimizing the caching probabilities across users, contents, and accessible cache sets. In contrast, the heuristic baselines only use content popularity or uniform allocation, and therefore cannot fully exploit the cooperative caching structure. These results demonstrate that the proposed AM-GM transform is useful for information-availability optimization problems whose objectives contain sum-of-products terms.

\subsection{Multi-Modal Product-Loss Learning}\label{sec_ml_prodloss}

We next consider a machine learning application in multi-modal supervised learning. In many learning tasks, each sample can be described by multiple modalities or feature views, such as text, image, audio, and sensor measurements. A common goal is to learn predictors from multiple modalities jointly. When the learning objective is designed to penalize samples that are difficult across multiple modalities, product-type loss functions naturally arise.

\subsubsection{Problem Statement}

Consider a supervised learning dataset with \(N\) training samples. Each sample \(i\) has \(K\) modalities, and the feature vector of modality \(k\) is denoted by \(a_i^{(k)}\). For simplicity, we consider binary classification with label \(z_i\in\{-1,+1\}\). Let \(w_k\) be the linear classifier associated with modality \(k\). The logistic loss of modality \(k\) on sample \(i\) is
\begin{equation}
\ell_i^{(k)}(w_k)
=
\log\left(1+\exp\left(-z_i (a_i^{(k)})^\top w_k\right)\right)+\epsilon_\ell,
\end{equation}
where \(\epsilon_\ell>0\) is a small constant to ensure strict positivity. To jointly learn all modality-specific classifiers, we consider the following product-loss minimization problem:
\begin{subequations}\label{prob_ml_product}
\begin{align}
\min_{\bm{w}}\quad
&
\frac{1}{N}
\sum_{i=1}^{N}
\prod_{k=1}^{K}
\ell_i^{(k)}(w_k)
+
\frac{\lambda}{2}
\sum_{k=1}^{K}
\|w_k\|_2^2 \\
\text{s.t.}\quad
&
w_k\in\mathcal{W}_k,\quad k=1,\ldots,K,
\end{align}
\end{subequations}
where \(\lambda>0\) is the regularization parameter and \(\mathcal{W}_k\) is a compact convex feasible set. Problem~\eqref{prob_ml_product} is non-convex because the empirical loss contains products of modality-specific loss functions.

The product-loss form is useful when the learning objective emphasizes samples that remain difficult across multiple modalities. If at least one modality already fits a sample well, the product loss becomes smaller. In contrast, if all modalities incur large losses on a sample, the product term becomes large and encourages joint improvement across modalities.

\subsubsection{Problem Transformation}

For each sample \(i\), define
\begin{equation}
f_i^{(k)}(w_k)=\ell_i^{(k)}(w_k),\quad k=1,\ldots,K.
\end{equation}
Then the empirical product loss can be written as
\begin{equation}
\prod_{k=1}^{K}
\ell_i^{(k)}(w_k)
=
\prod_{k=1}^{K}
f_i^{(k)}(w_k).
\end{equation}
Since Problem~\eqref{prob_ml_product} is a minimization problem, we apply the proposed AM-GM transform to construct an upper-bound surrogate for each product term. At iteration \(t\), define
\begin{equation}
\tilde f_i^{(1)}
=
\left(f_i^{(1)}(w_1)\right)^K
\prod_{k=1}^{K-1}y_i^{(k)},
\end{equation}

\begin{equation}
\tilde f_i^{(k)}
=
\left(f_i^{(k)}(w_k)\right)^K
\frac{\prod_{j=k}^{K-1}y_i^{(j)}}
{\left(y_i^{(k-1)}\right)^{k-1}},
\quad k=2,\ldots,K-1,
\end{equation}
and
\begin{equation}
\tilde f_i^{(K)}
=
\frac{
\left(f_i^{(K)}(w_K)\right)^K
}
{
\left(y_i^{(K-1)}\right)^{K-1}
}.
\end{equation}
The AM upper-bound surrogate of the product loss is
\begin{equation}
F_i^{\rm AM}(\{w_k\},y_i)
=
\frac{1}{K}
\sum_{k=1}^{K}
\tilde f_i^{(k)}.
\end{equation}
The auxiliary variables are updated at the current iterate \(\{w_k^{t}\}_{k=1}^{K}\) according to Equation (\ref{y_n_closedform}):
\begin{equation}
y_i^{1,t}
=
\left(
\frac{
f_i^{(2)}(w_2^{t})
}{
f_i^{(1)}(w_1^{t})
}
\right)^{\frac{K}{2}},
\end{equation}
and
\begin{equation}
y_i^{k-1,t}
=
\left[
\left(y_i^{k-2,t}\right)^{k-2}
\left(
\frac{
f_i^{(k)}(w_k^{t})
}{
f_i^{(k-1)}(w_{k-1}^{t})
}
\right)^K
\right]^{\frac{1}{k}},
\quad k=3,\ldots,K.
\end{equation}
With fixed auxiliary variables, Problem~\eqref{prob_ml_product} is approximated by
\begin{subequations}\label{prob_ml_product_trans}
\begin{align}
\min_{\bm{w}}\quad
&
\frac{1}{N}
\sum_{i=1}^{N}
F_i^{\rm AM}(\{w_k\},y_i^{t})
+
\frac{\lambda}{2}
\sum_{k=1}^{K}
\|w_k\|_2^2 \tag{\ref{prob_ml_product_trans}}\\
\text{s.t.}\quad
&
w_k\in\mathcal{W}_k,\quad k=1,\ldots,K.
\end{align}
\end{subequations}
For fixed \(y_i^{t}\), each surrogate term contains \(\left(\ell_i^{(k)}(w_k)\right)^K\). Since the logistic loss is convex and non-negative, and \(x^K\) is convex and non-decreasing over \(x\ge0\), \(\left(\ell_i^{(k)}(w_k)\right)^K\) is convex in \(w_k\). Therefore, Problem~\eqref{prob_ml_product_trans} is a convex surrogate problem when \(\mathcal{W}_k\) is convex.

\subsubsection{SCA Method Based on the Proposed AM-GM Transform}

The AM-SCA method for Problem~\eqref{prob_ml_product} proceeds as follows. First, initialize a feasible set of classifiers \(\{w_k^{0}\}_{k=1}^{K}\). At iteration \(t\), update the auxiliary variables \(y_i^{t}\) for all training samples according to Equation (\ref{y_n_closedform}). Then solve the convex surrogate problem~\eqref{prob_ml_product_trans} to obtain \(\{w_k^{t+1}\}_{k=1}^{K}\). This procedure is repeated until convergence.

Since the surrogate is an upper bound of the original product loss and is tight at the current iterate, the resulting SCA procedure follows the proposed AM-GM-based framework. Moreover, if solving Problem~\eqref{prob_ml_product_trans} exactly is expensive for large-scale datasets, the gradient-based AM-SCA method can be used by performing several projected gradient steps on the surrogate objective. This gives an efficient implementation for large-scale multi-modal learning problems.

\subsubsection{Parameter Setting}

For numerical evaluation, one may consider a multi-view classification dataset or construct a two-view dataset by splitting the input features into two non-overlapping groups. Unless otherwise specified, the number of modalities is set as \(K=2\) or \(K=3\). The regularization parameter is chosen from a small validation set, and the positivity constant is set as \(\epsilon_\ell=10^{-6}\). The feasible set \(\mathcal{W}_k\) can be chosen as an \(\ell_2\)-norm ball, i.e., \(\mathcal{W}_k=\{w_k:\|w_k\|_2\le R_w\}\). The classifiers are initialized by training each modality independently with the standard logistic loss.

We compare the proposed AM-SCA method with the following baselines:
\begin{itemize}
\item \textbf{Independent training}: each modality-specific classifier is trained separately using the standard logistic loss.
\item \textbf{Average-loss training}: all modality losses are averaged and optimized jointly.
\item \textbf{Direct projected gradient}: projected gradient descent is directly applied to the original product-loss objective without using the proposed AM-GM surrogate.
\end{itemize}

\subsubsection{Numerical Results}
\begin{figure}[tbp]
\centering
\includegraphics[width=0.6\textwidth]{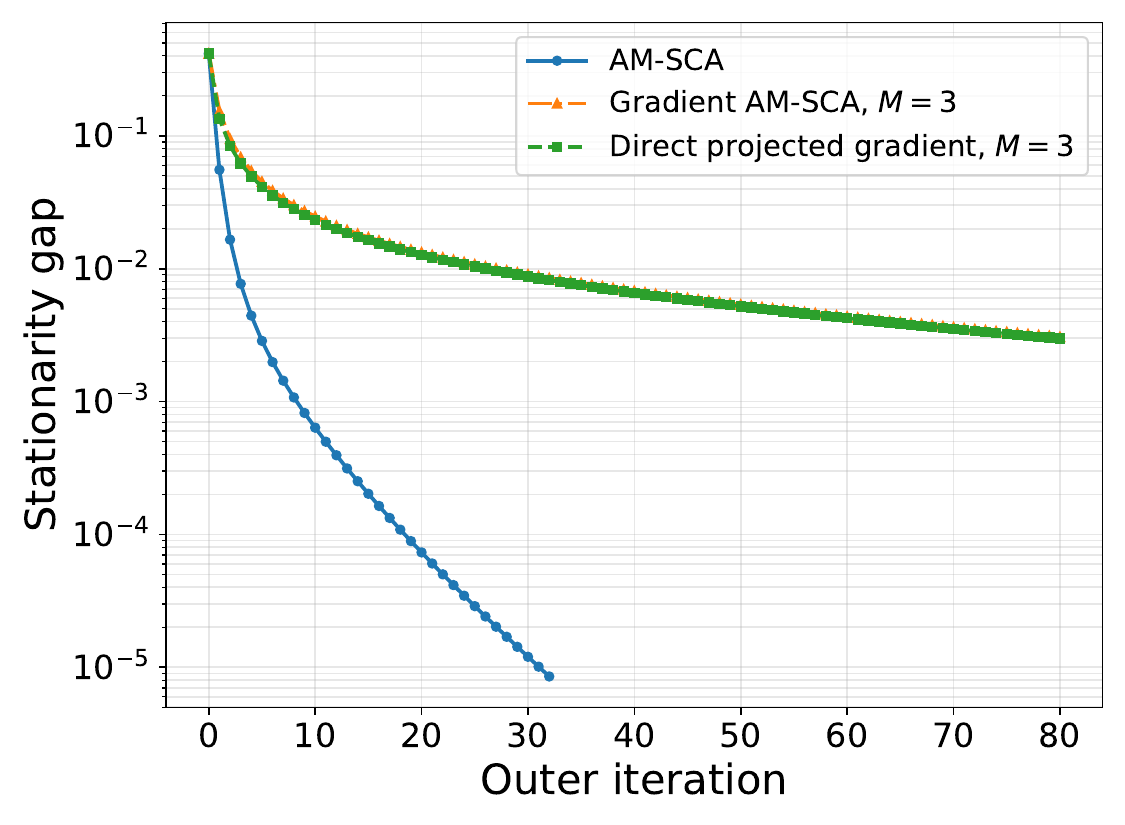}
\vspace{-6pt}
\caption{Convergence of AM-SCA, gradient AM-SCA, and direct projected gradient for multi-modal product-loss learning.}
\label{fig_ml_stationarity}
\end{figure}

\begin{figure}[tbp]
\centering
\includegraphics[width=0.6\textwidth]{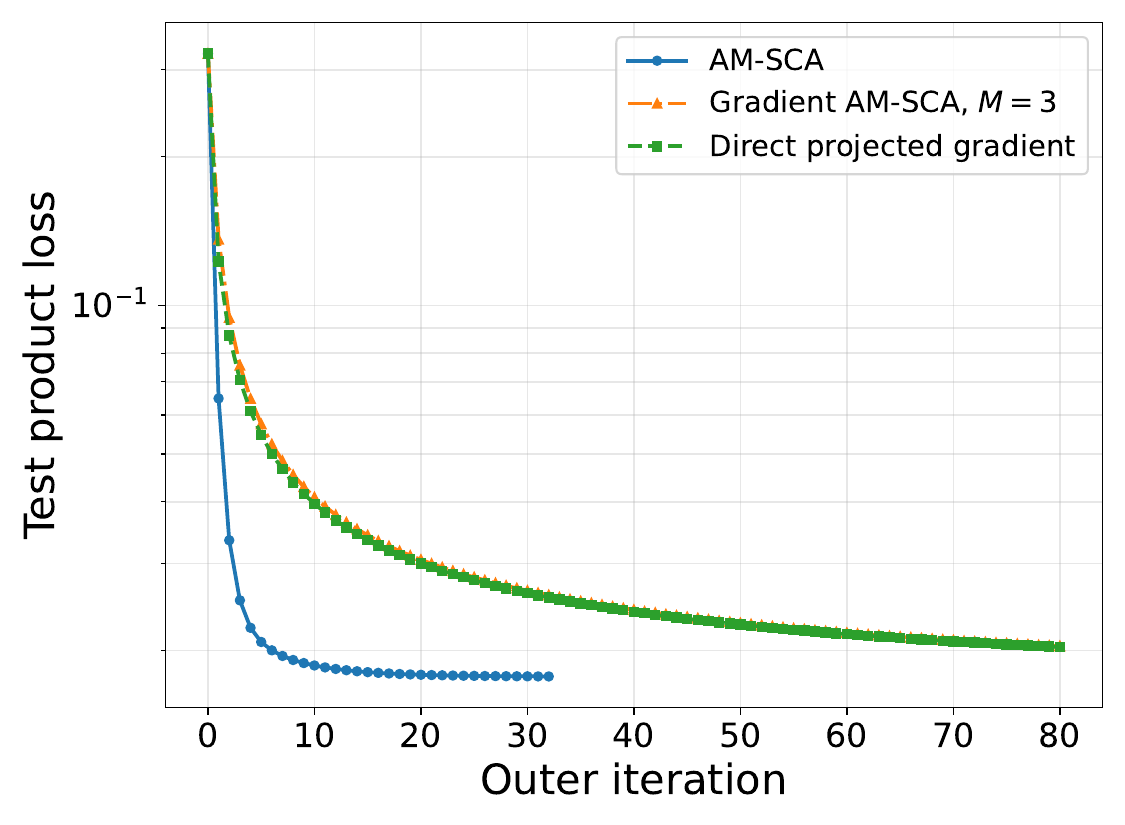}
\vspace{-6pt}
\caption{Test product loss under AM-SCA, gradient AM-SCA, and direct projected gradient.}
\label{fig_ml_product_loss}
\end{figure}

\begin{figure}[tbp]
\centering
\includegraphics[width=0.6\textwidth]{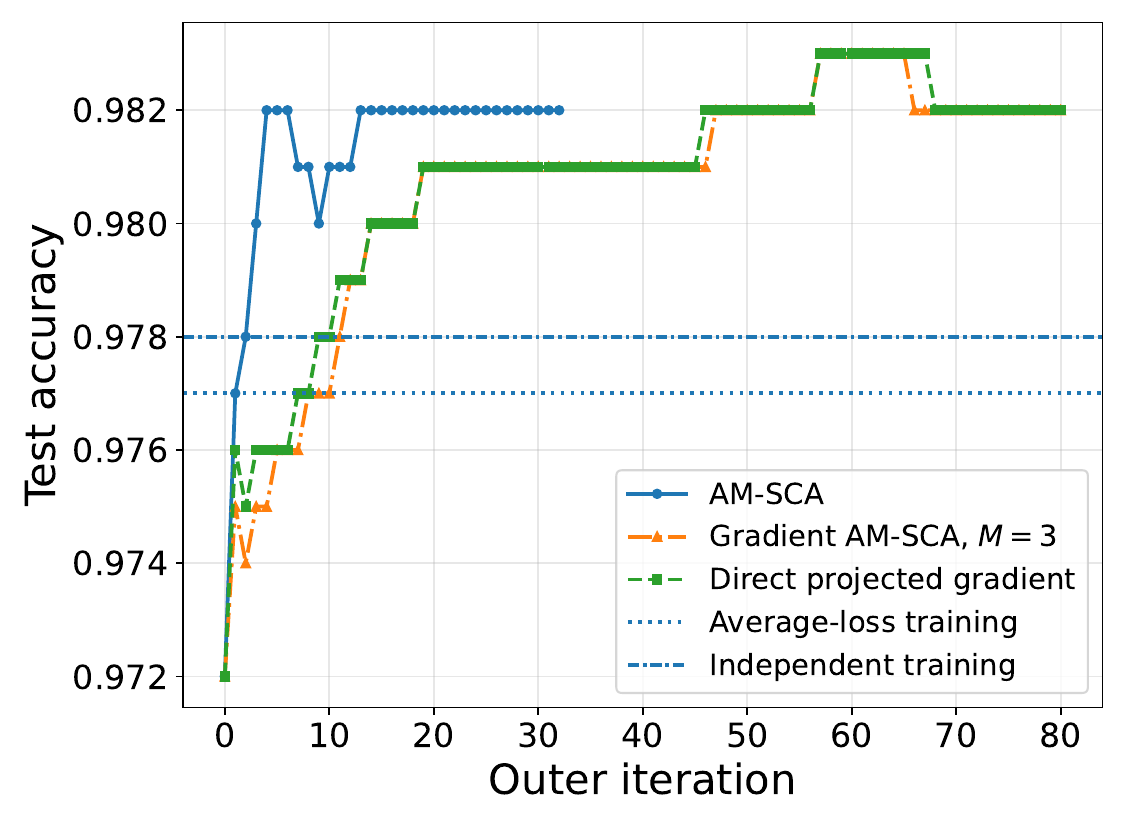}
\vspace{-6pt}
\caption{Test accuracy under different learning methods.}
\label{fig_ml_accuracy}
\end{figure}

\begin{figure}[tbp]
\centering
\includegraphics[width=0.6\textwidth]{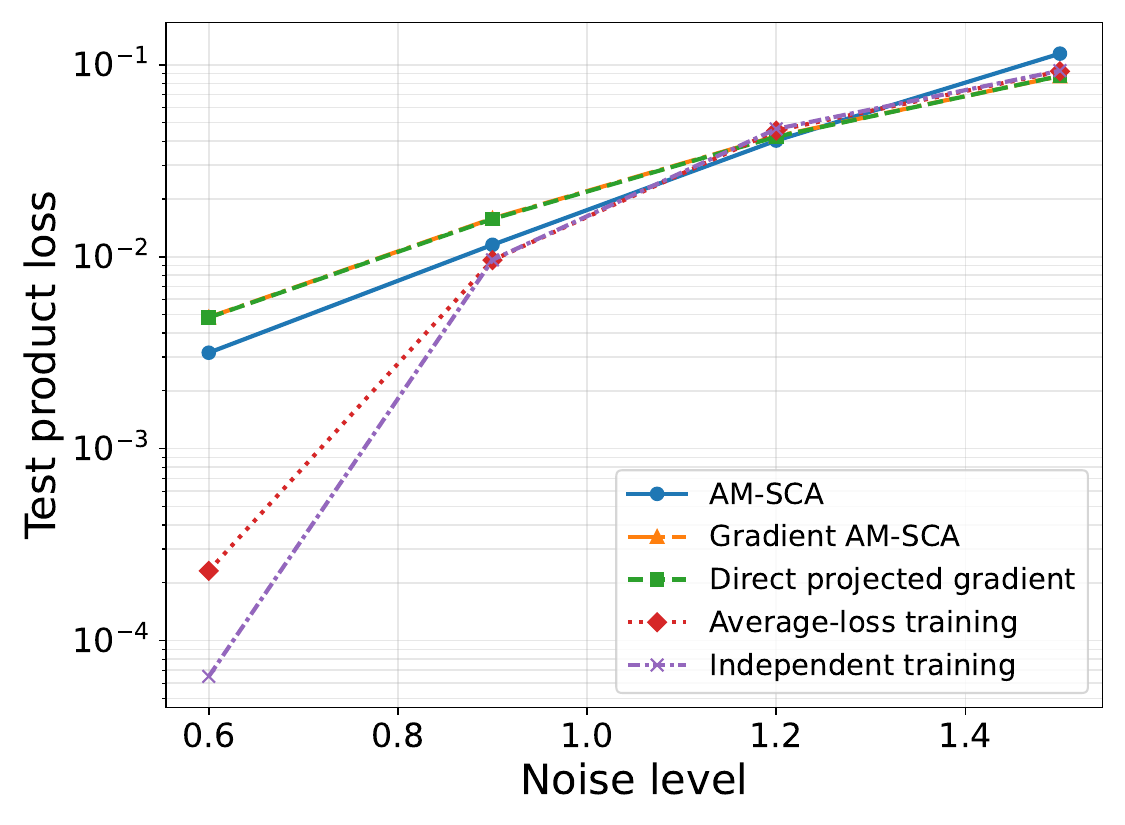}
\vspace{-6pt}
\caption{Test product loss under different noise levels.}
\label{fig_ml_noise}
\end{figure}

Fig.~\ref{fig_ml_stationarity} shows the stationarity-gap convergence of AM-SCA, gradient AM-SCA, and direct projected gradient. AM-SCA decreases the stationarity gap much faster than the two gradient-based methods and reaches a small stationarity level within fewer outer iterations. This is because AM-SCA solves the transformed convex surrogate subproblem more accurately at each outer iteration. In contrast, gradient AM-SCA uses only \(M=3\) projected gradient steps for each surrogate, and direct projected gradient applies the same number of steps directly to the original non-convex product-loss objective. Therefore, their stationarity gaps decrease more slowly. This result confirms that the proposed AM-GM surrogate provides an effective optimization procedure for the considered product-loss learning problem.

In Fig.~\ref{fig_ml_product_loss}, the test product loss during training is presented. AM-SCA reduces the test product loss rapidly and reaches a lower value than gradient AM-SCA and direct projected gradient. This is consistent with the convergence result in Fig.~\ref{fig_ml_stationarity}, since the proposed AM-SCA method directly minimizes a tight upper-bound surrogate of the product-loss objective. Gradient AM-SCA and direct projected gradient also reduce the product loss, but their convergence is slower because they only perform a few local gradient steps at each outer iteration.

In Fig.~\ref{fig_ml_accuracy}, we give the test accuracy of each method. All methods achieve high and similar classification accuracy. AM-SCA improves the accuracy rapidly at the early iterations, while direct projected gradient and gradient AM-SCA can reach slightly higher final accuracy in this particular setting. This is reasonable because the proposed method is designed to optimize the product-loss objective rather than classification accuracy directly. Therefore, stationarity gap and test product loss are the primary metrics for this application, whereas test accuracy is reported only as a reference metric.

Fig.~\ref{fig_ml_noise} compares the test product loss under different noise levels. As the noise level increases, the test product loss increases for all methods, reflecting the higher difficulty of the multi-modal learning task. The proposed AM-SCA and gradient AM-SCA methods remain competitive with the direct projected gradient across different noise levels. These results show that the proposed AM-GM transform can be used to handle machine-learning objectives with sum-of-products structures, and that AM-SCA provides a particularly effective optimization procedure when the transformed surrogate can be solved accurately.

\subsection{Clean-Adversarial Product-Loss Learning}\label{sec_ml_clean}

We next consider a robust learning problem in which the model is trained using both clean samples and their perturbed counterparts. In standard adversarial or augmentation-based training, the clean loss and the perturbed loss are often combined additively. Here, we consider a product-loss formulation, where a sample receives a larger penalty only when it is difficult under both the clean and perturbed inputs. This leads to a sum-of-products objective and can be handled by the proposed AM-GM-based SCA framework.

\subsubsection{Problem Statement}

Consider a binary classification dataset
\(\{(a_i,z_i)\}_{i=1}^{N}\), where \(a_i\in\mathbb{R}^{d}\) is the feature vector and \(z_i\in\{-1,+1\}\) is the label. Let \(\hat a_i\) denote a perturbed version of \(a_i\), which can be generated by data augmentation or by an adversarial perturbation method. For a linear classifier \(w\), the clean and perturbed logistic losses are defined as
\begin{equation}
\ell_i^{\rm c}(w)
=
\log\left(1+\exp(-z_i a_i^\top w)\right)+\epsilon_\ell,
\end{equation}
and
\begin{equation}
\ell_i^{\rm p}(w)
=
\log\left(1+\exp(-z_i \hat a_i^\top w)\right)+\epsilon_\ell,
\end{equation}
where \(\epsilon_\ell>0\) ensures strict positivity. We formulate the clean-adversarial product-loss learning problem as
\begin{subequations}\label{prob_clean_adv_product}
\begin{align}
\min_{w}\quad
&
\frac{1}{N}
\sum_{i=1}^{N}
\ell_i^{\rm c}(w)\ell_i^{\rm p}(w)
+
\frac{\lambda}{2}\|w\|_2^2 \tag{\ref{prob_clean_adv_product}}\\
\text{s.t.}\quad
&
w\in\mathcal{W},
\end{align}
\end{subequations}
where \(\lambda>0\) is the regularization parameter and \(\mathcal{W}\) is a compact convex set. Problem~\eqref{prob_clean_adv_product} is non-convex because the objective contains products of two convex loss functions.

This product-loss formulation emphasizes samples that remain hard under both clean and perturbed inputs. If either the clean loss or the perturbed loss is small, the product loss is reduced. In contrast, samples that are difficult in both cases receive a larger penalty.

\subsubsection{Problem Transformation}

For each sample \(i\), define
\begin{equation}
f_i^{(1)}(w)=\ell_i^{\rm c}(w),\qquad
f_i^{(2)}(w)=\ell_i^{\rm p}(w).
\end{equation}
Then the empirical loss term is
\begin{equation}
f_i^{(1)}(w)f_i^{(2)}(w).
\end{equation}
Since Problem~\eqref{prob_clean_adv_product} is a minimization problem, we apply the AM-GM transform to construct an upper-bound surrogate. For \(K=2\), the AM-GM bound gives
\begin{equation}
f_i^{(1)}(w)f_i^{(2)}(w)
\le
\frac{1}{2}
\left[
y_i\left(f_i^{(1)}(w)\right)^2
+
\frac{\left(f_i^{(2)}(w)\right)^2}{y_i}
\right],
\end{equation}
where \(y_i>0\) is an auxiliary variable. At iteration \(t\), the equality condition (\ref{y_n_closedform}) gives
\begin{equation}
y_i^{t}
=
\frac{f_i^{(2)}(w^{t})}{f_i^{(1)}(w^{t})}.
\end{equation}
Therefore, with fixed \(y_i^{t}\), Problem~\eqref{prob_clean_adv_product} is approximated by
\begin{subequations}\label{prob_clean_adv_product_trans}
\begin{align}
\min_{w}\quad
&
\frac{1}{2N}
\sum_{i=1}^{N}
\left[
y_i^{t}\left(\ell_i^{\rm c}(w)\right)^2
+
\frac{\left(\ell_i^{\rm p}(w)\right)^2}{y_i^{t}}
\right]
+
\frac{\lambda}{2}\|w\|_2^2 \tag{\ref{prob_clean_adv_product_trans}}\\
\text{s.t.}\quad
&
w\in\mathcal{W}.
\end{align}
\end{subequations}
For fixed \(y_i^{t}\), the transformed objective is convex in the linear logistic-regression setting. This is because the logistic loss is convex and non-negative, and the square function is convex and non-decreasing over the non-negative domain. Hence, Problem~\eqref{prob_clean_adv_product_trans} is a convex surrogate problem when \(\mathcal{W}\) is convex.

\subsubsection{SCA Method Based on the Proposed AM-GM Transform}

The AM-SCA method proceeds as follows. First, initialize a feasible classifier \(w^{0}\). At iteration \(t\), compute the clean and perturbed losses for all samples and update the auxiliary variables \(y_i^{t}\) according to the equality condition (\ref{y_n_closedform}). Then solve the convex surrogate problem~\eqref{prob_clean_adv_product_trans} to obtain \(w^{t+1}\). The process is repeated until the projected stationarity gap is below a prescribed tolerance.

When the dataset is large, the convex surrogate subproblem can be solved inexactly by projected gradient descent. This gives a gradient AM-SCA variant, where only a small number of local gradient steps are performed at each outer iteration. In nonlinear models such as neural networks, the transformed subproblem may no longer be convex, but the same gradient AM-SCA procedure can still be used to search for a stationary point.

\subsubsection{Parameter Setting}

In the simulation, we consider a synthetic binary classification dataset. The perturbed sample \(\hat a_i\) is generated by adding bounded random noise or by applying a single-step adversarial perturbation to \(a_i\). The regularization parameter is set as \(\lambda=10^{-3}\), and the positivity constant is set as \(\epsilon_\ell=10^{-4}\). The feasible set is chosen as an \(\ell_2\)-norm ball, i.e.,
\begin{equation}
\mathcal{W}=\{w:\|w\|_2\le R_w\}.
\end{equation}
We compare the proposed AM-SCA method with gradient AM-SCA, direct projected gradient, clean-only training, and additive clean-adversarial training. Direct projected gradient directly minimizes the original product-loss objective without using the AM-GM surrogate, while additive clean-adversarial training minimizes the sum of the clean and perturbed losses.

\subsubsection{Numerical Results}
\begin{figure}[tbp]
\centering
\includegraphics[width=0.6\textwidth]{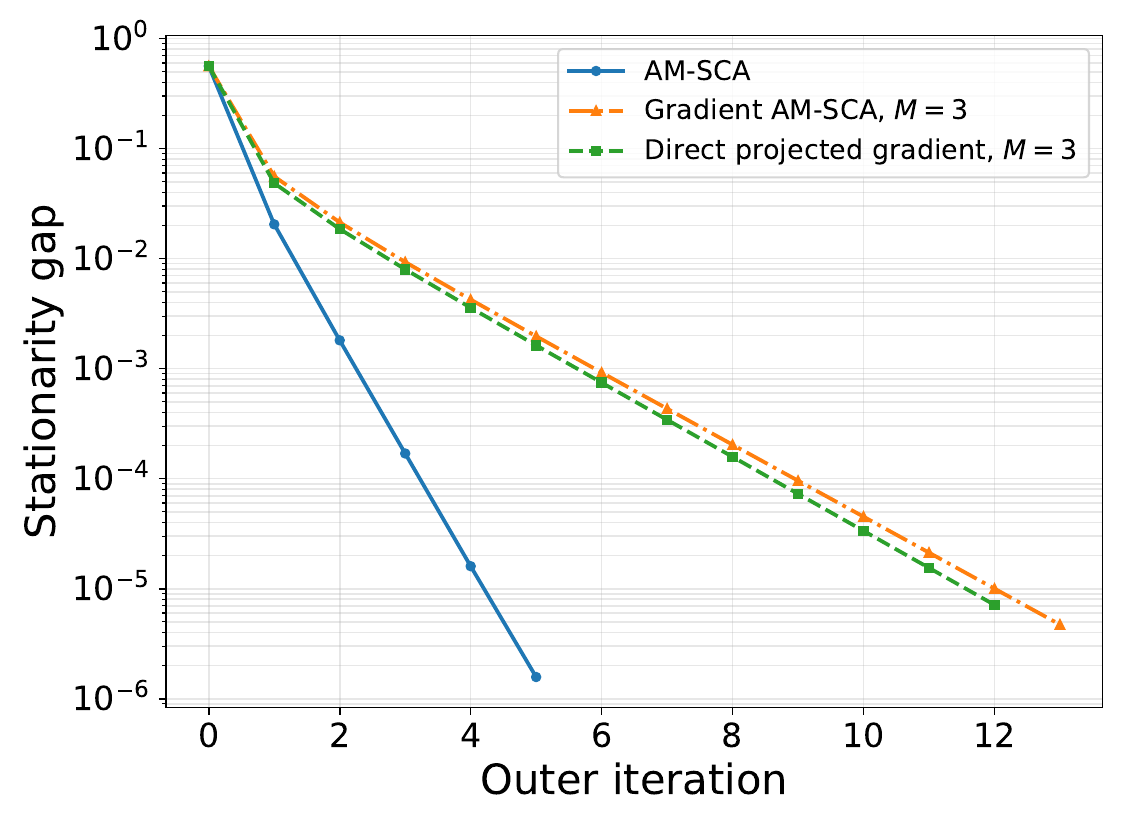}
\vspace{-6pt}
\caption{Convergence of AM-SCA, gradient AM-SCA, and direct projected gradient for clean-adversarial product-loss learning.}
\label{fig_clean_adv_stationarity}
\end{figure}

\begin{figure}[tbp]
\centering
\includegraphics[width=0.6\textwidth]{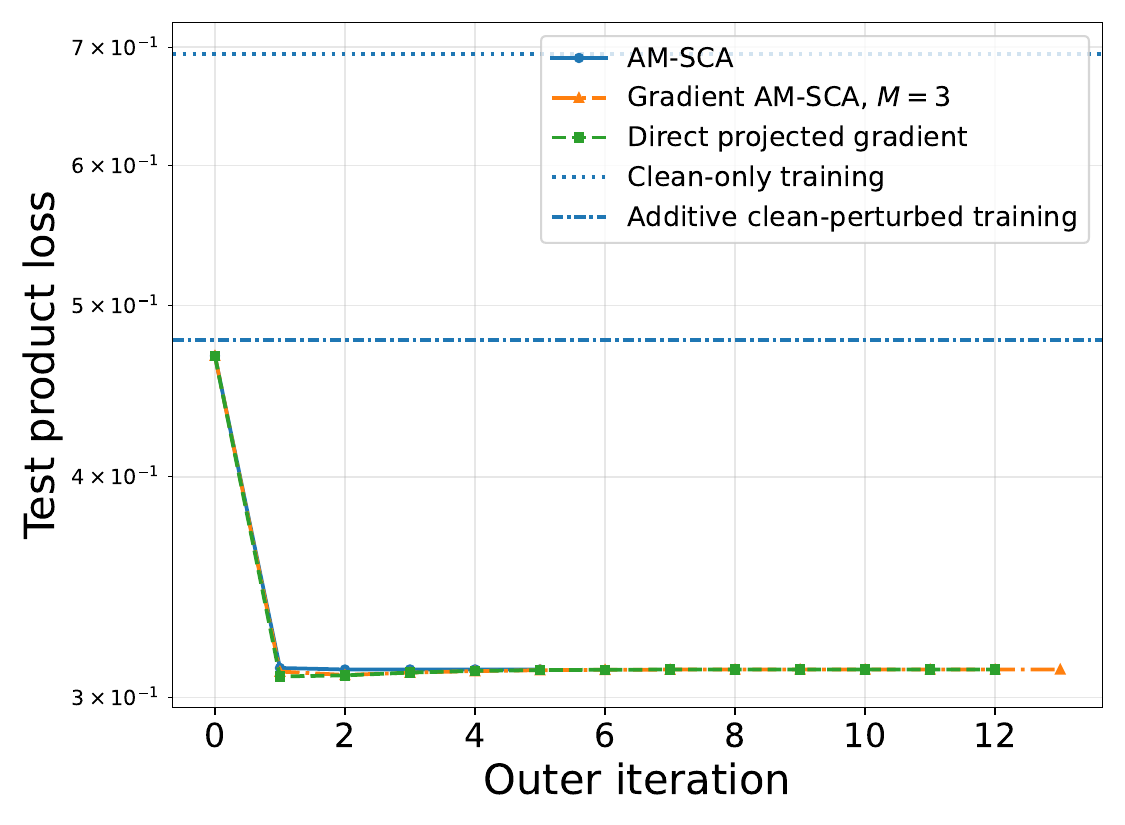}
\vspace{-6pt}
\caption{Test product loss under different optimization methods.}
\label{fig_clean_adv_product_loss}
\end{figure}

\begin{figure}[tbp]
\centering
\includegraphics[width=0.6\textwidth]{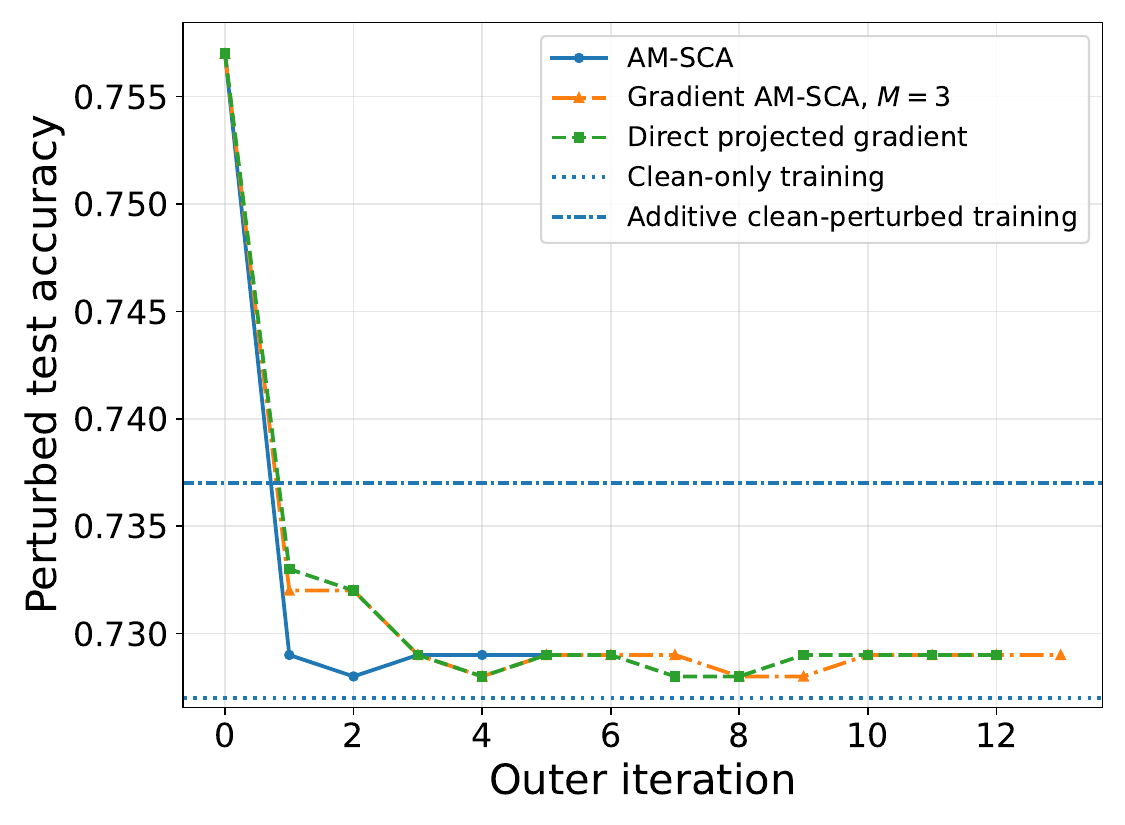}
\vspace{-6pt}
\caption{Perturbed test accuracy under different learning methods.}
\label{fig_clean_adv_accuracy}
\end{figure}

\begin{figure}[tbp]
\centering
\includegraphics[width=0.6\textwidth]{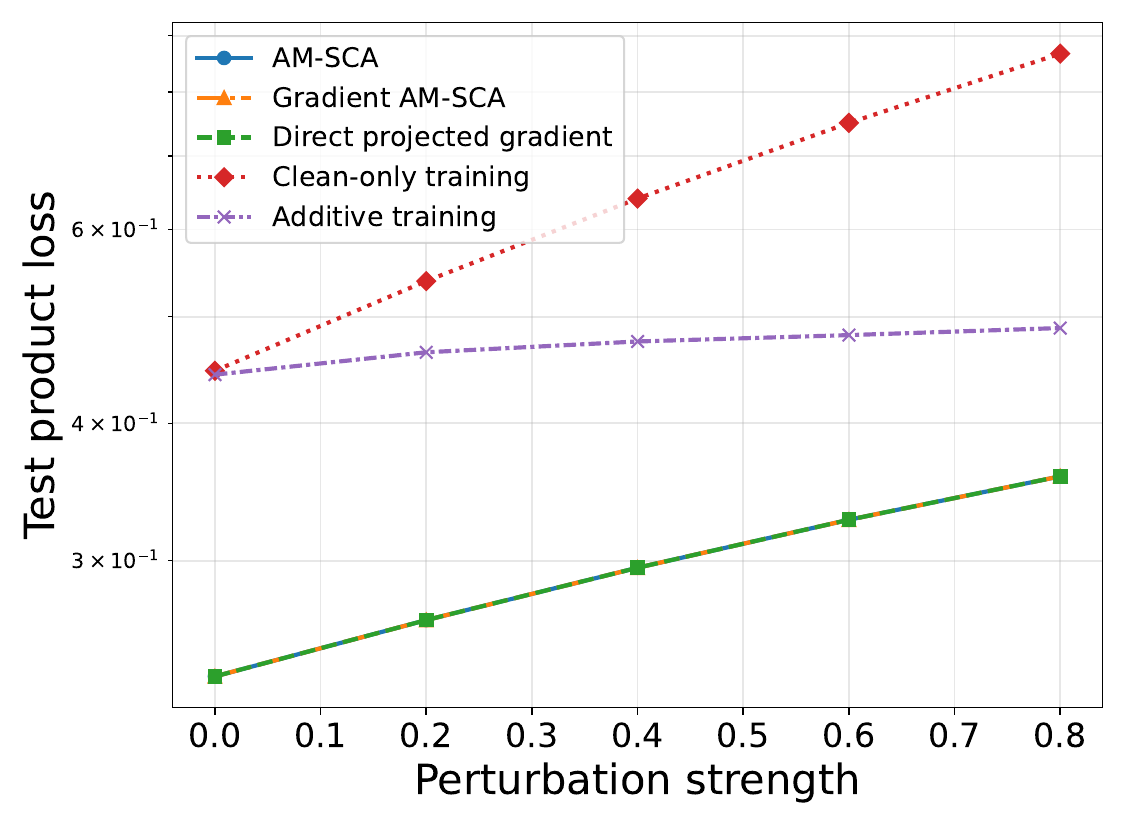}
\vspace{-6pt}
\caption{Test product loss under different perturbation strengths.}
\label{fig_clean_adv_eps}
\end{figure}

Fig.~\ref{fig_clean_adv_stationarity} shows the stationarity-gap convergence of AM-SCA, gradient AM-SCA, and direct projected gradient. AM-SCA reduces the stationarity gap much faster than the two gradient-based methods and reaches the prescribed stationarity level within only a few outer iterations. This is because AM-SCA solves the transformed convex AM surrogate more accurately at each iteration. In contrast, gradient AM-SCA uses only \(M=3\) projected gradient steps for each surrogate subproblem, while direct projected gradient applies the same number of steps directly to the original non-convex product-loss objective. Therefore, their stationarity gaps decrease more slowly. This result verifies the effectiveness of the proposed AM-GM surrogate for solving the clean-adversarial product-loss problem.

Fig.~\ref{fig_clean_adv_product_loss} reports the test product loss during training. AM-SCA decreases the test product loss rapidly and reaches a stable value within fewer iterations. Gradient AM-SCA and direct projected gradient also reduce the product loss, but require more outer iterations. The clean-only and additive clean-perturbed baselines are included as reference methods. Since these baselines do not directly optimize the clean-adversarial product loss, they are less aligned with the proposed product-loss objective.

Fig.~\ref{fig_clean_adv_accuracy} shows the perturbed test accuracy. All methods achieve similar perturbed accuracy, and the differences are relatively small. This is reasonable because the proposed objective is designed to minimize the product of clean and perturbed losses rather than classification error directly. Therefore, the stationarity gap and the test product loss are the primary metrics for this application, while perturbed accuracy is reported as an auxiliary performance indicator.

In Fig.~\ref{fig_clean_adv_eps}, we compare the test product loss under different perturbation strengths. As the perturbation strength increases, the product loss generally increases, which indicates that the learning task becomes more difficult when the perturbed samples deviate further from the clean samples. The proposed AM-SCA and gradient AM-SCA methods remain competitive with the direct projected gradient across different perturbation levels. These results show that the proposed AM-GM transform can be applied to robust learning problems with clean-adversarial product structures.

\subsection{Summary}\label{sec_summary_application}
Table~\ref{tab:application_summary} summarizes representative application problems that can be handled by the proposed HM-GM-AM-QM based SA framework. The proposed AM/QM-SA methods are suitable for minimization problems with sum-of-products or product-type constraints, where upper-bound surrogates are needed. The proposed HM-SA method is suitable for maximization problems with product-type objectives or reliability-type lower-bound constraints, where lower-bound surrogates are required. When the transformed subproblem is convex or concave, the corresponding SCA subproblem can be solved accurately; otherwise, the gradient-based variants provide an efficient way to approach an \(\epsilon\)-stationary point.

{\scriptsize
\begin{longtable}{|p{0.18\textwidth}|p{0.55\textwidth}|p{0.22\textwidth}|}
\caption{Representative application optimization problems that can be handled by the proposed HM-GM-AM-QM based SA methods.}
\label{tab:application_summary}\\
\hline
\textbf{Application} & \textbf{Representative optimization problem} & \textbf{Applicable proposed method} \\
\hline
\endfirsthead

\hline
\textbf{Application} & \textbf{Representative optimization problem} & \textbf{Applicable proposed method} \\
\hline
\endhead

\hline
\endfoot

\hline
\endlastfoot

\multicolumn{3}{|c|}{\textbf{Wireless Communications and Networking}}\\
\hline

Transmit-energy minimization &
\(\displaystyle
\min_{x\in\mathcal{X}}
J(x)+\sum_{n=1}^{N}\prod_{k=1}^{K} f_n^{(k)}(x)
\),
where product terms model coupled power, bandwidth, rate, or delay factors. &
AM/QM-SA;\newline Gradient AM/QM-SA \\ 
\hline

Age-of-information minimization &
\(\displaystyle
\min_{r\in\mathcal{R}}
\sum_{i=1}^{N}
\prod_{k=1}^{K_i} f_{i}^{(k)}(r)
\),
where \(f_i^{(k)}(r)\) captures sampling, service, and transmission delay components. &
AM-SA; Gradient AM-SA \\ 
\hline

Semantic information utility maximization &
\(\displaystyle
\max_{b,p,s}
\sum_{n=1}^{N}
\log\!\left(1+
Q_n(s_n)\frac{R_n(b_n,p_n)}{D_n(s_n)}
\right)
\). &
HM-SA or Gradient HM-SA after logarithmic/product-ratio reformulation \\ 
\hline

Reliability-aware multi-hop routing &
\(\displaystyle
\max_{p\in\mathcal{P}}
\sum_{r=1}^{R} w_r
\prod_{\ell\in\mathcal{P}_r}
P_{\ell}(p_{\ell})
\). &
HM-SA; Gradient HM-SA \\ 
\hline

Cooperative edge caching &
\(\displaystyle
\min_{q}
\frac{1}{U}\sum_{u=1}^{U}\sum_{f=1}^{F}
\pi_{u,f}
\prod_{m\in\mathcal{A}_u}
(1-q_{m,f})
\). &
AM-SA; Gradient AM-SA \\ 
\hline

Multi-connectivity outage minimization &
\(\displaystyle
\min_{x}
\sum_{u=1}^{U}
\prod_{m\in\mathcal{A}_u}
\left(1-P_{u,m}(x)\right)
\),
where outage occurs when all accessible links fail. &
AM-SA; Gradient AM-SA \\ 
\hline

Multi-link success probability maximization &
\(\displaystyle
\max_{p}
\sum_{u=1}^{U}
w_u
\prod_{\ell\in\mathcal{L}_u}
\left(1-e^{-a_{\ell}p_{\ell}}\right)
\). &
HM-SA; Gradient HM-SA \\ 
\hline

Resource allocation with reliability constraints &
\(\displaystyle
\prod_{\ell\in\mathcal{L}_u}
P_{\ell}(p_{\ell})\ge \eta_u,\quad \forall u
\). &
HM-GM lower-bound SA for conservative feasible updates \\ 
\hline

Wireless powered communication &
\(\displaystyle
\max_{\tau,p}
\sum_{u=1}^{U}
w_u
\prod_{k=1}^{K}
f_{u}^{(k)}(\tau,p)
\),
where harvesting time, transmit power, and rate are coupled. &
HM-SA; Gradient HM-SA \\ 
\hline

UAV-assisted relay optimization &
\(\displaystyle
\max_{q,p}
\sum_{u=1}^{U}
w_u
P^{\rm air}_{u}(q,p)
P^{\rm ground}_{u}(q,p)
\). &
HM-SA; Gradient HM-SA \\ 
\hline

Space-air-ground integrated networks &
\(\displaystyle
\max_{x}
\sum_{u=1}^{U}
\prod_{k=1}^{K}
f_{u}^{(k)}(x)
\),
where terms model satellite, aerial, and terrestrial service factors. &
HM-SA; Gradient HM-SA \\ 
\hline

Task offloading success maximization &
\(\displaystyle
\max_{x}
\sum_{u=1}^{U}
w_u
P^{\rm assoc}_u(x)
P^{\rm offload}_u(x)
P^{\rm compute}_u(x)
\). &
HM-SA; Gradient HM-SA \\ 
\hline

Edge computing delay-risk minimization &
\(\displaystyle
\min_{x}
\sum_{u=1}^{U}
D_u(x)E_u(x)C_u(x)
\),
where \(D_u,E_u,C_u\) are delay, energy, and computation-cost factors. &
AM/QM-SA;\newline Gradient AM/QM-SA \\ 
\hline

URLLC reliability-latency optimization &
\(\displaystyle
\min_{x}
\sum_{u=1}^{U}
\left(1-P_u^{\rm rel}(x)\right)L_u(x)
\). &
AM/QM-SA;\newline Gradient AM/QM-SA \\ 
\hline

Secure communication utility maximization &
\(\displaystyle
\max_{p}
\sum_{u=1}^{U}
w_u
R_u^{\rm sec}(p)P_u^{\rm rel}(p)
\). &
HM-SA; Gradient HM-SA \\ 
\hline

Physical-layer secrecy outage minimization &
\(\displaystyle
\min_{p}
\sum_{u=1}^{U}
\prod_{e\in\mathcal{E}_u}
P_{u,e}^{\rm leak}(p)
\). &
AM-SA; Gradient AM-SA \\ 
\hline

\multicolumn{3}{|c|}{\textbf{Information Theory and Information-Centric Systems}}\\
\hline

Sum-of-log-ratios maximization &
\(\displaystyle
\max_{x\in\mathcal{X}}
\sum_{i=1}^{N}
\log\!\left(1+\frac{A_i(x)}{B_i(x)}\right)
\). &
HM-SA after logarithmic ratio reformulation \\ 
\hline

Sum-of-log-products maximization &
\(\displaystyle
\max_{x\in\mathcal{X}}
\sum_{i=1}^{N}
\log\!\left(1+
\prod_{k=1}^{K}f_i^{(k)}(x)
\right)
\). &
HM-SA; Gradient HM-SA \\ 
\hline

Information freshness-reliability maximization &
\(\displaystyle
\max_{x}
\sum_{u=1}^{U}
w_u
F_u(x)R_u(x)
\),
where \(F_u(x)\) is freshness utility and \(R_u(x)\) is delivery reliability. &
HM-SA; Gradient HM-SA \\ 
\hline

Distributed sensing information availability &
\(\displaystyle
\max_{x}
\sum_{s=1}^{S}
w_s
\prod_{m\in\mathcal{M}_s}
P_{m,s}(x)
\). &
HM-SA; Gradient HM-SA \\ 
\hline

Source coding with multiplicative distortion risk &
\(\displaystyle
\min_{x}
\sum_{i=1}^{N}
D_i(x)R_i(x)
\),
where distortion and rate-related risk are coupled multiplicatively. &
AM/QM-SA; \newline Gradient AM/QM-SA \\ 
\hline

Distributed estimation error minimization &
\(\displaystyle
\min_{x}
\sum_{i=1}^{N}
\prod_{k=1}^{K}
{\rm MSE}_{i,k}(x)
\). &
AM/QM-SA;\newline Gradient AM/QM-SA \\ 
\hline

Sensor fusion miss-detection minimization &
\(\displaystyle
\min_{x}
\sum_{i=1}^{N}
\prod_{k=1}^{K}
P_{i,k}^{\rm miss}(x)
\). &
AM-SA; Gradient AM-SA \\ 
\hline

Information bottleneck product regularization &
\(\displaystyle
\min_{\theta}
\mathcal{L}(\theta)
+
\lambda
I(X;Z_{\theta})I(Z_{\theta};Y)^{-1}
\). &
AM/QM-SA or Gradient AM/QM-SA after product-ratio reformulation \\ 
\hline

\multicolumn{3}{|c|}{\textbf{Machine Learning and Artificial Intelligence}}\\
\hline

Multi-modal product-loss learning &
\(\displaystyle
\min_{\{w_k\}}
\frac{1}{N}
\sum_{i=1}^{N}
\prod_{k=1}^{K}
\ell_i^{(k)}(w_k)
+
\frac{\lambda}{2}\sum_{k=1}^{K}\|w_k\|^2
\). &
AM-SA; Gradient AM-SA \\ 
\hline

Clean-adversarial product-loss learning &
\(\displaystyle
\min_{w}
\frac{1}{N}
\sum_{i=1}^{N}
\ell_i^{\rm c}(w)\ell_i^{\rm p}(w)
+
\frac{\lambda}{2}\|w\|^2
\). &
AM-SA; Gradient AM-SA \\ 
\hline

Multi-task product-loss learning &
\(\displaystyle
\min_{\theta}
\sum_{i=1}^{N}
\prod_{t=1}^{T}
\ell_{i,t}(\theta)
+
\lambda R(\theta)
\). &
AM/QM-SA;\newline Gradient AM/QM-SA \\ 
\hline

Domain generalization with product domain risk &
\(\displaystyle
\min_{\theta}
\sum_{i=1}^{N}
\prod_{d=1}^{D}
\ell_{i,d}(\theta)
+
\lambda R(\theta)
\). &
AM-SA for convex models; Gradient AM-SA for neural models \\ 
\hline

Federated learning with multiplicative client risk &
\(\displaystyle
\min_{\theta}
\sum_{m=1}^{M}
\prod_{k=1}^{K_m}
\ell_{m,k}(\theta)
+
\lambda R(\theta)
\). &
AM/QM-SA;\newline Gradient AM/QM-SA \\ 
\hline

Robust learning with group-risk product &
\(\displaystyle
\min_{\theta}
\prod_{g=1}^{G}
\left(
\mathcal{L}_g(\theta)+\epsilon
\right)
\). &
AM/QM-SA;\newline Gradient AM/QM-SA \\ 
\hline

Fairness-aware learning &
\(\displaystyle
\min_{\theta}
\mathcal{L}(\theta)
+
\lambda
\prod_{g=1}^{G}
\left(
{\rm Gap}_g(\theta)+\epsilon
\right)
\). &
AM/QM-SA;\newline Gradient AM/QM-SA \\ 
\hline

Knowledge distillation with coupled losses &
\(\displaystyle
\min_{\theta}
\sum_{i=1}^{N}
\ell_i^{\rm CE}(\theta)
\ell_i^{\rm KD}(\theta)
+
\lambda R(\theta)
\). &
AM-SA; Gradient AM-SA \\ 
\hline

Contrastive learning with product penalties &
\(\displaystyle
\min_{\theta}
\sum_{i=1}^{N}
\ell_i^{\rm align}(\theta)
\ell_i^{\rm uniform}(\theta)
+
\lambda R(\theta)
\). &
AM/QM-SA;\newline Gradient AM/QM-SA \\ 
\hline

Multi-objective neural training &
\(\displaystyle
\min_{\theta}
\prod_{j=1}^{J}
\left(
\mathcal{L}_j(\theta)+\epsilon
\right)
\). &
Gradient AM/QM-SA \\ 
\hline

Adversarial training with clean-robust coupling &
\(\displaystyle
\min_{\theta}
\sum_{i=1}^{N}
\ell_i^{\rm clean}(\theta)
\ell_i^{\rm adv}(\theta)
\). &
AM-SA for convex models; Gradient AM-SA for deep models \\ 
\hline

Ensemble learning with joint error minimization &
\(\displaystyle
\min_{\theta_1,\ldots,\theta_K}
\sum_{i=1}^{N}
\prod_{k=1}^{K}
\ell_i(\theta_k)
\). &
AM-SA; Gradient AM-SA \\ 
\hline

Co-training with view-consistency product loss &
\(\displaystyle
\min_{\theta}
\sum_{i=1}^{N}
\ell_i^{\rm sup}(\theta)
\ell_i^{\rm con}(\theta)
+
\lambda R(\theta)
\). &
AM-SA; Gradient AM-SA \\ 
\hline

Uncertainty-aware learning &
\(\displaystyle
\min_{\theta}
\sum_{i=1}^{N}
\ell_i(\theta)
U_i(\theta)
+
\lambda R(\theta)
\). &
AM/QM-SA; \newline Gradient AM/QM-SA \\ 
\hline

\multicolumn{3}{|c|}{\textbf{AI Security, Privacy, and Robustness}}\\
\hline

Adversarial example generation &
\(\displaystyle
\max_{\delta\in\Delta}
\ell_{\rm cls}(x+\delta)
S_{\rm imper}(x+\delta)
\). &
HM-SA; Gradient HM-SA \\ 
\hline

Universal adversarial perturbation &
\(\displaystyle
\max_{\delta\in\Delta}
\sum_{i=1}^{N}
\prod_{k=1}^{K}
\ell_{i,k}(x_i+\delta)
\). &
HM-SA; Gradient HM-SA \\ 
\hline

Adversarial patch optimization &
\(\displaystyle
\max_{p\in\mathcal{P}}
\sum_{i=1}^{N}
\ell_i^{\rm attack}(p)
T_i^{\rm physical}(p)
\). &
HM-SA; Gradient HM-SA \\ 
\hline

Unlearnable example generation &
\(\displaystyle
\max_{\delta\in\Delta}
\sum_{i=1}^{N}
\ell_i^{\rm train}(\delta)
S_i^{\rm fidelity}(\delta)
\). &
HM-SA; Gradient HM-SA \\ 
\hline

Privacy-utility tradeoff optimization &
\(\displaystyle
\min_{\theta}
\mathcal{L}_{\rm task}(\theta)
\mathcal{R}_{\rm privacy}(\theta)
+
\lambda R(\theta)
\). &
AM/QM-SA;\newline Gradient AM/QM-SA \\ 
\hline

Model inversion risk minimization &
\(\displaystyle
\min_{\theta}
\mathcal{L}_{\rm task}(\theta)
\prod_{a=1}^{A}
\left(
\mathcal{R}_{a}^{\rm inv}(\theta)+\epsilon
\right)
\). &
AM/QM-SA;\newline Gradient AM/QM-SA \\ 
\hline

Robust unlearning evaluation &
\(\displaystyle
\max_{\delta\in\Delta}
\ell_{\rm recover}(\delta)
\ell_{\rm bypass}(\delta)
\). &
HM-SA; Gradient HM-SA \\ 
\hline

Prompt-attack optimization &
\(\displaystyle
\max_{x\in\mathcal{X}}
S_{\rm unsafe}(x)
S_{\rm semantic}(x)
S_{\rm bypass}(x)
\). &
HM-SA; Gradient HM-SA \\ 
\hline

Detector-aware attack optimization &
\(\displaystyle
\max_{\delta}
\ell_{\rm target}(\delta)
\ell_{\rm evade}(\delta)
\). &
HM-SA; Gradient HM-SA \\ 
\hline

\multicolumn{3}{|c|}{\textbf{Distributed Learning and Federated Systems}}\\
\hline

Federated resource-aware training &
\(\displaystyle
\min_{x}
\sum_{m=1}^{M}
\mathcal{L}_m(x)
E_m(x)
T_m(x)
\). &
AM/QM-SA;\newline Gradient AM/QM-SA \\ 
\hline

Client selection with reliability utility &
\(\displaystyle
\max_{s}
\sum_{m=1}^{M}
U_m(s_m)
P_m^{\rm avail}(s_m)
\). &
HM-SA; Gradient HM-SA \\ 
\hline

Federated edge learning efficiency maximization &
\(\displaystyle
\max_{x}
\sum_{m=1}^{M}
\frac{A_m(x)}
{E_m(x)T_m(x)}
\). &
HM-SA after product-ratio reformulation \\ 
\hline

Hierarchical offloading for PEFT training &
\(\displaystyle
\max_{x}
\sum_{m=1}^{M}
P_m^{\rm train}(x)
P_m^{\rm offload}(x)
P_m^{\rm resource}(x)
\). &
HM-SA; Gradient HM-SA \\ 
\hline

Communication-efficient FL scheduling &
\(\displaystyle
\min_{x}
\sum_{m=1}^{M}
D_m(x)E_m(x)Q_m(x)
\). &
AM/QM-SA;\newline Gradient AM/QM-SA \\ 
\hline

\multicolumn{3}{|c|}{\textbf{Optimization with Non-Convex Constraints}}\\
\hline

Product upper-bound constraint &
\(\displaystyle
\prod_{k=1}^{K} f_k(x)\le c
\). &
AM/QM-SA constraint approximation \\ 
\hline

Product lower-bound constraint &
\(\displaystyle
\prod_{k=1}^{K} f_k(x)\ge c
\). &
HM-SA conservative lower-bound approximation \\ 
\hline

Ratio-product constraint &
\(\displaystyle
\prod_{k=1}^{K}
\frac{A_k(x)}{B_k(x)}
\le c
\). &
AM/QM-SA after ratio-to-product reformulation \\ 
\hline

Reliability constraint &
\(\displaystyle
\prod_{k=1}^{K}
P_k(x)\ge \eta
\). &
HM-SA \\ 
\hline

Risk constraint &
\(\displaystyle
\prod_{k=1}^{K}
R_k(x)\le \rho
\). &
AM/QM-SA \\ 
\hline

Chance-surrogate constraint &
\(\displaystyle
\prod_{k=1}^{K}
\left(1-P_k^{\rm fail}(x)\right)\ge 1-\epsilon
\). &
HM-SA \\ 
\hline

\multicolumn{3}{|c|}{\textbf{Energy Systems, Control, and Cyber-Physical Systems}}\\
\hline

Energy-delay product minimization &
\(\displaystyle
\min_{x}
\sum_{i=1}^{N}
E_i(x)D_i(x)
\). &
AM/QM-SA;\newline Gradient AM/QM-SA \\ 
\hline

Control reliability maximization &
\(\displaystyle
\max_{u}
\sum_{i=1}^{N}
\prod_{k=1}^{K}
P_{i,k}^{\rm stable}(u)
\). &
HM-SA; Gradient HM-SA \\ 
\hline

Smart-grid risk minimization &
\(\displaystyle
\min_{x}
\sum_{i=1}^{N}
R_i^{\rm load}(x)R_i^{\rm price}(x)R_i^{\rm outage}(x)
\). &
AM/QM-SA;\newline Gradient AM/QM-SA \\ 
\hline

Demand-response utility maximization &
\(\displaystyle
\max_{x}
\sum_{i=1}^{N}
U_i(x_i)P_i^{\rm participation}(x_i)
\). &
HM-SA; Gradient HM-SA \\ 
\hline

\multicolumn{3}{|c|}{\textbf{Finance, Operations, and General Decision Systems}}\\
\hline

Portfolio product-risk minimization &
\(\displaystyle
\min_{x}
\sum_{i=1}^{N}
R_i^{\rm market}(x)R_i^{\rm liquidity}(x)
\). &
AM/QM-SA;\newline Gradient AM/QM-SA \\ 
\hline

Reliability-aware supply chain design &
\(\displaystyle
\max_{x}
\sum_{r=1}^{R}
w_r
\prod_{k\in\mathcal{S}_r}
P_k^{\rm success}(x)
\). &
HM-SA; Gradient HM-SA \\ 
\hline

Multi-stage failure probability minimization &
\(\displaystyle
\min_{x}
\sum_{r=1}^{R}
\prod_{k\in\mathcal{S}_r}
P_k^{\rm fail}(x)
\). &
AM-SA; Gradient AM-SA \\ 
\hline

Operations cost-efficiency maximization &
\(\displaystyle
\max_{x}
\sum_{i=1}^{N}
\frac{U_i(x)}
{C_i(x)D_i(x)}
\). &
HM-SA after product-ratio reformulation \\ 
\hline

\end{longtable}
}

\section{Limitation of Our Proposed Transforms}\label{sec_limitation}
In this section, we present the following limitation of our proposed transforms. In Equation (\ref{y_n_closedform}), if $f^{(k)}_n(\bm{x}) = \alpha_n^{(k)} f^{(1)}_n(\bm{x})$, where $\alpha_n^{(k)}$ is a strictly positive constant scaling parameter, we can compute the $\bm{y}_n$ as
\begin{align}
    & y^{(1)}_n = \sqrt{\alpha_n^{(2)}},\nonumber \\
    & y^{(2)}_n = (y^{(1)}_n)^{\frac{1}{3}}\cdot \left(\frac{\alpha_n^{(3)}}{\alpha_n^{(2)}}\right)^{\frac{1}{3}},\nonumber \\
    & y^{(k-1)}_n=(y_n^{(1)})^{\prod_{i=1}^{k-2} \frac{i}{i+2}} \cdot \left(\prod_{i=2}^{k-2} \left(\frac{\alpha_n^{(i+1)}(\bm{x})}{\alpha_n^{(i)}(\bm{x})}\right)^{\frac{1}{i+1}\cdot \prod_{j=i+2}^k \frac{j-2}{j}}\right) \cdot \left(\frac{\alpha_n^{(k)}(\bm{x})}{\alpha_n^{(k-1)}(\bm{x})}\right)^{\frac{1}{k}}, \forall k\in \{4,\cdots,K\}.
\end{align}
Therefore, it is evident that the introduced variable $\bm{y}_n$ is a constant, which implies that we can't include any current-iterate information into the $\bm{y}_n$ with any given feasible point $\bm{x}^0$. In other words, the proposed bounds can not be used in the SA/SCA technique under this special case.

\section{Conclusion} \label{secConclusion}

This paper developed an inequality-based transform framework for optimization problems involving multiplicative and fractional terms with an arbitrary number of coupled functions. By exploiting the HM-GM-AM-QM inequality chain, we constructed HM lower-bound surrogates and AM/QM upper-bound surrogates for general product/ratio-type terms. The corresponding auxiliary variables were derived in closed form from the equality conditions, which ensures that the proposed surrogates are tight at the current iterate.

Based on these transforms, we further developed successive approximation (SA) methods for sum-of-products/ratios minimization and maximization problems. For minimization, the AM and QM bounds provide upper-bound surrogates, while for maximization, the HM bound provides a lower-bound surrogate. We clarified that the resulting method reduces to a standard SCA method when the transformed surrogate is convex for minimization or concave for maximization. When such convexity or concavity is not guaranteed, the proposed gradient-based SA variants provide a practical way to update the decision variables through gradient descent or ascent. The convergence and complexity analysis show that the proposed methods can converge to stationary points under standard assumptions, and that the gradient-based variants achieve a sublinear convergence rate to an $\epsilon$-stationary point.

We also showed that the proposed transforms are applicable not only to objective functions but also to non-convex constraints and logarithmic product-ratio structures. Through several case studies, including transmit-energy minimization, age-of-information minimization, semantic utility maximization, reliability-aware information delivery, cooperative edge caching, and product-loss learning, we demonstrated the flexibility of the proposed framework in communication networks, information systems, and machine learning applications. The numerical results verify that the proposed methods can effectively handle a broad class of non-convex optimization problems with multiplicative structures.

Finally, we discussed the limitations of the proposed transforms. In particular, the transformed surrogates are not always convex or concave, and the usefulness of a specific bound depends on the structure of the component functions. Future work may investigate sharper product bounds, matrix-valued extensions of the HM-GM-AM-QM transform, and broader applications in large-scale networked and learning systems.
\bibliographystyle{IEEEtran}
\bibliography{ref}

\clearpage
\begin{appendices}

\section{Proof of Proposition \ref{prop_property}}\label{append_proof_prop_property}

\begin{proof}\label{proof_prop_property}
The first relation follows directly from the AM-GM upper-bound construction. Specifically, for any fixed feasible point $\bm{x}^\prime$, the auxiliary variables $\bm{y}_n(\bm{x}^\prime)$ are chosen according to Equation (\ref{y_n_closedform}). Hence, for any feasible $\bm{x}$, we have
\begin{equation}
    \prod_{k=1}^{K} f_n^{(k)}(\bm{x})
    \leq
    F_n^{\rm AM}(\bm{x};\bm{y}_n(\bm{x}^\prime)),
    \quad n=1,\ldots,N.
\end{equation}
By summing these inequalities over $n$ and adding $J(\bm{x})$, we obtain
\begin{equation}
    \Phi(\bm{x})\leq Q(\bm{x};\bm{x}^\prime).
\end{equation}
Moreover, since $\bm{y}_n(\bm{x}^\prime)$ is updated by Equation (\ref{y_n_closedform}) at $\bm{x}^\prime$, the AM bound is tight at $\bm{x}^\prime$, i.e.,
\begin{equation}
    \prod_{k=1}^{K} f_n^{(k)}(\bm{x}^\prime)
    =
    F_n^{\rm AM}(\bm{x}^\prime;\bm{y}_n(\bm{x}^\prime)).
\end{equation}
Therefore,
\begin{equation}
    \Phi(\bm{x}^\prime)=Q(\bm{x}^\prime;\bm{x}^\prime).
\end{equation}

It remains to prove the first-order consistency. Define
\begin{equation}
    D(\bm{x};\bm{x}^\prime)
    =
    Q(\bm{x};\bm{x}^\prime)-\Phi(\bm{x}).
\end{equation}
From the AM upper-bound property, we have
\begin{equation}
    D(\bm{x};\bm{x}^\prime)\geq 0,\quad \forall \bm{x},
\end{equation}
and from the equality condition (\ref{eq:equality_condition_no_phi}),
\begin{equation}
    D(\bm{x}^\prime;\bm{x}^\prime)=0.
\end{equation}
Thus, $\bm{x}^\prime$ is a minimizer of $D(\bm{x};\bm{x}^\prime)$ with respect to $\bm{x}$. Since all involved functions are differentiable, we have
\begin{equation}
    \nabla_{\bm{x}}D(\bm{x}^\prime;\bm{x}^\prime)=\bm{0}.
\end{equation}
This implies
\begin{equation}
    \nabla_{\bm{x}}Q(\bm{x}^\prime;\bm{x}^\prime)
    =
    \nabla_{\bm{x}}\Phi(\bm{x}^\prime).
\end{equation}
The proof is completed.
\end{proof}

\section{Proof of Theorem \ref{theorem_sca_convergence}}\label{appdix_proof_theorem_sca_convergence}
\begin{proof}
At iteration $i$, the auxiliary variables are updated at the current feasible point $\bm{x}^{i}$, i.e.,
    $\bm{y}^{i}=\bm{y}(\bm{x}^{i}).$
Then the AM-based surrogate anchored at $\bm{x}^{i}$ is
\begin{equation}
    Q(\bm{x};\bm{x}^{i})
    =
    J(\bm{x})
    +
    \sum_{n=1}^{N}
    F_n^{\rm AM}(\bm{x};\bm{y}_n(\bm{x}^{i})).
\end{equation}
Algorithm~\ref{algo:SCA} updates $\bm{x}^{i+1}$ by solving the transformed surrogate problem exactly, namely,
\begin{equation}
    \bm{x}^{i+1}
    \in
    \arg\min_{\bm{x}\in\mathcal{X}}
    Q(\bm{x};\bm{x}^{i}).
\end{equation}
Therefore,
\begin{equation}
    Q(\bm{x}^{i+1};\bm{x}^{i})
    \leq
    Q(\bm{x}^{i};\bm{x}^{i}).
\end{equation}
By Proposition~\ref{prop_property}, we have
\begin{equation}
    \Phi(\bm{x}^{i})
    =
    Q(\bm{x}^{i};\bm{x}^{i}),
\end{equation}
and
\begin{equation}
    \Phi(\bm{x}^{i+1})
    \leq
    Q(\bm{x}^{i+1};\bm{x}^{i}).
\end{equation}
Combining the above relations gives
\begin{equation}
    \Phi(\bm{x}^{i+1})
    \leq
    Q(\bm{x}^{i+1};\bm{x}^{i})
    \leq
    Q(\bm{x}^{i};\bm{x}^{i})
    =
    \Phi(\bm{x}^{i}).
\end{equation}
Hence, $\{\Phi(\bm{x}^{i})\}$ is non-increasing. Since $\mathcal{X}$ is compact and $\Phi(\bm{x})$ is continuous, $\Phi(\bm{x})$ is bounded below over $\mathcal{X}$. Therefore, $\{\Phi(\bm{x}^{i})\}$ converges.

It remains to show the stationarity of any limit point. The update of Algorithm~\ref{algo:SCA} can be interpreted as an exact two-block optimization procedure for the transformed problem~\eqref{prob4}, where the two blocks are $\bm{x}$ and $\bm{y}$. For fixed $\bm{x}$, the update $\bm{y}=\bm{y}(\bm{x})$ is obtained from the equality condition of the AM bound. For fixed $\bm{y}^{i}$, the update of $\bm{x}$ solves the transformed convex subproblem exactly. Therefore, by Lemma~\ref{lemma_mbi}, any cluster point of the generated sequence is blockwise stationary for the transformed problem~\eqref{prob4}.

Let $\bar{\bm{x}}$ be a limit point of $\{\bm{x}^{i}\}$. Since the auxiliary variables are updated according to Equation~\eqref{y_n_closedform}, the corresponding limit point of the auxiliary variables satisfies
\begin{equation}
    \bar{\bm{y}}
    =
    \bm{y}(\bar{\bm{x}}).
\end{equation}
The blockwise stationarity with respect to the $\bm{x}$-block implies
\begin{equation}
    \bar{\bm{x}}
    \in
    \arg\min_{\bm{x}\in\mathcal{X}}
    \left(
    J(\bm{x})
    +
    \sum_{n=1}^{N}
    F_n^{\rm AM}(\bm{x};\bar{\bm{y}}_n)
    \right).
\end{equation}
Since $\bar{\bm{y}}=\bm{y}(\bar{\bm{x}})$, this is equivalent to
\begin{equation}
    \bar{\bm{x}}
    \in
    \arg\min_{\bm{x}\in\mathcal{X}}
    Q(\bm{x};\bar{\bm{x}}).
\end{equation}
Because $Q(\bm{x};\bar{\bm{x}})$ is differentiable and $\mathcal{X}$ is convex, the first-order optimality condition gives
\begin{equation}
    \nabla_{\bm{x}}Q(\bar{\bm{x}};\bar{\bm{x}})^{\top}
    \left(
    \bm{x}-\bar{\bm{x}}
    \right)
    \geq 0,
    \quad
    \forall \bm{x}\in\mathcal{X}.
\end{equation}
By the first-order consistency property in Proposition~\ref{prop_property},
\begin{equation}
    \nabla_{\bm{x}}Q(\bar{\bm{x}};\bar{\bm{x}})
    =
    \nabla \Phi(\bar{\bm{x}}).
\end{equation}
Therefore,
\begin{equation}
    \nabla \Phi(\bar{\bm{x}})^{\top}
    \left(
    \bm{x}-\bar{\bm{x}}
    \right)
    \geq 0,
    \quad
    \forall \bm{x}\in\mathcal{X}.
\end{equation}
This is the first-order stationarity condition of Problem~\eqref{prob3}. Hence, every limit point of $\{\bm{x}^{i}\}$ is a stationary point of Problem~\eqref{prob3}. The proof is completed.
\end{proof}

\section{Proof of Theorem \ref{theorem_sca_gd}}\label{appdix_proof_theorem_sca_gd}

\begin{proof}
For notational simplicity, define the surrogate function at outer iteration $i$ as
\begin{equation}
    Q_i(\bm{x}) = Q(\bm{x};\bm{x}^i).
\end{equation}
The inner projected gradient steps are initialized by
\begin{equation}
    \bm{x}_0^i = \bm{x}^i,
\end{equation}
and updated as
\begin{equation}
    \bm{x}_{j+1}^i
    =
    \Pi_{\mathcal{X}}
    \left(
    \bm{x}_j^i
    -
    \alpha_{i,j}
    \nabla Q_i(\bm{x}_j^i)
    \right),
    \quad j=0,\ldots,M_i-1,
\end{equation}
where $\Pi_{\mathcal{X}}(\cdot)$ denotes the projection onto the feasible set $\mathcal{X}$. The next outer iterate is
\begin{equation}
    \bm{x}^{i+1}=\bm{x}_{M_i}^i.
\end{equation}

Since $Q_i(\bm{x})$ is $L$-smooth and $\alpha_{i,j}\le 1/L$, the standard descent property of gradient descent gives
\begin{equation}
    Q_i(\bm{x}_{j+1}^i)
    \leq
    Q_i(\bm{x}_{j}^i)
    -
    \frac{1}{2\alpha_{i,j}}
    \left\|
    \bm{x}_{j+1}^i-\bm{x}_{j}^i
    \right\|^2.
\end{equation}
Equivalently, by defining the projected gradient mapping of $Q_i$ as
\begin{equation}
    \mathcal{G}_{\alpha}^{Q_i}(\bm{x})
    =
    \frac{1}{\alpha}
    \left[
    \bm{x}
    -
    \Pi_{\mathcal{X}}
    \left(
    \bm{x}
    -
    \alpha\nabla Q_i(\bm{x})
    \right)
    \right],
\end{equation}
we have
\begin{equation}
    Q_i(\bm{x}_{j+1}^i)
    \leq
    Q_i(\bm{x}_{j}^i)
    -
    \frac{\alpha_{i,j}}{2}
    \left\|
    \mathcal{G}_{\alpha_{i,j}}^{Q_i}(\bm{x}_{j}^i)
    \right\|^2.
\end{equation}

In particular, for the first inner step, since $\bm{x}_0^i=\bm{x}^i$, we obtain
\begin{equation}
    Q_i(\bm{x}_{1}^i)
    \leq
    Q_i(\bm{x}^{i})
    -
    \frac{\alpha_{i,0}}{2}
    \left\|
    \mathcal{G}_{\alpha_{i,0}}^{Q_i}(\bm{x}^{i})
    \right\|^2.
\end{equation}
All subsequent inner steps also decrease the surrogate value. Hence,
\begin{equation}
    Q_i(\bm{x}^{i+1})
    =
    Q_i(\bm{x}_{M_i}^i)
    \leq
    Q_i(\bm{x}_{1}^i)
    \leq
    Q_i(\bm{x}^{i})
    -
    \frac{\alpha_{i,0}}{2}
    \left\|
    \mathcal{G}_{\alpha_{i,0}}^{Q_i}(\bm{x}^{i})
    \right\|^2.
\end{equation}

From Proposition~\ref{prop_property}, the surrogate is an upper bound and is tight at the current iterate, i.e.,
\begin{equation}
    \Phi(\bm{x})\leq Q_i(\bm{x}),\quad \forall \bm{x}\in\mathcal{X},
\end{equation}
and
\begin{equation}
    \Phi(\bm{x}^{i})=Q_i(\bm{x}^{i}).
\end{equation}
Therefore,
\begin{align}
    \Phi(\bm{x}^{i+1})
    &\leq Q_i(\bm{x}^{i+1}) \nonumber\\
    &\leq Q_i(\bm{x}^{i})
    -
    \frac{\alpha_{i,0}}{2}
    \left\|
    \mathcal{G}_{\alpha_{i,0}}^{Q_i}(\bm{x}^{i})
    \right\|^2 \nonumber\\
    &=
    \Phi(\bm{x}^{i})
    -
    \frac{\alpha_{i,0}}{2}
    \left\|
    \mathcal{G}_{\alpha_{i,0}}^{Q_i}(\bm{x}^{i})
    \right\|^2.
\end{align}

Moreover, by the first-order consistency in Proposition~\ref{prop_property}, we have
\begin{equation}
    \nabla Q_i(\bm{x}^{i})
    =
    \nabla \Phi(\bm{x}^{i}).
\end{equation}
Thus, the projected gradient mappings of $Q_i$ and $\Phi$ are identical at $\bm{x}^i$, i.e.,
\begin{equation}
    \mathcal{G}_{\alpha_{i,0}}^{Q_i}(\bm{x}^{i})
    =
    \mathcal{G}_{\alpha_{i,0}}^{\Phi}(\bm{x}^{i}).
\end{equation}
Therefore,
\begin{equation}
    \Phi(\bm{x}^{i+1})
    \leq
    \Phi(\bm{x}^{i})
    -
    \frac{\alpha_{i,0}}{2}
    \left\|
    \mathcal{G}_{\alpha_{i,0}}^{\Phi}(\bm{x}^{i})
    \right\|^2.
\end{equation}

Summing the above inequality over $i=0,1,\ldots,I-1$ gives
\begin{equation}
    \sum_{i=0}^{I-1}
    \frac{\alpha_{i,0}}{2}
    \left\|
    \mathcal{G}_{\alpha_{i,0}}^{\Phi}(\bm{x}^{i})
    \right\|^2
    \leq
    \Phi(\bm{x}^{0})-\Phi(\bm{x}^{I})
    \leq
    \Phi(\bm{x}^{0})-\Phi_{\inf},
\end{equation}
where $\Phi_{\inf}$ is a lower bound of $\Phi$ over $\mathcal{X}$. Hence,
\begin{equation}
    \min_{0\leq i\leq I-1}
    \left\|
    \mathcal{G}_{\alpha_{i,0}}^{\Phi}(\bm{x}^{i})
    \right\|^2
    \leq
    \frac{
    2\left(\Phi(\bm{x}^{0})-\Phi_{\inf}\right)
    }{
    \sum_{i=0}^{I-1}\alpha_{i,0}
    }.
\end{equation}
In particular, if $\alpha_{i,0}=\alpha$ for all $i$, then
\begin{equation}
    \min_{0\leq i\leq I-1}
    \left\|
    \mathcal{G}_{\alpha}^{\Phi}(\bm{x}^{i})
    \right\|^2
    \leq
    \frac{
    2\left(\Phi(\bm{x}^{0})-\Phi_{\inf}\right)
    }{
    \alpha I
    }.
\end{equation}
Therefore, an $\epsilon$-stationary point, defined by
\begin{equation}
    \left\|
    \mathcal{G}_{\alpha}^{\Phi}(\bm{x}^{i})
    \right\|
    \leq
    \epsilon,
\end{equation}
can be obtained after at most
\begin{equation}
    I
    \geq
    \frac{
    2\left(\Phi(\bm{x}^{0})-\Phi_{\inf}\right)
    }{
    \alpha\epsilon^2
    }
\end{equation}
outer iterations.

The proof is completed.
\end{proof}

\end{appendices}
\end{document}